\def\DONOTINSERTCOMMENTS{}
\not \isundefined{\disputationsdatum} 
\not \isundefined{\disputationslokal}}   
  \or \boolean{detectedSTOC} \or \boolean{detectedFOCS}
  \or \boolean{detectedSIAM} \or \boolean{detectedIEEE}
  \or \boolean{detectedPoster}}
\or \boolean{detectedSIAM}
  \or \boolean{detectedSIAM}} 
\or \boolean{detectedSIAM}         \or 
\or \boolean{detectedNOW}          \or 
\or \boolean{detectedACM}          \or
\or \boolean{detectedLIPIcs}       \or
\or \boolean{detectedSigplanconf}}
\or \boolean{detectedFOCS}         \or 
\or \boolean{detectedPoster}       \or
\or \boolean{detectedLMCS}         \or
\or \boolean{detectedNOW}          \or
\or \boolean{detectedThesis}       \or
\or \boolean{detectedACM}          \or 
\or \boolean{detectedAAAI}         \or
\or \boolean{detectedIJCAI}        \or 
\or \boolean{detectedSigplanconf}}
\or \boolean{detectedSIAM}}
\DeclareMathAlphabet{\mathsfsl}{OT1}{cmss}{m}{sl}
\newcommand{\formatfunctiontoset}[1]{\mathit{#1}}
\newcommand{\formuladots}{\cdots}
\newcommand{\BIGOH}[1]{\mathrm{O} \left( #1 \right)}
\newcommand{\Bigoh}[1]{\mathrm{O} \bigl( #1 \bigr)}
\newcommand{\bigoh}[1]{\mathrm{O} ( #1 )}
\newcommand{\littleoh}[1]{\mathrm{o} ( #1 )}
\newcommand{\Bigtheta}[1]{\Theta \bigl( #1 \bigr)}
\newcommand{\bigtheta}[1]{\Theta ( #1 )}
\newcommand{\Bigomega}[1]{\Omega \bigl( #1 \bigr)}
\newcommand{\bigomega}[1]{\Omega ( #1 )}
\newcommand{\littleomega}[1]{\omega ( #1 )}
  \newcommand{\Nplus}     {\mathbb{N}^{+}}
\newcommand{\ceiling}[1]{\lceil #1 \rceil}
\newcommand{\floor}[1]{\lfloor #1 \rfloor}
\newcommand{\MAXOFEXPR}[2][]{\max_{#1} \left\{ #2 \right\}}
\newcommand{\MINOFEXPR}[2][]{\min_{#1} \left\{ #2 \right\}}
\newcommand{\Maxofexpr}[2][]{\max_{#1} \bigl\{ #2 \bigr\}}
\newcommand{\Minofexpr}[2][]{\min_{#1} \bigl\{ #2 \bigr\}}
\newcommand{\maxofexpr}[2][]{\max_{#1} \{ #2 \}}
\newcommand{\MAXOFSET}[3][:]{\ifthenelse{\equal{#1}{;}}{\MAXOFEXPR{ #2 \,;\, #3 }}
     {\ifthenelse{\equal{#1}{:}}{\MAXOFEXPR{ #2 \,:\, #3 }}
     {\max \twincommandJN{\left\{}{#2}{\left#1}{\right}{\,#3}{\right\}}}}}
\newcommand{\MINOFSET}[3][:]{\ifthenelse{\equal{#1}{;}}{\MINOFEXPR{ #2 \,;\, #3 }}
     {\ifthenelse{\equal{#1}{:}}{\MINOFEXPR{ #2 \,:\, #3 }}
     {\min \twincommandJN{\left\{}{#2}{\left#1}{\right}{\,#3}{\right\}}}}}
\newcommand{\Maxofset}[3][:]{\ifthenelse{\equal{#1}{;}}{\Maxofexpr{ #2 \,;\, #3 }}
     {\ifthenelse{\equal{#1}{:}}{\Maxofexpr{ #2 \,:\, #3 }}
     {\max \twincommandJN{\bigl\{}{#2}{\bigl#1}{\bigr}{\,#3}{\bigr\}}}}}
\newcommand{\Minofset}[3][:]{\ifthenelse{\equal{#1}{;}}{\Minofexpr{ #2 \,;\, #3 }}
     {\ifthenelse{\equal{#1}{:}}{\Minofexpr{ #2 \,:\, #3 }}
     {\min \twincommandJN{\bigl\{}{#2}{\bigl#1}{\bigr}{\,#3}{\bigr\}}}}}
\newcommand{\F}{\mathbb{F}}
\newcommand{\twincommandJN}[6]{#1#2#3\vphantom{#2#5}\mspace{-2.05mu}#4.#5#6}
\newcommand{\edges}[1]{E( #1 )}
\newcommand{\vertices}[1]{V( #1 )}
\newcommand{\set}[1]{\{ #1 \}}
\newcommand{\Set}[1]{\bigl\{ #1 \bigr\}}
\newcommand{\setdescr}[3][\mid]{\set{ #2 #1 #3 }}
\newcommand{\Setdescr}[3][|]{\ifthenelse{\equal{#1}{;}}{\Set{ #2 \,;\, #3 }}
     {\ifthenelse{\equal{#1}{:}}{\Set{ #2 \,:\, #3 }}
     {\twincommandJN{\bigl\{}{#2\,}{\bigl#1}{\bigr}{\,#3}{\bigr\}}}}}
\newcommand{\Setsize}[1]{\bigl\lvert#1\bigr\rvert}
\newcommand{\setsize}[1]{\lvert#1\rvert}
\newcommand{\union}{\cup}
\newcommand{\Union}{\bigcup}
\newcommand{\unionSP}{\, \union \, }
\newcommand{\Lor}{\bigvee}
\newcommand{\Land}{\bigwedge}
\newcommand{\olnot}[1]{\overline{#1}}
\newcommand{\clwidth}{k}
\newcommand{\xcnfform}[1]{\mbox{\ensuremath{#1}-CNF} formula\xspace}
\newcommand{\kcnfform}{\xcnfform{\clwidth}}
\newcommand{\xclause}[1]{\mbox{\ensuremath{#1}-clause}\xspace}
\newcommand{\complclassformat}[1]{\textrm{\upshape{\textsf{#1}}}\xspace}
\newcommand{\PSPACE}{\complclassformat{PSPACE}}
\newcommand{\introduceterm}[1]{{\emph{#1}}}
\newcommand{\eqperiod}{\enspace .}
\newcommand{\eqcomma}{\enspace ,}
\newcommand{\etal}{et al.\@\xspace}
\newcommand{\refsec}[1]{Section~\ref{#1}}
\newcommand{\Refsec}[1]{Section~\ref{#1}}
\newcommand{\reffig}[1]{Figure~\ref{#1}}
\newcommand{\refeq}[1]{\eqref{#1}}}
\renewcommand{\refeq}[1]{\eqref{#1}}}
\not \boolean{detectedSTOC}        \and \not \boolean{detectedFOCS}
\not \boolean{detectedPoster}      \and \not \boolean{detectedElsevier} 
\not \boolean{detectedSIAM}        \and \not \boolean{detectedACM}
\not \boolean{detectedIEEE}        \and \not \boolean{detectedNOW}
\not \boolean{detectedToC}         \and \not \boolean{detectedThesis}
\not \boolean{detectedLIPIcs}      \and \not \boolean{detectedSIAM}
\not \boolean{detectedAAAI}        \and \not \boolean{detectedIJCAI}
\not \boolean{detectedSigplanconf} \and \not \boolean{detectedACMconf}   
\not \boolean{detectedCompCplx} \and \not \boolean{detectedEasyChair}}
\theoremstyle{plain}    
\newtheorem{fact}[theorem]{Fact}
\newtheorem{observation}[theorem]{Observation}
                        \or \boolean{detectedElsevier}
                        \or \boolean{detectedEasyChair}}
\newtheorem{standardlocalcounter}{Dummy}[section]
\theoremstyle{plain}    
\newtheorem{theorem}[standardlocalcounter]{Theorem}
\newtheorem{lemma}[standardlocalcounter]{Lemma}
\newtheorem{proposition}[standardlocalcounter]{Proposition}
\newtheorem{corollary}[standardlocalcounter]{Corollary}
\theoremstyle{definition}
\newtheorem{definition}[standardlocalcounter]{Definition}
\newtheorem{claim}[standardlocalcounter]{Claim}
\theoremstyle{remark}
\newtheorem{remark}[standardlocalcounter]{Remark}
\or \boolean{detectedThesis} \or 
\or \boolean{detectedToC}    \or 
\or \boolean{detectedAAAI}   \or
\or \boolean{detectedSIAM}}
\def\SetTime{\hours=\time
\global\divide\hours by 60
\minutes=\hours
\multiply\minutes by 60
\advance\minutes by-\time
\global\multiply\minutes by-1 }
\def\now{\number\hours:\ifnum\minutes<10 0\fi\number\minutes}
\newcommand{\pyramidgraph}[1][]{\Pi_{#1}}
\newcommand{\formf}{\ensuremath{F}}
\newcommand{\varx}{\ensuremath{x}}
\providecommand{\vary}{\ensuremath{y}}}
\newcommand{\vary}{\ensuremath{y}}}
\newcommand{\lita}{\ensuremath{a}}
\newcommand{\clc}{\ensuremath{C}}
\newcommand{\cld}{\ensuremath{D}}
\newcommand{\spacestd}{s}
\newcommand{\formulaformat}[1]{\mathit{#1}}
\newcommand{\pebcontr}[2][G]{\ensuremath{\formulaformat{Peb}^{#2}_{#1}}}
\providecommand{\partassign}{\rho}
\newcommand{\monsize}[1]{\mathit{mSize}(#1)}
\newcommand{\Monsize}[1]{\mathit{mSize}\bigl(#1\bigr)}
\providecommand{\varvecx}{\vec{x}}
\providecommand{\polyp}{p}
\providecommand{\polyq}{q}
\providecommand{\polyr}{r}
\providecommand{\polys}{s}
\providecommand{\polyset}{\mathcal{P}}
\providecommand{\pdegree}[1]{\mathrm{deg}(#1)}
\providecommand{\degd}{d}
\providecommand{\designd}{D}
\newcommand{\visitprice}[2][]{\formatpebblingprice{RPeb}^V_{#1}\ifthenelse{\equal{#1}{}}{\!}{}(#2)}
\newcommand{\prednode}[2][G]{\formatfunctiontoset{pred}_{#1}(#2)}
\newcommand{\formatpebblingstrategy}[1]{\mathcal{#1}}
\newcommand{\pebbling}[1][P]{\formatpebblingstrategy{#1}}
\providecommand{\formatconfiguration}[1]{\mathbb{#1}}
\newcommand{\pconf}[1][P]{\formatconfiguration{#1}}
\newcommand{\pconfafter}[1][P]{\formatconfiguration{#1}_{i+1}}
\newcommand{\pconfbefore}[1][P]{\formatconfiguration{#1}_i}
\newcommand{\formatpebblingprice}[1]{\textsl{\textsf{#1}}}
\newcommand{\stoptime}{\tau}
\renewcommand{\stoptime}{t}
\newcommand{\pebspace}[1]{\formatpebblingprice{space} ( #1 )}
\newcommand{\pebtime}[1]{\formatpebblingprice{time} ( #1 )}
\newcommand{\bP}{\mathbb{P}}
\newcommand{\pred}{\mathrm{pred}}
\newcommand{\sink}{\mathrm{sink}}\newcommand{\bC}{\mathbb{C}}
\newcommand{\bF}{\mathbb{F}}
\newcommand{\bH}{\mathbb{H}}
\newcommand{\sourcestd}{s}
\newcommand{\sinkstd}{z}
\newcommand{\sourcesetstd}{S}
\newcommand{\sinksetstd}{Z}
\newcommand{\sizerelatedparam}{m}
\newcommand{\levelparam}{r}
\newcommand{\indegreedag}{\ell}
\newcommand{\depthdag}{d}
\newcommand{\graphcopyindex}[1]{(#1)}
\newcommand{\singlesinkdagtext}{single-sink DAG\xspace}
\newcommand{\nsupconctext}[1]{\ensuremath{#1}-super\-concen\-trator\xspace}
\newcommand{\csnspines}{c}   \newcommand{\csrec}{\levelparam}       \newcommand{\csdag}[2]{\Gamma({#1},{#2})}
\newcommand{\cssink}[1][]{\gamma_{#1}}
\newcommand{\cspebsurplus}{\spacestd}
\newcommand{\csfunc}{g}   
\newcommand{\ltsupconcsize}{\sizerelatedparam}
\newcommand{\ltncopies}{\levelparam}
\newcommand{\ltsscdag}[2]{\Phi({#1},{#2})}
\newcommand{\suchthat}{such that\xspace}
\newcounter{authorcount}
\newcommand{\newauthor}[3]{\newcounter{#1comment}
\expandafter\newcommand\csname #1comment\endcsname[1]{\ifdefined\DONOTINSERTCOMMENTS\relax\else \medskip\par\noindent
{\bfseries \scshape \footnotesize #2's comment
\stepcounter{#1comment}\csname the#1comment\endcsname}:
{\sffamily \itshape \scriptsize\textcolor{#3}{##1}\par}
\medskip \fi}
\stepcounter{authorcount}
\expandafter\edef\csname #1ordinal\endcsname{\theauthorcount}
\expandafter\newcommand\csname theauthor#1\endcsname{the \ordinaltoname{\csname #1ordinal\endcsname} author\xspace}
\expandafter\newcommand\csname Theauthor#1\endcsname{The \ordinaltoname{\csname #1ordinal\endcsname} author\xspace}
}
  \numberwithin{equation}{section}
\NewDocumentCommand{\reverse}{m}
{
  \tl_reverse:N #1
  \tl_use:N #1
}
\NewDocumentCommand{\reverseto}{mm}
{
  \tl_reverse:N #1
  \tl_set_eq:NN #2 #1
}
\begin{document}

\title{Nullstellensatz Size-Degree Trade-offs from Reversible Pebbling\footnote{A preliminary version of this work appeared in CCC 2019.}}

\author{Susanna F. de Rezende \\
Mathematical Institute of the \\ Czech Academy of Sciences
  \and
  Or Meir\\
  University of Haifa
  \and
  Jakob Nordström \\
  University of Copenhagen and \\
  KTH Royal Institute of Technology
  \and
  Robert Robere\\
  DIMACS and\\
Institute for Advanced Study}

\date{\today}

\maketitle

\ifthenelse{\boolean{conferenceversion}}
{}
{
\thispagestyle{empty}

\pagestyle{fancy}
\fancyhead{}
  \fancyfoot{}
\renewcommand{\headrulewidth}{0pt}
  \renewcommand{\footrulewidth}{0pt}
 
\fancyhead[CE]{\slshape 
    NULLSTELLENSATZ SIZE-DEGREE TRADE-OFFS FROM REVERSIBLE PEBBLING
  }
  \fancyhead[CO]{\slshape \nouppercase{\leftmark}}
  \fancyfoot[C]{\thepage}
  
\setlength{\headheight}{13.6pt}
}

\begin{abstract}
  We establish an exactly tight relation between reversible pebblings
  of graphs and Nullstellensatz refutations of pebbling formulas,
  showing that a graph~$G$ can be reversibly pebbled in time~$t$ and
  space~$s$ if and only if there is a Nullstellensatz refutation of the
  pebbling formula over~$G$ in size $t+1$ and degree~$s$
  (independently of the field in which the Nullstellensatz refutation
  is made). We use this correspondence to prove a number of strong
  size-degree trade-offs for Nullstellensatz, which to the best of our
  knowledge are the first such results for this proof system.
\end{abstract}

\section{Introduction}
\label{sec:intro}

In this work, we obtain strong trade-offs in proof complexity by
making a connection to pebble games played on graphs. In this
introductory section we start with a brief overview of these two areas
and then explain how our results follow from connecting the two.

\subsection{Proof Complexity}

Proof complexity is the study of efficiently verifiable certificates
for mathematical statements.  More concretely, statements of interest
claim to provide correct answers to questions like:
\begin{itemize}
\item
  Given a CNF formula, does it have a satisfying assignment or not?
\item
  Given a set of polynomials over some finite field, do they have a common root?
\end{itemize}
There is a clear asymmetry here in that it seems obvious what an
easily verifiable certificate for positive answers to the above
questions should be, while it is not so easy to see what a concise
certificate for a negative answer could look like. The focus of proof
complexity is therefore on the latter scenario.

In this paper we study the algebraic proof system
system \introduceterm{Nullstellensatz} introduced by Beame
\etal~\cite{BIKPP94LowerBounds}.
A
\introduceterm{Nullstellensatz refutation}
of a set of polynomials
$\polyset = \setdescr{\polyp_i}{i \in [m]}$
with coefficients in a field~$\F$
is an expression
\begin{equation}
  \label{eq:ns-refutation}
  \sum_{i=1}^{m} \polyr_i \cdot \polyp_i
  +
  \sum_{j=1}^{n} \polys_j  \cdot (\varx_j^2 - \varx_j)
  = 1
\end{equation}
(where
$\polyr_i,\polys_j$
are also polynomials),
showing that
$1$ lies in the polynomial ideal in the
ring~$\F[\varx_1, \ldots,\varx_n]$  generated by
$
\polyset \union
\Setdescr{\varx^2_j - \varx_j}{j \in [n]}
$.
By (a slight extension of) Hilbert's Nullstellensatz, such a
refutation exists if and only if there is no common $\set{0,1}$-valued
root for the  set of polynomials~$\polyset$.

Nullstellensatz can also be viewed as a proof system for certifying
the unsatisfiability of CNF formulas. If we translate a clause like,
e.g.,
$\clc = x \lor y \lor \olnot{z}$
to the polynomial
$
\polyp({\clc})
=
(1-x)(1-y)z
=
z - yz -xz + xyz
$, then an assignment to the variables in a CNF formula
$\formf = \Land_{i=1}^{m} \clc_i$
(where we think of $1$ as true and $0$ as false)
is satisfying precisely if all the polynomials
$\setdescr{\polyp({\clc_i})}{i \in [m]}$
vanish.

The \introduceterm{size} of a Nullstellensatz refutation
\refeq{eq:ns-refutation}
is the total number of monomials in all the polynomials
$\polyr_i \cdot \polyp_i$
and
$\polys_j  \cdot (\varx_j^2 - \varx_j)$
expanded out as linear combinations of monomials.
Another, more well-studied, complexity measure for Nullstellensatz is
\introduceterm{degree},
which
is defined as
$
\maxofexpr{
  \pdegree{\polyr_i \cdot \polyp_i},
  \pdegree{\polys_j \cdot (\varx_j^2 - \varx_j)}
}
$.

In order to prove a lower bound~$d$ on the Nullstellensatz degree of
refuting a set of polynomials~$\polyset$, one can construct a
\introduceterm{$\degd$-design},
which is a map~$\designd$ from degree-$\degd$ polynomials in
$\F[\varx_1, \ldots,\varx_n]$
to~$\F$ such that
\begin{enumerate}
\item
  $\designd$ is linear, \ie
  $
  \designd(\alpha \polyp + \beta \polyq) =
  \alpha \designd(\polyp) + \beta \designd(\polyq)
  $
  for $\alpha,\beta \in \F$;

\item
  $\designd(1) = 1$;

\item
  $\designd(\polyr \polyp) = 0$
  for all
  $\polyp \in \polyset$
  and
  $\polyr \in
  \F[\varx_1, \ldots,\varx_n]
$
  such that
  $\pdegree{\polyr \polyp} \leq \degd$;

\item
  $\designd(\varx^2 \polys) = \designd(\varx \polys )$
  for all
  $\polys \in
  \F[\varx_1, \ldots,\varx_n]
$
  such that
  $\pdegree{\polys} \leq \degd - 2$.
\end{enumerate}
Designs provide a characterization of Nullstellensatz degree in that
there is a \mbox{$\degd$-design} for $\polyset$ if and
only if there is no Nullstellensatz refutation of~$\polyset$ in
degree~$\degd$~\cite{Buss98LowerBoundsNS}.
Another possible approach to prove degree lower bounds is by
computationally efficient versions of Craig's interpolation
theorem. It was shown in~\cite{PS98Algebraic} that constant-degree
Nullstellensatz refutations yield polynomial-size monotone span
programs, and that this is also tight: every span program is a unique
interpolant for some set of polynomials refutable by Nullstellensatz.
This connection has not been used to obtain Nullstellensatz degree
lower bounds, however, due to the difficulty of proving span program
lower bounds.

Lower bounds on Nullstellensatz degree have been proven for sets of
polynomials encoding combinatorial principles such as the pigeonhole
principle~\cite{BCEIP98Relative},
induction principle~\cite{BP98Good},
house-sitting principle~\cite{CEI96Groebner,Buss98LowerBoundsNS},
matching~\cite{BIKPRS97ProofCplx},
and pebbling~\cite{BCIP02Homogenization}.
It seems fair to say that research in algebraic proof complexity
soon  moved on to stronger systems such as
\introduceterm{polynomial calculus}~\cite{CEI96Groebner,ABRW02SpaceComplexity},
where the proof that $1$ lies in the ideal generated by
$
\polyset \union
\Setdescr{\varx^2_j - \varx_j}{j \in [n]}
$
can be constructed dynamically by a step-by-step derivation.
However, the Nullstellensatz proof system has been the focus of
renewed interest in a recent line of works
\cite{RPRC16Exponential,PR17StronglyExponential,PR18LiftingNS,dRMNPRV19Lifting}
showing that Nullstellensatz lower bounds can be lifted to
stronger lower bounds for more powerful computational models
using composition with gadgets.
The size complexity measure for Nullstellensatz has also received
attention in recent papers such as
\cite{Berkholz18Relation,AO19ProofCplx}.

In this work, we are interested in understanding the relation between
size and degree in Nullstellensatz. In this context
it is relevant to compare and contrast
Nullstellensatz with polynomial calculus as well as with the
well-known \introduceterm{resolution} proof
system~\cite{Blake37Thesis},
which operates directly on the clauses of a CNF formula and repeatedly
derives resolvent clauses
$\clc \lor \cld$
from clauses of the form
$\clc \lor \varx$
and
$\cld \lor \olnot{\varx}$
until contradiction, in the form of the empty clause without any
literals, is reached. For resolution, size is measured by counting the
number of clauses, and \introduceterm{width},  measured as the number
of literals in a largest clause in a refutation, plays an analogous
role to degree for Nullstellensatz and polynomial calculus.

By way of background,
it is not hard to show that for all three proof systems upper bounds
on degree/width imply upper bounds on size, in the sense that if a CNF
formula over $n$~variables can be refuted in degree/width~$d$, then
such a refutation can be carried out in size~$n^{\bigoh{d}}$.
Furthermore, this upper bound has been proven to be tight up to
constant factors in the exponent for resolution and polynomial
calculus~\cite{ALN16NarrowProofs}, and it follows
from~\cite{LLMO09Expressing} that this also holds for Nullstellensatz.
In the other direction, it has been shown for resolution and polynomial
calculus that strong enough lower bounds on degree/width imply lower
bounds on size~\cite{IPS99LowerBounds,BW01ShortProofs}.
This is known to be false for Nullstellensatz, and the pebbling
formulas discussed in more detail later in this paper provide a
counter-example~\cite{BCIP02Homogenization}.

The size lower bounds in terms of degree/width
in~\cite{IPS99LowerBounds,BW01ShortProofs} can be established by
transforming refutations in small size to refutations in small
degree/width. This procedure blows up the size of the refutations
exponentially, however. It is natural to ask whether such a blow-up is
necessary or whether it is just an artifact of the proof. More generally,
given that a formula has proofs in small size and small degree/width,
it is an interesting question whether both measures can be optimized
simultaneously, or whether there has to be a trade-off between the two.

For resolution this question was finally answered
in~\cite{Thapen16Trade-off}, which established that there are indeed
strong trade-offs between size and width making the size blow-up in
\cite{BW01ShortProofs} unavoidable. For polynomial calculus, the
analogous question remains open.

In this paper, we show that there are strong trade-offs between size
and degree for Nullstellensatz.  We do so by establishing a tight
relation between Nullstellensatz refutations of pebbling formulas and
reversible pebblings of the graphs underlying such formulas.  In order
to discuss this connection in more detail, we first need to describe
what reversible pebblings are. This brings us to our next topic.

\subsection{Pebble Games}

In the \introduceterm{pebble game} first studied by Paterson and
Hewitt~\cite{PH70Comparative},
one places pebbles on the vertices of a directed acyclic graph (DAG)~$G$
according to the following rules:
\begin{itemize}
\item
If all (immediate) predecessors of an empty vertex~$v$ contain
  pebbles, a pebble may be placed on~$v$.

\item

  A pebble may be removed from any vertex at any time.
\end{itemize}
The game starts and ends with the graph being empty, and a pebble
should be placed on the (unique) sink of~$G$ at some point.
The complexity measures to minimize are the
total number of pebbles on~$G$ at any given time (the
\introduceterm{pebbling space}) and the number of moves
(the \introduceterm{pebbling time}).

The pebble game has been used to study
flowcharts and recursive schemata~\cite{PH70Comparative},
register allocation~\cite{Sethi75CompleteRegisterAllocation},
time and space as Turing-machine
resources~\cite{Cook74ObservationTimeStorageTradeOff,HPV77TimeVsSpace},
and algorithmic time-space trade-offs
\cite{Chandra73Efficient,
  SS77SpaceTimeFFT,
  SS79SpaceTimeOblivious,
  SS83SpaceTimeLinear,
  Tompa78TimeSpaceComputing}.
In the last two decades,
pebble games have seen a revival
in the context of proof complexity (see, e.g.,~\cite{Nordstrom13SurveyLMCS}),
and pebbling has also turned out to be useful for
applications in cryptography~\cite{DNW05Pebbling,AS15HighParallel}.
An excellent overview of pebbling up to ca.\ 1980 is given
in~\cite{Pippenger80Pebbling} and some more recent developments are
covered in the upcoming survey~\cite{Nordstrom09PebblingSurveyFTTCS}.

Bennett~\cite{Bennett89TimeSpaceReversible} introduced the
\introduceterm{reversible pebble game} as part of a broader
program~\cite{Bennett73LogicalReversibility}
aimed at eliminating or reducing  energy dissipation during computation.
Reversible pebbling has also been of interest in the context of
quantum computing. For example, it was noted in \cite{Soeken1}
that reversible pebble games can be used to capture the problem of
``uncomputing'' intermediate values in quantum algorithms.

The reversible pebble game adds the requirement that the whole
pebbling performed in reverse order should also be a correct pebbling,
which means that the rules for pebble placement and removal become
symmetric as follows:
\begin{itemize}

\item
  If all predecessors of an
empty vertex~$v$   contain pebbles,
  a pebble may  be placed on~$v$.

\item
  If all predecessors of a pebbled vertex~$v$
contain pebbles,
  the pebble on~$v$ may be removed.

\end{itemize}
Reversible pebblings have been studied in
\cite{LV96Reversibility,Kralovic04TimeSpaceReversible,KSS18PebblingMeetsColoring}
and have been used to prove time-space trade-offs in reversible
simulation of irreversible computation in
\cite{LTV98Reversible,LMT00Reversible,Williams00Reversible,BTV01Reversible}.
In a different context, Potechin~\cite{Potechin10Monotone}
implicitly used reversible pebbling to obtain lower bounds in
monotone space complexity, with the connection made explicit in later works
\cite{CP14Monotone,FPRC13Average}.
The paper~\cite{CLNV15Hardness} (to which this overview is indebted)
studied the relative power of standard and reversible pebblings with
respect to space, and also established \PSPACE-hardness results for
estimating the minimum space required to pebble graphs (reversibly or not).

\subsection{Our Contributions}

In this paper, we obtain an exactly tight correspondence between on
the one hand reversible pebblings of DAGs and on the other hand
Nullstellensatz refutations of pebbling formulas over these DAGs.
We show that for any DAG~$G$ it holds that $G$ can be reversibly pebbled
in time~$t$ and space~$s$ if and only if there is a Nullstellensatz
refutation of the pebbling formula over~$G$ in size $t+1$ and
degree~$s$. This correspondence holds regardless of the field in which
the Nullstellensatz refutation is operating, and so, in particular, it
follows that pebbling formulas have exactly the same complexity for
Nullstellensatz regardless of the ambient field.

We then revisit the time-space trade-off literature for the standard
pebble game, focusing on the papers
\cite{CS80GraphPebbling,CS82ExtremeTimeSpaceTradeoffs,LT82AsymptoticallyTightBounds}.
The results in these papers do not immediately transfer to the
reversible pebble game, and we are not fully able to match the
tightness of the results for standard pebbling, but we nevertheless
obtain strong time-space trade-off results for the reversible pebble game.

This allows us to derive Nullstellensatz size-degree trade-offs from
reversible pebbling time-space trade-offs as follows.
Suppose that we have a DAG $G$ such that:
\begin{enumerate}
\item
  $G$ can be reversibly pebbled in time $t_1 \ll t_2$.
\item
  $G$ can be reversibly pebbled in space $s_1 \ll s_2$.
\item
  There is no reversible pebbling of~$G$ that simultaneously achieves
  time $t_1$ and space $s_1$.
\end{enumerate}
Then for Nullstellensatz refutations of the  pebbling
formula
$\pebcontr[G]{}$
over~$G$
(which will be formally defined shortly)
we can deduce that:
\begin{enumerate}
\item
  Nullstellensatz can refute
  $\pebcontr[G]{}$
  in size $t_1 + 1 \ll t_2+1 $.
\item
  Nullstellensatz can also refute
  $\pebcontr[G]{}$
  in degree $s_1 \ll s_2$.
\item
  There is no Nullstellensatz refutation of
  $\pebcontr[G]{}$
  that simultaneously achieves
  size $t_1 + 1$ and degree~$s_1$.
\end{enumerate}

We prove four such trade-off results, which can be found in Section~\ref{sec:revpeb}. The following theorem is one example of such a result (specifically, it is a simplified version of Theorem~\ref{th:NS-exp-trade-off-CS}).

\begin{theorem}
  \label{th:NS-exp-trade-off-example}
  There is a family of $3$-CNF formulas $\set{F_n}_{n=1}^{\infty}$
  of size $\bigtheta{n}$
  \suchthat:
  \begin{enumerate}
  \item There is a Nullstellensatz refutation of $F_n$ in degree
    $s_1 = \Bigoh{\sqrt[6]{n}\log n}$.
  \item
There is a Nullstellensatz refutation of $F_n$ of
    near-linear size
and degree
    $s_2 = \Bigoh{\sqrt[3]{n}\log n}.$
  \item
Any Nullstellensatz refutation of $F_n$ in
    degree at most $\sqrt[3]{n}$
must have exponential size.
\end{enumerate}
\end{theorem}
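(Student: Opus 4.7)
The plan is to exploit the exact correspondence established earlier in the paper between reversible pebblings of DAGs and Nullstellensatz refutations of the associated pebbling formulas: a DAG $G$ can be reversibly pebbled in time $t$ and space $s$ if and only if there is a Nullstellensatz refutation of $\pebcontr[G]{}$ in size $t+1$ and degree $s$. Under this correspondence, the task reduces to exhibiting a family $\{G_n\}$ of DAGs on $\bigoh{n}$ vertices with bounded fan-in (so that $\pebcontr[G_n]{}$ is a 3-CNF of size $\bigtheta{n}$) together with reversible pebbling trade-offs matching the three claims: a space-$\bigoh{\sqrt[6]{n}\log n}$ pebbling, a near-linear-time pebbling in space $\bigoh{\sqrt[3]{n}\log n}$, and a lower bound showing that every reversible pebbling in space at most $\sqrt[3]{n}$ requires exponential time.

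The construction of $G_n$ will follow the Carlson--Savage / Lengauer--Tarjan style of recursively stacked graphs tuned to the parameters $\sqrt[6]{n}$ and $\sqrt[3]{n}$: $G_n$ is built by iterating a ``bit reversal'' or ``superconcentrator stacking'' gadget so that the graph admits two qualitatively different pebbling schemes (a time-optimal and a space-optimal one) while forcing the structural incompatibility between them. For the upper bounds on reversible pebbling, I will explicitly describe the strategies: the small-space strategy pebbles the recursive layers one at a time, carefully reversing each gadget's partial pebbling to recover pebbles — the $\log n$ factors come from the overhead of reversibly emulating the standard pebbling of each gadget, using the black-pebbling-to-reversible-pebbling transformations in the reversible pebbling literature; the near-linear-time strategy instead unfolds more layers in parallel, paying a $\sqrt[3]{n}$ factor in space to save time.

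The hard part will be the simultaneous lower bound in part~(3). Standard time-space trade-off lower bounds (e.g., the Carlson--Savage arguments) were designed for the black pebble game and do not transfer automatically to reversible pebblings, because reversibility forces additional structural constraints but also restricts the prover's options; in particular, a low-space reversible pebbling cannot be obtained for free from a low-space black pebbling. The approach will be to revisit the combinatorial argument that tracks the number of distinct ``critical'' subconfigurations any pebbling must traverse, and show that reversibility only strengthens the counting: because every reversible configuration transition is locally reversible, restricting to space at most $\sqrt[3]{n}$ bounds the total number of distinct reachable configurations, and the recursive structure of $G_n$ forces the pebbling to visit a number of such configurations exponential in the number of levels. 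This yields the exponential time lower bound, from which part~(3) of the theorem follows via the correspondence.

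Finally, setting $F_n = \pebcontr[G_n]{}$ and invoking the main correspondence translates items~(i)--(iii) into the Nullstellensatz statements of the theorem: the CNF size is $\bigtheta{n}$ by construction, the degree upper bounds $s_1 = \Bigoh{\sqrt[6]{n}\log n}$ and $s_2 = \Bigoh{\sqrt[3]{n}\log n}$ come from the reversible pebbling space upper bounds, the near-linear size in part~(2) comes from the near-linear time of the corresponding pebbling, and the exponential size lower bound in part~(3) comes from the exponential time lower bound against low-space reversible pebblings.
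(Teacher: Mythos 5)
Your overall plan is correct and matches the paper's: invoke the exact correspondence between reversible pebbling and Nullstellensatz (Theorem~\ref{thm:pebbling-nullstellensatz-correspondence}), instantiate Carlson--Savage graphs $\csdag{\csnspines}{\csrec}$ with $\csnspines = \sqrt[3]{n}$ and $\csrec = \sqrt[6]{n}$, establish reversible pebbling space upper bounds for items~(1) and~(2) (with the $\log$ overhead coming from reversibly pebbling spines, and the near-linear-time strategy from Bennett's reversible simulation), and translate the trade-off lower bound into item~(3). The upper-bound half of your reasoning is on track.

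Your treatment of item~(3) contains a conceptual error that would cost you significant unnecessary effort. You state that standard (black) pebbling time-space lower bounds ``do not transfer automatically to reversible pebblings'' and propose to ``revisit the combinatorial argument'' to get a reversible-specific lower bound. This has the implication direction backwards. Compare the rules: the black pebble game allows removal of a pebble from any vertex at any time, while the reversible game only allows removal when the predecessors are pebbled. Thus every legal reversible pebbling sequence is, as-is, also a legal black pebbling sequence in the same time and space. Consequently a black pebbling lower bound (here, Theorem~\ref{th:CS-dag-trade-off-black-only} of Carlson--Savage) applies verbatim to any reversible pebbling, and no new combinatorial counting argument is needed for item~(3). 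It is only \emph{upper} bounds that fail to transfer for free from black to reversible pebbling, which is exactly why Theorem~\ref{th:rev-simulation} and the logarithmic reversible-path-pebbling bounds are needed in items~(1) and~(2). The sketchy counting argument you propose (bounding reachable configurations and invoking ``local reversibility'') would, even if carried through, be weaker and harder than simply observing this containment of games; as written it does not give the required $(\sqrt[6]{n})!$ bound, since bounding the number of distinct configurations does not by itself force a long pebbling.
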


It should be noted that
this is not the first time proof complexity trade-off results have been
obtained from pebble games. Pebbling formulas were used in
\cite{Ben-Sasson09SizeSpaceTradeoffs,BN11UnderstandingSpace}
to obtain size-space trade-offs for resolution, and later in
\cite{BNT13SomeTradeoffs}
also for polynomial calculus. However, the current reductions between
pebbling and Nullstellensatz are much stronger in that they go in both
directions and are exact even up to additive constants.

With regard to Nullstellensatz, it was shown
in~\cite{BCIP02Homogenization} that Nullstellensatz degree is
lower-bounded by standard pebbling price.  This was strengthened in
\cite{dRMNPRV19Lifting}, which used the connection between designs and
Nullstellensatz degree discussed above to establish that the degree
needed to refute a pebbling formula exactly coincides with the
reversible pebbling price of the corresponding DAG (which is always at
least the standard pebbling price, but can be much larger).
Our reduction significantly improves on~\cite{dRMNPRV19Lifting} by
constructing a more direct reduction, inspired
by~\cite{GKRS18Adventures}, that can simultaneously capture both time
and space.

\subsection{Outline of This Paper}

After having discussed the necessary preliminaries in
\refsec{sec:prelims},
we \ifthenelse{\boolean{conferenceversion}}{prove}{present} the reductions between Nullstellensatz and reversible
pebblings in
\refsec{sec:pebbling-nullstellensatz-correspondence}.
\ifthenelse{\boolean{conferenceversion}}{In
\refsec{sec:revpeb},
we present the size-degree trade-offs for Nullstellensatz we obtain
for different degree regimes.}{In
\refsec{sec:revpeb},
we prove time-space trade-offs for reversible pebblings in order to
obtain size-degree trade-offs for Nullstellensatz.}
\Refsec{sec:conclusion} contains some concluding remarks with
suggestions for future directions of research.

\section{Preliminaries}
\label{sec:prelims}

All logarithms in this paper are base $2$ unless otherwise specified.
For a positive integer~$n$ we write $[n]$ to  denote the set
of integers $\set{1, 2, \ldots, n}$.

A \introduceterm{literal}
$\lita$
over a Boolean variable $\varx$ is either
the variable $\varx$ itself or 
its negation~$\olnot{\varx}$ 
(a \introduceterm{positive} or \introduceterm{negative} literal,
respectively). A \introduceterm{clause}   
$\clc = \lita_1 \lor \formuladots \lor \lita_{\clwidth}$ 
is a disjunction of literals. 
A \introduceterm{$\clwidth$\nobreakdash-clause} is a clause that
contains at most $\clwidth$~literals. A  formula~$\formf$ in
\introduceterm{conjunctive normal form (CNF)}
is a conjunction of
clauses
$\formf = \clc_1 \land \formuladots \land \clc_m$.  
\mbox{A \introduceterm{\kcnfform{}}} is a CNF formula consisting of
\xclause{\clwidth}{}s. We think of clauses and CNF formulas as sets, so that
the order of elements is irrelevant and there are no repetitions.
A truth value assignment~$\partassign$ to the variables of a CNF
formula~$\formf$ is satisfying if every clause in~$\formf$ contains a
literal that is true under~$\partassign$.
    
\subsection{Nullstellensatz}

Let $\F$ be any field and let
$\varvecx = \set{\varx_1, \ldots, \varx_n}$
be a set of variables. We identify a set of polynomials
$\polyset = \setdescr{\polyp_i(\varvecx)}{i \in [m]}$ in the ring~$\F[\varvecx]$ 
with the statement that 
all
$\polyp_i(\varvecx)$
have a common $\set{0,1}$-valued root.
A 
\introduceterm{Nullstellensatz refutation} 
of this claim
is a syntactic equality
\begin{equation}
  \label{eq:ns-refutation-prelims}
  \sum_{i=1}^{m} \polyr_i(\varvecx) \cdot \polyp_i(\varvecx)  
  + 
  \sum_{j=1}^{n} \polys_j(\varvecx)  \cdot (\varx_j^2 - \varx_j) 
  = 1  
\eqcomma
\end{equation}
where 
$\polyr_i,\polys_j$
are also polynomials in~$\F[\varvecx]$. We sometimes refer to the polynomials $\polyp_i(\varvecx)$ as axioms
and $(\varx_j^2 - \varx_j) $ as Boolean axioms.

As discussed in the introduction, Nullstellensatz can be used as a
proof system for CNF formulas by translating a clause
$
\clc  = 
\Lor_{\varx \in P} \varx 
\lor 
\Lor_{\vary \in N} \olnot{\vary} 
$
to the polynomial
$
\polyp(\clc)  = 
\prod_{\varx \in P} (1- \varx)
\cdot 
\prod_{\vary \in N} \vary
$
and viewing Nullstellensatz refutations of
$\setdescr{\polyp({\clc_i})}{i \in [m]}$
as refutations of the CNF formula
$\formf = \Land_{i=1}^{m} \clc_i$.

The 
\introduceterm{degree}
of a Nullstellensatz refutation~\refeq{eq:ns-refutation} 
is
$
\maxofexpr{
  \pdegree{\polyr_i(\varvecx) \cdot \polyp_i(\varvecx)},
  \pdegree{\polys_j(\varvecx) \cdot (\varx_j^2 - \varx_j)}
}
$.
We define the \introduceterm{size} of a refutation
\refeq{eq:ns-refutation-prelims}
to be the total number of monomials encountered when all products of
polynomials are expanded out as linear combinations of monomials.
To be more precise, let
$\monsize{\polyp}$
denote the number of monomials in a polynomial~$\polyp$ written as a linear
combination of monomials.
Then the size of a Nullstellensatz refutation on the
form~\refeq{eq:ns-refutation} is 
\begin{equation}
  \label{eq:ns-refutation-size}
  \sum_{i=1}^{m}
  \Monsize{\polyr_i(\varvecx)} \cdot \Monsize{\polyp_i(\varvecx)}  
  + 
  \sum_{j=1}^{n} 
  2 \cdot
  \Monsize{\polys_j(\varvecx)}  
\eqperiod
\end{equation}
This is consistent with how size is defined for the
``dynamic version'' of Nullstellensatz known as polynomial
calculus~\cite{CEI96Groebner,ABRW02SpaceComplexity}, and also with the general
size definitions for so-called algebraic and semialgebraic proof
systems in
\cite{ALN16NarrowProofs,Berkholz18Relation,AO19ProofCplx}.

We remark that this is not the only possible way of measuring
size, however. It can be noted that the
definition~\refeq{eq:ns-refutation-size} is quite wasteful in that it
forces us to represent the proof in a very inefficient way.
Other papers in the semialgebraic proof complexity literature, such as
\cite{GHP02ExponentialLowerBound,KI06LowerBounds,DMR09TightRankLowerBounds},
instead define size in terms of the polynomials in isolation,
more along the lines of
\begin{equation}
  \label{eq:ns-refutation-size-alt}
  \sum_{i=1}^{m}
  \bigl(
  \Monsize{\polyr_i(\varvecx)} + \Monsize{\polyp_i(\varvecx)}  
  \bigr)
  + 
  \sum_{j=1}^{n} 
  \bigl(
  \Monsize{\polys_j(\varvecx)}  
  + 2
  \bigr) \eqcomma
\end{equation}
or as the bit size or ``any reasonable size'' of the representation of
all polynomials
$
\polyr_i(\varvecx),
\polyp_i(\varvecx)$, and~$\polys_j(\varvecx)
$.

In the end, the difference is not too important since the two measures
\refeq{eq:ns-refutation-size}
and
\refeq{eq:ns-refutation-size-alt}
are at most a square apart, and for size we typically want to
distinguish between polynomial and superpolynomial.
In addition, and more importantly, in this paper we will only deal with
$k$-CNF formulas with $k = \bigoh{1}$, and in this setting the two
definitions are the same up to a constant factor $2^k$.
Therefore, we will stick with~\refeq{eq:ns-refutation-size},
which matches best how size is measured in the closely related
proof systems resolution and polynomial calculus, and which gives the
cleanest statements of our results.\footnote{We refer the reader to
Section 2.4 in~\cite{AH18SizeDegree} for a more detailed discussion of the definition of proof size in algebraic and semialgebraic proof systems.}

When proving lower bounds for algebraic proof systems it is often
convenient to consider a \emph{multilinear} setting where refutations are
presented in the ring
$
\F[\varvecx]
/
\setdescr{\varx_j^2 - \varx_j}{j \in [n]}
$. 
Since the Boolean axioms
$\varx_j^2 - \varx_j$
are no longer needed, the refutation~\refeq{eq:ns-refutation-prelims}
can be written simply as
\begin{equation}
  \label{eq:ns-refutation-ml}
  \sum_{i=1}^{m} \polyr_i(\varvecx) \cdot \polyp_i(\varvecx)  
  = 1  
\eqcomma
\end{equation}
where we assume that all results of multiplications are implicitly
multilinearized. 
It is clear that any refutation on the
form~\refeq{eq:ns-refutation-prelims} remains valid after
multilinearization, and so the size and degree measures can only
decrease in a multilinear setting.
In this paper, we prove our lower bound in our reduction in the
multilinear setting and the upper bound in the non-multilinear
setting, making the tightly matching results even stronger.

\subsection{Reversible Pebbling and Pebbling Formulas}

Throughout this paper
$G = (V,E)$
denotes a directed acyclic graph (DAG) of constant fan-in
with vertices~$\vertices{G}=V$ and edges
$\edges{G}=E$.
For an edge
$(u,v) \in E$
we say that $u$ is a
\introduceterm{predecessor} of $v$ and $v$~a
\introduceterm{successor} of~$u$.
We write $\prednode[G]{v}$ 
to denote the sets of all
predecessors of~$v$, and drop the
subscript when the DAG is clear from context. 
Vertices with no predecessors/successors are called
\introduceterm{sources}/\introduceterm{sinks}.
Unless stated otherwise we will assume that all DAGs under
consideration have a unique sink~$z$.

A
\introduceterm{pebble configuration}
on a DAG
$G = (V,E)$
is a subset of vertices
$\pconf \subseteq V$.
A 
\introduceterm{reversible pebbling strategy} 
for a DAG~$G$ with sink~$z$, or a 
\introduceterm{reversible pebbling} of $G$ for short, 
is a sequence of pebble configurations
$\pebbling = (\pconf_0, \pconf_1,\ldots, \pconf_{\stoptime})$
such that
$\pconf_0 = \pconf_{\stoptime} = \emptyset$,
$z \in \Union_{0\leq t \leq \stoptime} \pconf_t$,
and such that each configuration can be obtained from the previous one
by one of the following rules:
\begin{enumerate}
\item 
  \label{item:placement}
  $\pconfafter=\pconfbefore\unionSP \set{v}$ 
  for
  $v \notin \pconfbefore$
  such that
  $\prednode[G]{v} \subseteq \pconfbefore$
  (a \introduceterm{pebble placement} on~$v$).
\item
  \label{item:reversible-removal}
  $\pconfafter = \pconfbefore\setminus \set{v} $ for
  $ v\in \pconfbefore$
  such that
  $\prednode[G]{v} \subseteq \pconfbefore$
  (a 
  \introduceterm{pebble removal} 
from~$v$).
\end{enumerate}
The \introduceterm{time} of a pebbling
$\pebbling = (\pconf_0, \ldots, \pconf_{\stoptime})$
is $\pebtime{\pebbling} = \stoptime$
and the \introduceterm{space} is
$\pebspace{\pebbling} =
\maxofexpr[0 \leq t \leq {\stoptime}]{\setsize{\pconf_t}}$.

We could also say that a reversible pebbling 
$\pebbling = (\pconf_0, \ldots, \pconf_{\stoptime})$
should be such that 
$\pconf_0 = \emptyset$
and
$z \in \pconf_{\stoptime}$,
and define the time of such a pebbling to be~$2\stoptime$. This is so
since once we have reached a configuration containing~$z$ we can
simply run the pebbling backwards (because of reversibility) until we
reach the empty configuration again,
and without loss of generality all time- and space-optimal reversible
pebblings can be turned into such pebblings.
For simplicity, we will often take this viewpoint in what follows.

\ifthenelse{\boolean{conferenceversion}}{}{
For technical reasons it is sometimes important to
distinguish between  \introduceterm{visiting pebblings}, for which
$z \in \pconf_{\stoptime}$, and
\introduceterm{persistent pebblings}, which meet the more stringent
requirement that
$z \in \pconf_{\stoptime} = \set{z}$.
(It can be noted that for the
more relaxed standard pebble game discussed in the introductory
section any pebbling can be assumed to be persistent without loss of
generality.)}

\providecommand{\pospebvar}[1]{\varx_{#1}}
\providecommand{\negpebvar}[1]{\olnot{\varx}_{#1}}

\providecommand{\pospebvar}[1]{\varx_{#1}}
\providecommand{\negpebvar}[1]{\olnot{\varx}_{#1}}

\begin{figure}[tp]
  \vspace{-7mm}
  \begin{subfigure}[b]{0.28\textwidth}
    \centering
    \includegraphics{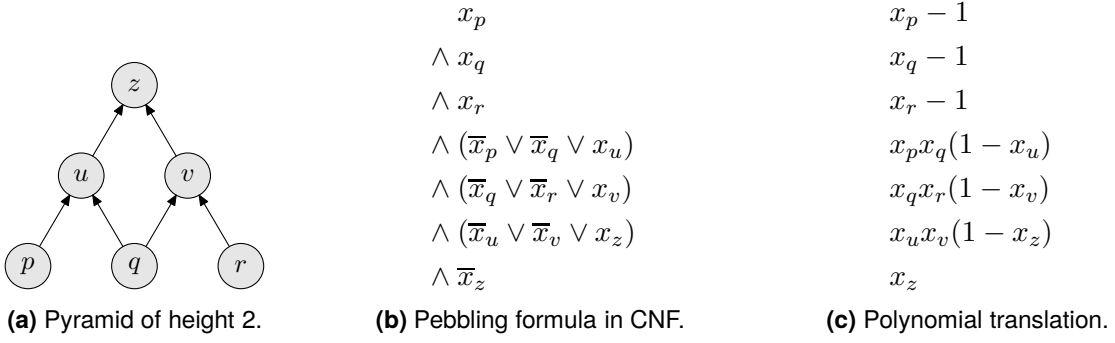}\caption{Pyramid 
of height 2.}
    \label{fig:pebbling-contradiction-for-Pi-2-graph}
  \end{subfigure}
  \hfill
  \begin{subfigure}[b]{0.35\textwidth}
    \centering
    \begin{gather*}
      \begin{aligned}
        & 
        \pospebvar{p}
        \\
        \land \ 
        &
        \pospebvar{q}
        \\
        \land \ 
        &
        \pospebvar{r}
        \\
        \land \ 
        &(\negpebvar{p} \lor \negpebvar{q} \lor \pospebvar{u})
        \\
        \land \ 
        &(\negpebvar{q} \lor \negpebvar{r} \lor \pospebvar{v})
        \\
        \land \ 
        &(\negpebvar{u} \lor \negpebvar{v} \lor \pospebvar{z})
        \\
        \land \ 
        &\negpebvar{z}
      \end{aligned}
    \end{gather*}
    \vspace{-5mm}
    \caption{Pebbling 
formula
in CNF.}	    
    \label{fig:pebbling-contradiction-for-Pi-2-Peb1}
  \end{subfigure}
  \hfill
  \begin{subfigure}[b]{0.35\textwidth}
    \centering
    \begin{gather*}
      \begin{aligned}
        & 
        \pospebvar{p} - 1
        \\
        &
        \pospebvar{q} - 1
        \\
        &
        \pospebvar{r} - 1
        \\
        &\pospebvar{p}  \pospebvar{q} (1 - \pospebvar{u})
        \\
        &\pospebvar{q}  \pospebvar{r} (1 - \pospebvar{v})
        \\
        &\pospebvar{u}  \pospebvar{v} (1 - \pospebvar{z})
        \\
        &\pospebvar{z}
      \end{aligned}
    \end{gather*}
    \vspace{-5mm}
    \caption{Polynomial translation.}	    
    \label{fig:pebbling-contradiction-for-Pi-2-Peb1-poly}
  \end{subfigure}
  \caption{Example pebbling contradiction for the 
    pyramid graph of height~$2$.}
\label{fig:pebbling-contradiction-for-Pi-2}
\end{figure}

Pebble games can be encoded in CNF by so-called
\introduceterm{pebbling formulas}~\cite{BW01ShortProofs},
or
\introduceterm{pebbling contradictions}.
Given a DAG~$G=(V,E)$ with a single sink~$z$,
we associate a variable~$\pospebvar{v}$ with every vertex~$v$ and
add clauses encoding that
\begin{itemize}
\item 
  the source vertices are all true;
\item 
  if all immediate predecessors are true, then truth propagates to the
  successor;
\item 
  but the sink  is false.
\end{itemize}
In short, the pebbling formula over $G$ consists of the clauses $\pospebvar{v} \lor \bigvee_{u \in \prednode[]{v}} \lnot \pospebvar{u}$ for all $v\in V$ (note that if $v$ is a source $\prednode[]{v}=\emptyset$), and the clause $\lnot \pospebvar{z}$.

We encode this formula by
a set of polynomials in the standard way. Given a set $U\subseteq V$, we denote by~$x_{U}$
the monomial $\prod_{u\in U}x_{u}$ (in particular, $x_{\emptyset}=1$).
For every vertex $v\in V$, we have the polynomial
\begin{equation}
A_{v}:=(1-x_{v})\cdot x_{\pred(v)} \eqcomma
\end{equation}
and for the sink~$z$ we also have the polynomial
\begin{equation}
A_{\sink}:=x_{z} \eqperiod
\end{equation}
See
\reffig{fig:pebbling-contradiction-for-Pi-2}
for an illustration, including how the CNF formula is translated to a
set of polynomials.

\section{Reversible Pebblings and Nullstellensatz Refutations}
\label{sec:pebbling-nullstellensatz-correspondence}

In this section, we prove the correspondence between the reversible
pebbling game on a graph~$G$ and Nullstellensatz refutation of the
pebbling contradiction of~$G$. Specifically, we prove the following
result.

\begin{theorem}
\label{thm:pebbling-nullstellensatz-correspondence}Let $G$ be a
directed acyclic graph with a single sink, let $\phi$~be the corresponding
pebbling contradiction, and let $\bF$ be a field. Then, there is
a reversible pebbling strategy for~$G$ with time at most~$t$ and
space at most~$s$ if and only if there is a Nullstellensatz refutation
for~$\phi$ over~$\bF$ of size at most~$t+1$ and degree at most~$s$.
Moreover, the same holds for multilinear Nullstellensatz refutations.
\end{theorem}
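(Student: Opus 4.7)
We plan to prove both directions by an explicit structural correspondence between reversible pebble moves and expansions of the pebble axioms $A_v = (1-x_v) x_{\pred(v)}$. The argument is field-independent, and the forward construction yields a multilinear refutation, simultaneously handling the ``Moreover'' clause.

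\textbf{Forward direction.} Given a reversible pebbling $\mathcal{P} = (\mathbb{P}_0, \ldots, \mathbb{P}_\tau)$ with $\mathbb{P}_0 = \emptyset$ and $z \in \mathbb{P}_\tau$\,---\,which we may assume satisfies $\tau \leq t/2$ by keeping whichever of the two halves of a length-$t$ visiting pebbling is shorter\,---\,we telescope. For each transition $i$, a legal move on some vertex $v_i$ with $\pred(v_i) \subseteq \mathbb{P}_i \cap \mathbb{P}_{i+1}$, direct computation yields the multilinear identity
\begin{equation*}
x_{\mathbb{P}_{i+1}} - x_{\mathbb{P}_i} = \epsilon_i \cdot A_{v_i} \cdot m_i,
\end{equation*}
with $\epsilon_i \in \{\pm 1\}$ and $m_i$ the monomial indexed by $(\mathbb{P}_i \cap \mathbb{P}_{i+1}) \setminus \pred(v_i)$. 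Summing and using $x_{\mathbb{P}_\tau} = A_{\sink} \cdot x_{\mathbb{P}_\tau \setminus \{z\}}$ produces a multilinear NS refutation of size $2\tau + 1 \leq t+1$ and degree $\max_i |\mathbb{P}_i| \leq s$.

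\textbf{Reverse direction.} Given a multilinear refutation $\sum_v r_v A_v + r_{\sink} A_{\sink} = 1$ of size at most $t+1$ and degree at most $s$ (any non-multilinear refutation can only shrink upon multilinearization), build a multigraph $H$ whose vertex set is the monomials $\mu$ appearing in the expansion (identified with configurations $\mathbb{P}_\mu$) together with a special vertex $s^*$. For each monomial $m$ of $r_v$ include a \emph{pebble edge} between $m x_{\pred(v)}$ and $m x_v x_{\pred(v)}$; for each monomial $m$ of $r_{\sink}$ include a \emph{sink edge} between $m x_z$ and $s^*$. Each pebble edge corresponds to a legal reversible move on $v$: both endpoint configurations contain $\pred(v)$ and differ by the single pebble $v$. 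Moreover, every configuration vertex has size at most $s$, since every monomial in the expansion has degree at most the refutation's degree.

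The heart of the argument is a connectivity lemma stating that $\emptyset$ (the monomial $1$) and $s^*$ lie in the same component of $H$. If the component $C$ of $\emptyset$ excluded $s^*$, one could restrict each $r_v$ to the subsum $R_v^C$ over pebble edges lying in $C$; because no edges cross the boundary of $C$, the restricted expansion yields a polynomial identity $\sum_v R_v^C A_v = 1$ using only pebble axioms\,---\,impossible, since every $A_v$ vanishes under the assignment $x_v = 1$ for all $v \in V$. Given connectivity, take any simple walk from $\emptyset$ to $s^*$ and extract its initial segment up to (but excluding) the first visit to $s^*$: this segment uses only pebble edges, ends at some configuration $\mathbb{P} \ni z$, and has length $\ell \leq N_1 := \sum_v \mathrm{mSize}(r_v)$ since a simple walk uses each pebble edge at most once. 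Reading off configurations gives a half-convention reversible pebbling, and doubling produces a pebbling of time $2\ell \leq 2N_1 \leq 2N_1 + \mathrm{mSize}(r_{\sink}) - 1 = (\text{size}) - 1 \leq t$ (using $\mathrm{mSize}(r_{\sink}) \geq 1$, also forced by the same lemma), with space at most $s$ by construction. The principal obstacle is the connectivity lemma, which demands transmuting an algebraic cancellation into a combinatorial path; the resolution hinges on the facts that restriction to a component remains a syntactic identity and that the pebble axioms alone have the all-ones assignment as a common root.
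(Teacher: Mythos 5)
Your proof is correct, and your forward direction is essentially identical to the paper's: both telescope along the pebbling, using the identity $x_{\mathbb{P}_{i+1}} - x_{\mathbb{P}_i} = \epsilon_i\, x_{R_i} A_{v_i}$ and closing with the sink axiom, with the same size and degree accounting.

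The reverse direction, however, takes a genuinely different route at the key step. You both build a graph on configurations (multilinear monomials of degree $\leq s$) with edges given by monomials of the multiplier polynomials $r_v$, and both reduce the problem to showing that the component $C$ of $\emptyset$ reaches a configuration containing $z$. But the arguments for this connectivity claim diverge. The paper assigns each edge $e_q$ a weight equal to the coefficient of $q$ in $r_v$, proves (Claim~\ref{claim:weights-of-configurations}) that the empty configuration has weight $1$ while every other $z$-free configuration has weight $0$, and then invokes bipartiteness of the configuration graph (even vs.\ odd $|U|$): summing weights over each side of the bipartition of $C$ gives the total edge weight of $C$, yet one side sums to $1$ and the other to $0$, a contradiction. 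Your argument instead \emph{restricts} the multipliers $r_v$ to the subsum $R_v^C$ of monomials whose edges lie in $C$; because edges never leave a component and (by hypothesis) no sink edge touches $C$, the coefficient of $x_U$ in $\sum_v R_v^C A_v$ agrees with that in the original refutation for $U \in C$ and is $0$ otherwise, yielding the impossible multilinear identity $\sum_v R_v^C A_v = 1$, refuted by evaluating at the all-ones point. Your argument is arguably cleaner: it avoids the bipartiteness observation and the weight bookkeeping entirely, and makes transparent \emph{why} the sink axiom must be reachable (the pebble axioms alone have a common root). The paper's weight argument has the merit of being purely local and not requiring one to verify that the restriction is again a syntactic multilinear identity, but both arguments are correct and of comparable length. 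One minor point of presentation: you should say ``simple path'' rather than ``simple walk'' to make the edge-distinctness (hence $\ell \leq N_1$) immediate; and you implicitly rely on the harmless fact that monomials $m$ of $r_v$ containing $x_v$ give self-loops with zero multilinear contribution, which the paper sidesteps by only defining edges for monomials not containing $x_v$. Neither affects correctness.
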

We prove each of the directions of Theorem~\ref{thm:pebbling-nullstellensatz-correspondence}
separately in Sections \ref{sub:pebbling-to-refutation} and~\ref{sub:refutation-to-pebbling}
below.

\subsection{\label{sub:pebbling-to-refutation}From Pebbling to Refutation}

We start by proving the ``only if'' direction of Theorem~\ref{thm:pebbling-nullstellensatz-correspondence}.
Let
\begin{equation}
\bP=\left(\bP_{0},\ldots,\bP_{t}\right)
\end{equation}
be an optimal reversible pebbling strategy for~$G$. Let $\bP_{t'}$
be the first configuration in which there is a pebble on the sink~$z$.
Without loss of generality, we may assume that $t=2\cdot t'$: if
the last $t-t'$~steps were more efficient than the first $t'$~steps,
we could have obtained a more efficient strategy by replacing the
first $t'$~steps with the (reverse of) the last $t-t'$~steps,
and vice versa.

We use $\bP$ to construct a Nullstellensatz refutation over~$\bF$
for the pebbling contradiction~$\phi$. To this end, we will first
construct for each step~$i\in\left[t'\right]$ of~$\bP$ a Nullstellensatz
derivation of the polynomial~$x_{\bP_{i-1}}-x_{\bP_{i}}$. The sum
of all these polynomials is a telescoping sum, and is therefore equal
to
\begin{equation}
x_{\bP_{0}}-x_{\bP_{t'}}=1-x_{\bP_{t'}} \eqperiod
\end{equation}
We will then transform this sum into a Nullstellensatz refutation
by adding the polynomial 
\begin{equation}
x_{\bP_{t'}}=A_{\sink}\cdot x_{\bP_{t'}-\left\{ z\right\} } \eqperiod
\end{equation}

We turn to constructing the aforementioned derivations. To this end,
for every $i\in\left[t'\right]$, let $v_{i}\in V$ denote the vertex
which was pebbled or unpebbled during the $i$-th step, i.e., during
the transition from~$\bP_{i-1}$ to~$\bP_{i}$. Then, we know that
in both configurations $\bP_{i-1}$ and $\bP_{i}$ the predecessors
of~$v_{i}$ have pebbles on them, i.e., $\pred(v)\subseteq\bP_{i-1},\bP_{i}$.
Let us denote by $R_{i}=\bP_{i}-\left\{ v_{i}\right\} -\pred(v_{i})$
the set of other vertices that have pebbles during the $i$-th step.
Finally, let $p_{i}$~be a number that equals to~$1$ if $v_{i}$
was pebbled during the $i$-th step, and equals to~$-1$ if $v_{i}$
was unpebbled. Now, observe that
\begin{equation}
x_{\bP_{i-1}}-x_{\bP_{i}}= p_i \cdot x_{\bP_{i-1}}(1 - x_{v_i}) = p_i \cdot x_{R_i}A_{v_{i}} \eqcomma
\label{eq:single-step-derivation}
\end{equation}
where the last step follows since the predecessors of $v_i$ are necessarily in $\bP_{i-1}$.
Therefore, our final refutation for~$\phi$ is
\begin{align}
\nonumber
  \sum_{i=1}^{t'}A_{v_{i}}\cdot p_i \cdot x_{R_{i}}+A_{\sink}\cdot x_{\bP_{t'}-\left\{ z\right\} } & = x_{\bP_{t'}} + \sum_{i=1}^{t'} x_{\bP_{i-1}} - x_{\bP_i}\\
                                                                                                   & = x_{\bP_{t'}} + (x_{\bP_0} - x_{\bP_{t'}}) \\
                                                                                                   & = x_{\bP_{t'}} + (1 - x_{\bP_{t'}}) = 1 \eqperiod
\nonumber
\end{align}
Note, in fact, it is a multilinear Nullstellensatz
refutation, since it contains only multilinear monomials and does
not use the Boolean axioms. It remains to analyze its degree and size.

For the degree, observe that every monomial in the proof is of the form $x_{\bP_i}$, and the degree of each such monomial is exactly the number of pebbles used in the corresponding configuration.
 It follows that the maximal degree is exactly the space of the pebbling strategy~$\bP$.

Let us turn to considering the size. Observe
that for each of the configurations $\bP_{1},\ldots,\bP_{t'}$, the
refutation contains exactly two monomials: for all $i\in [t'-1]$, one monomial for~$\bP_{i}$
is generated in the $i$-th step, and another in the $\left(i+1\right)$-th
step, and for the configuration $\bP_{t'}$
the second
monomial is generated when we add $A_{\sink}\cdot x_{\bP_{t'}-\left\{ z\right\} }$. In addition, the refutation contains exactly one monomial for the
configuration~$\bP_{0}$, which is generated in the first step. Hence,
the total number of monomials generated in the refutation is exactly
$2\cdot t'+1=t+1$, as required.

\subsection{\label{sub:refutation-to-pebbling}From Refutation to Pebbling}

We turn to prove the ``if'' direction of Theorem~\ref{thm:pebbling-nullstellensatz-correspondence}.
We note that it suffices to prove it for multilinear Nullstellensatz
refutations, since every standard Nullstellensatz refutation implies
the existence of a multilinear one with at most the same size and degree. 
Let
\begin{equation}
\sum_{v\in V}A_{v}\cdot Q_{v}+A_{\sink}\cdot Q_{\sink}=1\label{eq:refutation}
\end{equation}
be a multilinear Nullstellensatz refutation of~$\phi$ over~$\bF$
of degree~$s$. We use this refutation to construct a reversible
pebbling strategy~$\bP$ for~$G$.

To this end, we construct a ``configuration graph''~$\bC$, whose
vertices consist of all possible configurations of at most~$s$ pebbles
on~$G$ (i.e., the vertices will be all subsets of~$V$ of size
at most~$s$). The edges of~$\bC$ will be determined by the polynomials~$Q_{v}$
of the refutation, and every edge $\left\{ U_{1},U_{2}\right\} $
in~$\bC$ will constitute a legal move in the reversible pebbling
game (i.e., it will be legal to move from~$U_{1}$ to~$U_{2}$ and
vice versa). We will show that $\bC$~contains a path from the empty
configuration~$\emptyset$ to a configuration~$U_{z}$ that contains
the sink~$z$, and our pebbling strategy will be generated by walking
on this path from $\emptyset$ to~$U_{z}$ and back.

The edges of the configuration graph~$\bC$ are defined as follows:
Let $v\in V$ be a vertex of~$G$, and let $q$~be a monomial of
$Q_{v}$ that \emph{does not contain~$x_{v}$}. Let $W\subseteq V$
be the set of vertices such that $q=x_{W}$ (such a set~$W$ exists
since the refutation is multilinear). Then, we put an edge $e_{q}$
in~$\bC$ that connects $W\cup\pred(v)$ and~$W\cup\pred(v)\cup\left\{ v\right\} $
(we allow parallel edges). It is easy to see that the edge~$e_{q}$
connects configurations of size at most~$s$, and that it indeed
constitutes a legal move in the reversible pebbling game. We note
that $\bC$ is a bipartite graph: to see it, note that every edge~$e_{q}$
connects a configuration of an odd size to a configuration of an even
size.

For the sake of the analysis, we assign the edge~$e_{q}$ a weight
in~$\bF$ that is equal to coefficient of~$q$ in~$Q_{v}$. We
define \emph{the weight of a configuration}~$U$ to be the sum of
the weights of all the edges that touch~$U$ (where the addition
is done in the field~$\bF$). We use the following technical claim,
which we prove at the end of this section.
\begin{claim}
\label{claim:weights-of-configurations}Let $U\subseteq V$ be a configuration
in~$\bC$ that does not contain the sink~$z$. If $U$ is empty,
then its weight is~$1$. Otherwise, its weight is~$0$.
\end{claim}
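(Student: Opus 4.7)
The plan is to derive the claim by expanding the refutation~\refeq{eq:refutation} in the multilinear quotient ring and reading off the coefficient of the monomial $x_U$ on both sides. Writing $Q_v = \sum_{W \subseteq V} \alpha_{v,W} \cdot x_W$, I would first observe that in the multilinear expansion of $A_v \cdot Q_v = (1 - x_v) \, x_{\pred(v)} \, Q_v$, every monomial $x_W$ of $Q_v$ with $v \in W$ vanishes, because $(1 - x_v)\, x_v = 0$ in the multilinear ring. The surviving monomials are exactly those with $v \notin W$ -- precisely the ones that induce the edges of~$\bC$ -- and each one contributes
\begin{equation*}
\alpha_{v, W} \bigl(x_{W \cup \pred(v)} - x_{W \cup \pred(v) \cup \{v\}}\bigr),
\end{equation*}
whose two terms correspond to the two endpoints of the edge~$e_q$ with $q = x_W$.

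Next, I would note that $A_{\sink} \cdot Q_{\sink} = x_z \cdot Q_{\sink}$ contributes only monomials containing~$x_z$, and hence contributes nothing to the coefficient of~$x_U$ whenever $z \notin U$. Matching coefficients of $x_U$ on both sides of~\refeq{eq:refutation} then shows that the signed sum
\begin{equation*}
\sum_{\substack{v,\, W \,:\, v \notin W \\ W \cup \pred(v) \,=\, U}} \alpha_{v, W} \;-\; \sum_{\substack{v,\, W \,:\, v \notin W \\ W \cup \pred(v) \cup \{v\} \,=\, U}} \alpha_{v, W}
\end{equation*}
equals $1$ when $U = \emptyset$ and $0$ otherwise. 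This is exactly the weight of~$U$ once one interprets edge contributions with the bipartite convention that each $e_q$ contributes $+\alpha_{v,W}$ at its endpoint not containing~$v$ and $-\alpha_{v,W}$ at its endpoint containing~$v$ -- a sign choice consistent with the fact that the two endpoints of any edge of~$\bC$ differ by the single vertex~$v$ and thus sit on opposite sides of the bipartition by parity of~$|U|$.

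The main point that will need careful bookkeeping is the verification that the enumeration above covers every incidence of an edge at~$U$ exactly once: the two possibilities $W \cup \pred(v) = U$ and $W \cup \pred(v) \cup \{v\} = U$ are mutually exclusive (the first forces $v \notin U$, the second forces $v \in U$, since $v \notin \pred(v)$ in a DAG), so the two sums correctly partition the edges touching~$U$. Beyond this bookkeeping, the claim is just a direct coefficient comparison, so I do not expect any further combinatorial or algebraic obstacle.
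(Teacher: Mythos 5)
Your argument is correct and follows the same route as the paper: expand the refutation in the multilinear quotient ring, observe that any term $q_v$ of $Q_v$ with $v\in W$ vanishes because $(1-x_v)x_v = 0$, note that $A_{\sink}\cdot Q_{\sink}$ contributes nothing to the coefficient of $x_U$ when $z\notin U$, and then read off the coefficient of $x_U$ on both sides. The one place where you are more careful than the paper's own phrasing is the explicit sign accounting, and it is worth flagging: the paper defines the weight of $U$ as the plain (unsigned) sum of the edge weights $\alpha_{v,W}$ over incident edges, but with that literal reading the claim can fail --- for the path $u\to z$ with refutation $A_u\cdot 1 + A_z\cdot 1 + A_{\sink}\cdot x_u = 1$, the configuration $\{u\}$ is incident to the two edges $\emptyset$--$\{u\}$ and $\{u\}$--$\{u,z\}$, each of weight $1$, giving unsigned weight $2$ rather than $0$. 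What the paper's proof actually establishes, and what you prove, is the signed version in which $e_q$ contributes $+\alpha_{v,W}$ at the endpoint without $v$ and $-\alpha_{v,W}$ at the endpoint containing $v$; that is also the quantity the subsequent counting argument needs. One small quibble on phrasing: calling this sign convention ``consistent with the bipartition by parity'' could mislead, since the sign at an endpoint is determined by membership of $v$, not by the parity class of $|U|$; across different edges the $+\alpha_{v,W}$ endpoint may land on either side of the bipartition. This does not affect your proof, since (as you observe) the two sub-cases $W\cup\pred(v)=U$ and $W\cup\pred(v)\cup\{v\}=U$ are mutually exclusive and exhaust the incidences at $U$, so the coefficient comparison goes through exactly as you describe.
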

We now show how to construct the required pebbling strategy~$\bP$
for~$G$. To this end, we first prove that there is a path in~$\bC$
from the empty configuration to a configuration that contains the
sink~$z$. Suppose for the sake of contradiction that this is not
the case, and let $\bH$ be the connected component of~$\bC$ that
contains the empty configuration. Our assumption says that none of
the configurations in~$\bH$ contains~$z$.

The connected component~$\bH$ is bipartite since $\bC$~is bipartite.
Without loss of generality, assume that the empty configuration is
in the left-hand side of~$\bH$. Clearly, the sum of the weights
of the configurations on the left-hand side should be equal to the
corresponding sum on the right-hand side, since they are both equal
to the sum of the weights of the edges in~$\bH$. However, the sum
of the weights of the configurations on the right-hand side is~$0$
(since all these weights are~$0$ by Claim~\ref{claim:weights-of-configurations}),
while the sum of the weights of the left-hand side is~$1$ (again,
by Claim~\ref{claim:weights-of-configurations}). We reached a contradiction,
and therefore $\bH$ must contain some configuration~$U_{z}$ that
contains the sink~$z$.

Next, let $\emptyset=\bP_{0},\bP_{1},\ldots,\bP_{t'}=U_{z}$ be a
path from the empty configuration to~$U_{z}$. Our reversible pebbling
strategy for~$G$ is
\begin{equation}
\bP=\left(\bP_{0},\ldots,\bP_{t'-1},\bP_{t'},\bP_{t'-1},\ldots,\bP_{0}\right) \eqperiod
\end{equation}
This is a legal pebbling strategy since, as noted above, every edge
of~$\bC$ constitutes a legal move of the reversible pebbling game.
The strategy~$\bP$ uses space~$s$, since all the configurations
in~$\bC$ contain at most~$s$ pebbles by definition. The time of~$\bP$
is $t=2\cdot t'$. It therefore remains to show that the size of the
Nullstellensatz refutation from Equation~\ref{eq:refutation} is
at least~$t+1$.

To this end, note that every edge~$e_{q}$ in the path corresponds
to some monomial~$q$ in some polynomial~$Q_{v}$. When the monomial~$q$
is multiplied by the axiom $A_{v}$, it generates two monomials in
the proof: the monomial $q\cdot x_{\pred(v)}$ and the monomial $q\cdot x_{\pred(v)}\cdot x_{v}$.
Hence, the Nullstellensatz refutation contains at least~$2\cdot t'$
monomials that correspond to edges from the path. In addition, the
product~$A_{\sink}\cdot Q_{\sink}$ must contains at least one monomial,
since the refutation must use the sink axiom~$A_{\sink}$ (because
$\phi$ without this axiom is not a contradiction). It follows that
the refutation contains at least $2\cdot t'+1=t+1$ monomials, as
required. We conclude this section by proving Claim~\ref{claim:weights-of-configurations}.
\begin{proof}
[Proof of Claim~\ref{claim:weights-of-configurations}.]We start
by introducing some terminology. First, observe that a mo\-no\-mial~$m$
may be generated multiple times in the refutation of Equation~\ref{eq:refutation},
and we refer to each time it is generated as an \emph{occurrence}
of $m$. We say that an occurrence of~$m$ is \emph{generated by
a monomial~$q_{v}$ of~$Q_{v}$} if it is generated by the product~$A_{v}\cdot q_{v}$.
Throughout the proof, we identify a configuration~$U$ with the monomial~$x_{U}$.

We first prove the claim for the non-empty case. Let $U\subseteq V$
be a non-empty configuration. We would like to prove the weight of~$U$
is~$0$. Recall that the weight of~$U$ is the sum of the coefficients
of the occurrences of~$U$ that are generated by monomials $q_{v}$
\emph{that do not contain the corresponding vertex~$v$}. Observe
that Equation~\ref{eq:refutation} implies that the sum of the coefficients
of \emph{all }the occurrences of~$U$ is~$0$: the coefficient of~$U$
on the right-hand side is~$0$, and it must be equal to the coefficient
of~$U$ on the left-hand side, which is the sum of the coefficients
of all the occurrences.

To complete the proof, we argue that every monomial $q_{v}$ that
does contain the vertex~$v$ contributes~$0$ to that sum. Let $q_{v}$
be a monomial of~$Q_{v}$ that contains the vertex~$v$ and generates
an occurrence of~$U$. Let $\alpha$~be the coefficient of~$q$. 
Then, it must hold that
\begin{align}
\nonumber
A_{v}\cdot q_{v} & =x_{\pred(v)}\cdot q_{v}-x_{v}\cdot x_{\pred(v)}\cdot q_{v}\\
 & =x_{\pred(v)}\cdot q_{v}-x_{\pred(v)}\cdot q_{v}\\
 & =\alpha\cdot x_{U}-\alpha\cdot x_{U} \eqcomma
\nonumber
\end{align}
where the second equality holds since we $q_{v}$~contains~$v$
and we are working with a multilinear refutation, and the third equality
holds since we assumed that $q_{v}$ generates an occurrence of~$U$.
It follows that $q_{v}$~generates two occurrences of~$U$, one
with coefficient $\alpha$ and one with coefficient~$-\alpha$, and
therefore it contributes~$0$ to the sum of the coefficients of all\emph{
}the occurrences of~$U$.

We have shown that the sum of the coefficients of all\emph{ }the occurrences
of~$U$ is~$0$, and that the occurrences generated by monomials
$q_{v}$ that contain~$v$ contribute~$0$ to this sum, and therefore
the sum of coefficients of occurrences that are generated by monomials
$q_{v}$ that do not contain\emph{~$v$} must be~$0$, as required.
In the case that $U$ is the empty configuration, the proof is identical,
except that the sum of the coefficients of all occurrences is~$1$,
since the coefficient of~$\emptyset$ is~$1$ on the right hand
side of Equation~\ref{eq:refutation}.
\end{proof}

\section{Nullstellensatz Trade-offs from Reversible Pebbling}
\label{sec:revpeb}

\ifthenelse{\boolean{conferenceversion}}{
In this section we present the Nullstellensatz size-degree trade-offs we obtain for different degree regimes.}{In this section we prove Nullstellensatz refutation size-degree trade-offs for different degree regimes.} Let us first recall what is known with regards to degree and size. In what follows, a Nullstellensatz refutation of a CNF formula $F$ refers to a Nullstellensatz refutation of the translation of $F$ to polynomials.
As mentioned in the introduction, if a CNF formula over $n$ variables can be refuted in degree $d$ then it can be refuted in simultaneous degree $d$ and size $n^{\bigoh{d}}$. However, for Nullstellensatz it is not the case that strong enough degree lower bounds imply size lower bounds. 

A natural question is whether for any given function $d_1(n)$ there is a family of CNF formulas $\set{F_n}_{n=1}^{\infty}$  of size $\bigtheta{n}$ such that
\begin{enumerate}
\item $F_n$ has a Nullstellensatz refutation $d_1(n)$;
\item $F_n$ has a Nullstellensatz refutation of (close to) linear size and degree $d_2(n) \gg d_1(n)$;
\item Any Nullstellensatz refutation of $F_n$ in degree only slightly below $d_2(n)$ must have size nearly~$n^{d_1(n)}$.
\end{enumerate} 

We present explicit constructions of formulas providing such trade-offs in several different parameter regimes. We first show that there are formulas that require exponential size in Nullstellensatz if the degree is bounded by some polynomial function, but if we allow slightly larger degree there is a nearly linear size proof.

\begin{theorem}
  \label{th:NS-exp-trade-off-CS}
  There is a family of explicitly constructible 
  unsatisfiable $3$-CNF formulas $\set{F_n}_{n=1}^{\infty}$ 
  of size $\bigtheta{n}$
  \suchthat:
  \begin{enumerate}
  \item There is a Nullstellensatz refutation of $F_n$ in degree
    $d_1 = \Bigoh{\sqrt[6]{n}\log n}$.    
  \item
    For any constant $\epsilon > 0$, 
    there is a Nullstellensatz refutation of $F_n$ of size
    $\bigoh{n^{1+\epsilon}}$ and degree 
    $d_2 = \Bigoh{d_1 \cdot \sqrt[6]{n}} = \Bigoh{\sqrt[3]{n}\log n}.$
  \item 
    There exists a constant $K >0$ such that 
    any Nullstellensatz refutation of $F_n$ in
    degree at most $d = K d_2/\log n = \Bigoh{\sqrt[3]{n}}$
    must have size
    $\bigl( \sqrt[6]{n} \bigr)! \,$.
  \end{enumerate}
\end{theorem}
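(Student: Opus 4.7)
The plan is to apply Theorem~\ref{thm:pebbling-nullstellensatz-correspondence} to reduce the desired Nullstellensatz trade-off to a reversible pebbling time--space trade-off. It suffices to exhibit a family of constant-indegree DAGs $\{G_n\}$ of size $\Theta(n)$ such that (i) $G_n$ has a reversible pebbling in space $O(\sqrt[6]{n}\log n)$; (ii) for every constant $\epsilon>0$, $G_n$ has a reversible pebbling in time $O(n^{1+\epsilon})$ and space $O(\sqrt[3]{n}\log n)$; and (iii) every reversible pebbling of $G_n$ in space at most $K\sqrt[3]{n}$ uses time at least $(\sqrt[6]{n})!$. Since $G_n$ has bounded indegree, the pebbling contradiction $\pebcontr[G_n]{}$ is a 3-CNF formula of size $\Theta(n)$, and the correspondence translates each pebbling bound into a matching NS size/degree bound (up to an additive one on size).

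For the graph family, my plan is to instantiate the Carlson--Savage construction $\Gamma(c,r)$ of \cite{CS80GraphPebbling,CS82ExtremeTimeSpaceTradeoffs}, built from $r$ ``spines'' each of which collects outputs from many copies of a pyramid-like gadget of side length roughly $c$. Choosing $c$ and $r$ both of order $\sqrt[6]{n}$ up to polylogarithmic factors gives total vertex count $\Theta(n)$, and naturally places the cheap-space regime at $O(\sqrt[6]{n}\log n)$ and the cheap-time regime at space $O(\sqrt[3]{n}\log n)$. To support parts (i) and (ii), I would give two explicit reversible pebblings of $\Gamma(c,r)$: a depth-first recursive strategy that uses only $O(\sqrt[6]{n}\log n)$ pebbles but re-pebbles each pyramid many times, and a more pebble-hungry strategy that persists spine values so every pyramid is pebbled at most $n^{\epsilon}$ times, yielding $O(n^{1+\epsilon})$ moves. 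Both strategies are reversible with only a constant overhead, obtained by running each subroutine forward and then in reverse.

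The main obstacle is the lower bound~(iii). My plan is to mimic the classical Carlson--Savage argument, which associates to each pebble configuration a potential measuring how many spines are ``partially computed''; one then shows that in configurations of space below the threshold, advancing the potential requires fully pebbling a fresh pyramid, and no two such pyramid pebblings can overlap in time. The new aspect is reversibility: pebbles cannot be dropped, so removals are constrained symmetrically to placements. I expect this constraint to be benign for the Carlson--Savage argument, since the classical proof already tracks \emph{both} placements and removals of key pebbles; the extra reversibility condition only strengthens the accounting. I anticipate losing at most a constant or logarithmic factor relative to the standard-pebbling bound, which is precisely the slack between the $\sqrt[3]{n}$ in~(iii) and the $\sqrt[3]{n}\log n$ in~(ii).

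Once (i)--(iii) are established, Theorem~\ref{thm:pebbling-nullstellensatz-correspondence} immediately yields the three parts of the theorem: the cheap-space pebbling becomes the low-degree refutation, the cheap-time pebbling becomes the near-linear-size refutation at slightly larger degree, and the factorial pebbling-time lower bound becomes the factorial NS size lower bound in the intermediate degree regime. The only extra bookkeeping is verifying that the field plays no role, which is automatic from Theorem~\ref{thm:pebbling-nullstellensatz-correspondence} since the pebbling correspondence is field-independent.
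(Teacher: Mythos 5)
Your high-level route is the same as the paper's: apply Theorem~\ref{thm:pebbling-nullstellensatz-correspondence} and instantiate a Carlson--Savage-type graph to get the reversible pebbling trade-off. However, there are two concrete problems. First, your parameter choice is wrong. With both parameters of order $\sqrt[6]{n}$, the graph $\csdag{\csnspines}{\csrec}$ has $\Bigtheta{\csnspines\csrec^3+\csnspines^2\csrec^2} = \Bigtheta{n^{2/3}}$ vertices, not $\Bigtheta{n}$, and, worse, the Carlson--Savage time lower bound only applies for space up to roughly $\csnspines \approx \sqrt[6]{n}$, far below the $K\sqrt[3]{n}$ threshold required in item~(iii). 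The correct instantiation (the paper's) is $\csnspines = \sqrt[3]{n}$ and $\csrec = \sqrt[6]{n}$, which gives size $\Bigtheta{\csnspines^2\csrec^2} = \Bigtheta{n}$, a trade-off valid for space up to $\Bigtheta{\sqrt[3]{n}}$, and time lower bound governed by $\csrec! = \bigl(\sqrt[6]{n}\bigr)!$; one also needs the single-sink restriction and the multi-sink-to-single-sink reduction (Proposition~\ref{pr:multi-sink-CS}).

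Second, your treatment of the upper bound~(ii) relies on the claim that the standard strategies can be made reversible ``with only a constant overhead, obtained by running each subroutine forward and then in reverse.'' This is false as a general principle and it glosses over exactly the technical content of that item: converting an irreversible pebbling into a reversible one is what Bennett's simulation (Theorem~\ref{th:rev-simulation}) is for, and it costs time $t^{1+\epsilon}/s^{\epsilon}$ and space $\Bigtheta{s\log(t/s)}$ -- this is precisely where the $n^{1+\epsilon}$ size and the $\log n$ factor in $d_2$ come from. Likewise, the $\log n$ in $d_1$ arises because reversibly pebbling the length-$\Theta(\csnspines\csrec)$ spines requires $\Theta(\log(\csnspines\csrec))$ pebbles (Proposition~\ref{pr:rev-peb-price-path}), unlike the constant number sufficient in the standard game; your sketch does not account for this. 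Finally, your plan to re-derive the Carlson--Savage lower bound for reversible pebblings is unnecessary: every reversible pebbling is already a legal standard pebbling (reversibility only restricts removals), so the standard-pebbling trade-off of Theorem~\ref{th:CS-dag-trade-off-black-only} applies verbatim with no loss of factors, which is all the paper does for item~(iii).
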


We also analyse a family of formulas that can be refuted in close to logarithmic degree and show that even if we allow up to a certain polynomial degree, the Nullstellensatz size required is superpolynomial.

\begin{theorem}\label{th:NS-non-constant-trade-off-CS}
  Let 
  $\delta > 0$ be an arbitrarily small positive constant 
  and let $\csfunc(n)$ be any arbitrarily slowly growing monotone function
  $\littleomega{1} = \csfunc(n) \leq {n}^{1/4}$.
Then there is a family of explicitly constructible 
  unsatisfiable $3$-CNF formulas $\set{F_n}_{n=1}^{\infty}$ 
  of size $\bigtheta{n}$
  \suchthat:
  \begin{enumerate}
  \item 
    There is a Nullstellensatz refutation of $F_n$ in degree
    $d_1 = \csfunc(n)\log(n)$.
\item 
    For any constant $\epsilon > 0$, 
    there is a Nullstellensatz refutation of $F_n$ of size
    $\bigoh{n^{1+\epsilon}}$ and degree 
    \[ d_2 = \Bigoh{d_1\cdot n^{1/2}/g(n)^2} = \Bigoh{n^{1/2}\log n / g(n)}.\]
  \item 
    Any Nullstellensatz refutation of $F_n$ in
    degree at most 
    \[ d = \Bigoh{d_2/n^{\delta}\log n} = \Bigoh{n^{1/2-\delta}/g(n)}\]
    must have size superpolynomial in~$n$.
  \end{enumerate}
\end{theorem}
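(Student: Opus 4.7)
The plan is to derive this via Theorem~\ref{thm:pebbling-nullstellensatz-correspondence} from a reversible pebbling time--space trade-off on a suitable graph family~$\{G_n\}$. Concretely, it suffices to exhibit DAGs $G_n$ with $\bigtheta{n}$ vertices admitting (i)~a reversible pebbling in space $s_1 = \bigoh{g(n)\log n}$, (ii)~a reversible pebbling in time $\bigoh{n^{1+\epsilon}}$ and space $s_2 = \bigoh{n^{1/2}\log n/g(n)}$, and (iii)~no reversible pebbling in space at most $s = \bigoh{n^{1/2-\delta}/g(n)}$ that runs in polynomial time. Taking $F_n$ to be the pebbling contradiction $\pebcontr[G_n]{}$ and applying the correspondence then converts each of these statements directly into the three bullets of the theorem, since degree equals space and size equals time plus one.

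For the construction I would use Carlson--Savage-style graphs $\csdag{\csnspines}{\csrec}$ parametrized by number of spines~$\csnspines$ and recursion depth~$\csrec$, choosing $\csnspines$ and $\csrec$ as functions of $g(n)$ and $n$ so that $\Setsize{\vertices{\csdag{\csnspines}{\csrec}}} = \bigtheta{n}$. For upper bound~(i) one inspects the natural depth-first small-space pebbling of $\csdag{\csnspines}{\csrec}$ and verifies that it can be implemented reversibly with an $\bigoh{\log n}$ overhead (since a sub-pebbling on a subgraph of size~$m$ can be reversed by replaying it backwards through an $\bigoh{\log m}$-deep recursion, as in~\cite{Bennett89TimeSpaceReversible,LV96Reversibility}). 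For upper bound~(ii) I would use a level-by-level strategy that retains enough intermediate pebbles to sweep the graph in nearly linear time at the cost of space $\bigoh{\csnspines \csrec}$; tuning the split between $\csnspines$ and $\csrec$ so that $s_1\cdot s_2 \sim n^{1/2}\log^2 n$ yields the stated numerics.

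For the lower bound~(iii) the central observation is that the reversible pebble game is a restriction of the standard black pebble game: every legal reversible move, whether a placement or a removal, is also a legal standard move, so any reversible pebbling of~$G_n$ in space~$s$ and time~$t$ is simultaneously a standard pebbling in space~$s$ and time~$t$. Hence the classical time--space trade-off lower bounds for Carlson--Savage graphs of~\cite{CS80GraphPebbling,CS82ExtremeTimeSpaceTradeoffs,LT82AsymptoticallyTightBounds} transfer verbatim to the reversible setting, and for the relevant parameter regime they rule out polynomial-time reversible pebblings in space below $\bigoh{n^{1/2-\delta}/g(n)}$.

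The main obstacle is not any single conceptual step but the quantitative tuning. One must pick $\csnspines$ and $\csrec$ as functions of the slowly growing $g(n)$ and of $\epsilon, \delta$ so that the three bounds $s_1 \ll s \ll s_2$ interleave as claimed, so that the logarithmic reversibility overhead of~(i) fits inside the factor $g(n)$, and so that the strategy for~(ii) actually achieves time $\bigoh{n^{1+\epsilon}}$ rather than merely polynomial. I also need to confirm that the trade-off from~\cite{CS80GraphPebbling,CS82ExtremeTimeSpaceTradeoffs} is quantitatively strong enough on $\csdag{\csnspines}{\csrec}$ at these choices, which may require invoking a refined version of that trade-off or reworking the argument on the precise parameter regime, but no conceptually new pebbling technique should be required beyond what is already in the literature.
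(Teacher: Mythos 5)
Your proposal follows essentially the same route as the paper: Carlson--Savage graphs $\csdag{\csnspines}{\csrec}$ with $\csrec = g(n)$ and $\csnspines = n^{1/2}/g(n)$, the correspondence theorem to translate reversible pebbling time--space into Nullstellensatz size--degree, and the key observation that every reversible pebbling move is a legal standard pebbling move, so the Carlson--Savage lower bound of~\cite{CS82ExtremeTimeSpaceTradeoffs} transfers directly. That observation and the parameter choice are exactly what the paper uses.

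The one place where you are imprecise is the upper bound for item~(ii). You describe a direct ``level-by-level'' reversible strategy with space $\bigoh{\csnspines\csrec} = \bigoh{n^{1/2}}$, but this does not reproduce the theorem's $d_2 = \bigoh{n^{1/2}\log n/g(n)}$ in the regime $g(n) \gg \log n$, and it is unclear how a level-by-level sweep is made reversible without extra overhead, since you cannot discard pebbles on a finished level unless their predecessors are still present. The paper's route is cleaner and you should use it here too: take the \emph{standard} linear-time pebbling of $\csdag{\csnspines}{\csrec}$ in space $\bigoh{\csnspines+\csrec}$ (Lemma~\ref{lem:CS-dag-lin-time}) and apply Bennett's reversible simulation (Theorem~\ref{th:rev-simulation}), which costs an extra $\log(t/s)$ factor in space and a $(t/s)^{\epsilon}$ factor in time, giving precisely space $\bigoh{(\csnspines+\csrec)\log(\csnspines\csrec)} = \bigoh{n^{1/2}\log n/g(n)}$ and time $\bigoh{n^{1+\epsilon}}$. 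You already cite Bennett's simulation for item~(i); invoking it for item~(ii) as well closes the gap and makes the ``quantitative tuning'' you flag as the main obstacle go through without further work.
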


Still in the small-degree regime, we present a very robust trade-off in the sense that superpolynomial size lower bound holds for degree from $\log^2(n)$ to $n/\log(n)$. 

\begin{theorem}
  \label{th:NS-robust-trade-off-LT}
  There is a family of explicitly constructible 
  unsatisfiable $3$-CNF formulas $\set{F_n}_{n=1}^{\infty}$ 
  of size $\bigtheta{n}$
  \suchthat:
  \begin{enumerate}
  \item 
    There is a Nullstellensatz refutation of $F_n$ in degree
    $d_1 = \bigoh{\log^2 n}$.  
  \item
    For any constant $\delta > 0$, 
    there is a Nullstellensatz refutation of $F_n$ of size
    $\bigoh{n}$ and degree
    \[d_2 = \bigoh{d_1 \cdot n / \log^{3-\delta} n} = \bigoh{n/\log^{1-\delta} n}.\]
  \item 
    There exists a constant $K >0$ such that 
    any Nullstellensatz refutation of $F_n$ in
    degree at most $d = K d_2 /  \log^\delta n = O(n / \log n)$ 
    must have size $n^{\bigomega{\log \log n}}$.
  \end{enumerate}
\end{theorem}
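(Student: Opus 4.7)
The plan is to invoke Theorem~\ref{thm:pebbling-nullstellensatz-correspondence} and reduce the claim to exhibiting a family of DAGs $\{G_n\}$ of $\Theta(n)$ vertices, constant fan-in, and a unique sink, such that: (a) $G_n$ admits a reversible pebbling of space $O(\log^2 n)$; (b) for every constant $\delta > 0$, $G_n$ admits a reversible pebbling of time $O(n)$ and space $O(n/\log^{1-\delta} n)$; and (c) any reversible pebbling of $G_n$ using space at most $K \cdot n/\log n$ requires time $n^{\Omega(\log\log n)}$. The translation of size $\mapsto$ time and degree $\mapsto$ space is exact. The natural candidates are the recursive DAGs of Lengauer and Tarjan~\cite{LT82AsymptoticallyTightBounds}, built by composing bounded-fan-in gadgets across $\Theta(\log n)$ levels of recursion and known to exhibit the sharpest time-space trade-offs in the standard pebble game.

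The lower bound (c) comes essentially for free. Every reversible pebbling is, in particular, a legal standard pebbling, because the reversible placement/removal rules are strictly more restrictive than the standard ones: in standard pebbling a pebble may be removed at will, whereas reversibility additionally demands that all predecessors be present at the moment of removal. Hence any time lower bound against standard pebblings in a given space regime transfers verbatim to reversible pebblings. The Lengauer--Tarjan trade-off theorem gives precisely the bound needed: their graphs cannot be pebbled in space $O(n/\log n)$ in time less than $n^{\Omega(\log\log n)}$.

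The upper bounds (a) and (b) are the main technical content. For (a) I would design a recursive reversible strategy that uses only $O(\log n)$ pebbles per level of recursion and unwinds each recursive call symmetrically, so that every placement has a matching legal removal; summed over $O(\log n)$ levels this yields $O(\log^2 n)$. For (b) I would use a more aggressive parallel strategy that keeps of the order of $n/\log^{1-\delta} n$ pebbles simultaneously live, processes the graph in one topological sweep (giving time $O(n)$), and is then time-reversed to satisfy reversibility. The parameters $d_1 = O(\log^2 n)$ and $d_2 = O(n/\log^{1-\delta} n)$ are precisely what the standard Lengauer--Tarjan analysis delivers at the extremes of its recursion.

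The chief obstacle is exactly the mismatch flagged in the introduction, namely that standard pebbling upper bound strategies are not automatically reversible: they freely remove pebbles whose predecessors are absent, which is illegal under reversibility. The engineering task is to re-organize the recursive and parallel strategies so that every placement can be undone when its predecessor set is still intact, paying at most a factor that is absorbed by the slack $\log^\delta n$ sitting between $d_2$ and the lower-bound threshold $d$. Correctly tuning the branching factor and recursion depth of the Lengauer--Tarjan construction so that this reversibility overhead fits inside the available slack, while keeping the graph of linear size and constant fan-in, is the crux of the argument.
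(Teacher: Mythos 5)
Your high-level plan is the one the paper uses: invoke Theorem~\ref{thm:pebbling-nullstellensatz-correspondence} to reduce the problem to reversible pebbling, take the Lengauer--Tarjan family of stacks of linear-size superconcentrators, and note that the time lower bound against standard pebblings in Theorem~\ref{th:ssc-special-case} transfers to reversible pebblings because every reversible pebbling is a legal standard pebbling. Your sketch of the space upper bound also matches the paper in spirit: Proposition~\ref{pr:rev-upper-bound-indegree-depth-tight} makes precise your ``$O(\log n)$ pebbles per level, $O(\log n)$ levels'' estimate via a depth-based bound (space at most $2 \cdot \mathrm{depth} + 1$ for indegree-$2$ DAGs), and the stack of $\log n$ superconcentrators of depth $O(\log n)$ each has depth $O(\log^2 n)$.

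However, your argument for the second upper bound --- the near-linear-time, $O(n/\log^{1-\delta} n)$-space reversible pebbling --- has a genuine gap. You propose to take a standard topological-sweep pebbling and then ``time-reverse it to satisfy reversibility,'' but time-reversing a standard pebbling does not yield a reversible pebbling: a standard pebbling removes pebbles freely, so when you run it backwards you would be placing pebbles on vertices whose predecessors are not all present, which is illegal. What the paper actually does here is apply the Bennett/Levine--Sherman reversible simulation (Theorem~\ref{th:rev-simulation}): given a standard pebbling in space $s$ and time $t$, it produces a reversible pebbling in time $t^{1+\epsilon}/s^{\epsilon}$ and space $\epsilon(2^{1/\epsilon}-1)\,s\,\log(t/s)$. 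Choosing $\epsilon = 1/(\delta \log\log n)$ applied to the linear-time, $O(n/\log n)$-space standard pebbling from Theorem~\ref{th:ssc-special-case} gives reversible time $O(n \cdot 2^{1/\delta})$ and space $O(n/\log^{1-\delta} n)$, which is precisely item 2. Without this simulation theorem (or an explicit replacement for it), your upper bound for item 2 does not go through, and the ``engineering task'' you flag at the end is not merely a matter of tuning parameters --- it requires the additive $\log$-factor blow-up in space and polynomial blow-up in time that Bennett's recursive checkpointing scheme incurs.
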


Finally, we study a family of formulas that have Nullstellensatz refutation of quadratic size and that present a smooth size-degree trade-off.

\begin{theorem}\label{th:NS-permutation}
    There is a family of explicitly constructible 
    unsatisfiable $3$-CNF formulas $\set{F_n}_{n=1}^{\infty}$ 
    of size $\bigtheta{n}$  
    such that any
    Nullstellensatz refutation of $F_n$
    that optimizes
    size given degree constraint 
    $d=n^{\bigtheta{1}}$ (and less than $n$) 
    has size
    $
    \Bigtheta{n^2 / d}
    $. 
  \end{theorem}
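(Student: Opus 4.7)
My plan is to deduce Theorem~\ref{th:NS-permutation} from Theorem~\ref{thm:pebbling-nullstellensatz-correspondence} together with a matching reversible-pebbling time-space trade-off on a suitable DAG family. Take $\{G_n\}$ to be the family of constant fan-in permutation graphs of Lengauer and Tarjan on $\Theta(n)$ vertices, which are known to satisfy the sharp trade-off $\mathrm{time} \cdot \mathrm{space} = \Theta(n^2)$ uniformly across the polynomial range of the space parameter for the standard pebble game. Let $F_n = \pebcontr[G_n]{}$; this is an unsatisfiable $3$-CNF of size $\Theta(n)$, and by Theorem~\ref{thm:pebbling-nullstellensatz-correspondence} the minimum Nullstellensatz refutation size of $F_n$ in degree $s$ equals, up to $+1$, the minimum reversible-pebbling time of $G_n$ in space $s$. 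So it suffices to show that this minimum reversible time is $\Theta(n^2/s)$ for $s = n^{\Theta(1)}$, $s < n$.

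For the lower bound, observe that every reversible pebbling is in particular a legal standard pebbling, since reversibility only tightens the rule for removing pebbles and leaves the placement rule unchanged. Hence the Lengauer--Tarjan lower bound of $\Omega(n^2/s)$ on standard pebbling time in space $s$ carries over verbatim to reversible pebblings, giving a Nullstellensatz size lower bound of $\Omega(n^2/d)$ in degree~$d$.

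For the upper bound I plan to adapt the standard Lengauer--Tarjan strategy, which processes the $\Theta(n)$ output vertices of $G_n$ in $\Theta(n/s)$ batches, each batch handling $\Theta(s)$ outputs using $O(n)$ pebble moves and $O(s)$ pebbles. To make each batch reversible, I will pebble its subcomputation along a topological order of the involved vertices using $O(s)$ pebbles, transfer the $\Theta(s)$ relevant output pebbles into a dedicated persistent buffer, and then reversibly uncompute all intermediate pebbles by removing them in reverse topological order; at every such removal the predecessor condition needed by the reversible rule is met because the predecessors are still pebbled from the forward sweep. Summing over the $\Theta(n/s)$ batches yields total reversible time $O(n^2/s)$ in space $O(s)$, and Theorem~\ref{thm:pebbling-nullstellensatz-correspondence} then gives a Nullstellensatz refutation of size $O(n^2/d)$ in degree $d$. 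The main obstacle is verifying that the Lengauer--Tarjan construction really admits such a ``batched, reversibly sweepable'' pebbling with no asymptotic overhead---specifically, that the layered structure of the permutation graph can be exploited so that every removal in the uncomputation phase is a legal reversible move at the cost of only a constant-factor blow-up in pebbles.
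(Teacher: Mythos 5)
The high-level plan is right and matches the paper: take bit-reversal permutation graphs $G_n$, form the pebbling formula, invoke Theorem~\ref{thm:pebbling-nullstellensatz-correspondence}, and reduce to a reversible-pebbling time-space trade-off of the form $\Theta(n^2/s)$. Your lower bound argument is exactly the paper's (every reversible pebbling is a legal standard pebbling, so the Lengauer--Tarjan $\Omega(n^2/s)$ lower bound carries over directly; this is item~\ref{item:3-permutation} of Theorem~\ref{th:constant-space-trade-offs}).

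The gap is in the upper bound. The ``topological sweep, buffer the outputs, then uncompute in reverse order'' plan, as stated, cannot stay within space $O(s)$. The bottom line of a permutation graph is a path on $\Theta(n)$ vertices; to place a pebble on $x_i$ you must traverse $x_1,\ldots,x_i$. If you keep every pebble placed during the forward sweep (so that ``predecessors are still pebbled'' when you reverse), the space used is $\Theta(n)$, not $O(s)$; if instead you slide pebbles along the line as in the standard Lengauer--Tarjan strategy, the removals are not legal reversible moves and the sweep is not invertible. So the crux of the argument --- how to pebble a long line segment reversibly with few pebbles --- is precisely what your sketch leaves open, and the obstacle you flag at the end is not a verification detail but the actual content of the proof. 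The paper resolves it with two ingredients you do not use: first, a Bennett-style reversible line-pebbling bound (Proposition~\ref{pr:pebbling-the-line-in-linear-time}, derived from Theorem~\ref{th:rev-simulation}) showing that a line of length $m$ can be reversibly pebbled with $2km^{1/k}$ pebbles in time $2^k m$; second, a two-phase scheme (Theorem~\ref{th:constant-space-trade-offs}) that first installs $n^{1/k}$ fixed pebbles at equal spacing on the bottom line, and then reversibly pebbles the top line, re-deriving each needed bottom-line vertex from the nearest fixed pebble and uncomputing afterwards. The $k 2^{2k}$ overhead this incurs is also why the theorem's hypothesis $d = n^{\Theta(1)}$ is needed --- it forces $k = O(1)$ so the overhead is a constant --- a point your proposal does not account for.
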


We prove these results by obtaining the analogous time-space trade-offs for reversible pebbling and then applying the tight correspondence between size and degree in Nullstellensatz and time and space in reversible pebbling.
\ifthenelse{\boolean{conferenceversion}}{We differ the reader to the upcoming full version for the details.}{
\subsection{Reversible Pebbling Time-Space Trade-offs}

Our strategy for proving reversible pebbling trade-offs will be to analyse standard pebbling trade-offs.
Clearly lower bounds from standard pebbling transfer to reversible pebbling; the next theorem shows how, in a limited sense, we can also transfer \emph{upper bounds}.
It is based on a reversible simulation of irreversible computation proposed by~\cite{Bennett89TimeSpaceReversible} and analysed precisely in~\cite{LS90NoteOnBennett}.

\begin{theorem}[\cite{Bennett89TimeSpaceReversible,LS90NoteOnBennett}]\label{th:rev-simulation}
Let $G$ be an arbitrary DAG  and suppose $G$ can be pebbled (in the standard way) 
using $\spacestd$ pebbles in time $t\geq 2\spacestd$. Then for any $\epsilon >0$, $G$ can be reversibly pebbled in time ${t^{1+\epsilon}/\spacestd^{\epsilon}}$ using  $\epsilon (2^{1/\epsilon} -1) \, \spacestd \, {\log (t/\spacestd)}$ pebbles.
\end{theorem}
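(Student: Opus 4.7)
The plan is to apply Bennett's classical reversible simulation~\cite{Bennett89TimeSpaceReversible} (with the refined analysis of~\cite{LS90NoteOnBennett}), adapted to the pebbling setting. Let $\pebbling^{\mathrm{irr}} = (\pconf_0, \pconf_1, \ldots, \pconf_t)$ be the given standard pebbling, so that $\setsize{\pconf_i} \leq \spacestd$ for every~$i$ and the sink $z$ appears in some $\pconf_i$. Fix $k = \lceil 2^{1/\epsilon}\rceil$ and $r = \lceil \epsilon \log_2(t/\spacestd)\rceil$. I will define a family of reversible procedures $R_0, R_1, \ldots, R_r$ where $R_j$ reversibly transforms an on-graph configuration $\pconf_a$ (together with some set $X$ of other pebbles kept frozen) into the on-graph configuration $\pconf_b$, for any pair of indices with $b - a = t/k^{r-j}$.

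The key construction is recursive. Procedure $R_j$ splits its interval $[a,b]$ into $k$ equal sub-intervals with breakpoints $a = c_0 < c_1 < \cdots < c_k = b$ and then, for $i = 1, \ldots, k$, executes first a call to $R_{j-1}$ that produces $\pconf_{c_i}$ while leaving $\pconf_{c_{i-1}}$ in place, followed (for $i \leq k-1$) by the reverse call $R_{j-1}^{-1}$ that removes $\pconf_{c_{i-1}}$ while leaving $\pconf_{c_i}$ in place. The invariant that only two adjacent checkpoints ever coexist on the graph at the level of a single $R_j$ call is what keeps the space overhead logarithmic. The base case $R_0$ realises a segment of the original pebbling of length at most $\spacestd$.

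The claimed bounds now come from unwinding two standard recurrences. The time recurrence $T(j) = (2k-1)T(j-1)$ with $T(0) = \bigoh{\spacestd}$ gives $T(r) \leq (2k-1)^r \cdot \bigoh{\spacestd} \leq \spacestd\cdot (t/\spacestd)^{1+\epsilon} = t^{1+\epsilon}/\spacestd^{\epsilon}$, where I use $\log_k(2k-1) \leq 1 + \log_k 2 = 1 + \epsilon$. The space recurrence $S(j) = S(j-1) + (k-1)\spacestd$, reflecting that at each recursion level up to $k-1$ checkpoints (each of size at most $\spacestd$) may coexist with the pebbles of the nested call, unwinds to $S(r) \leq (k-1)\spacestd\cdot r = \epsilon(2^{1/\epsilon}-1)\,\spacestd\log(t/\spacestd)$, matching the theorem.

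The main obstacle will be the base case $R_0$: an arbitrary short segment of an \emph{irreversible} pebbling may remove a pebble whose predecessors are absent, violating the reversible-removal rule~\ref{item:reversible-removal}, so one cannot simply ``replay'' the original pebbling forward and backward. I would handle this by replacing the base segment with a direct bottom-up reversible construction that builds $\pconf_b$ from $\pconf_a$ in a topologically sorted manner using $\bigoh{\spacestd}$ scratch pebbles and $\bigoh{\spacestd}$ time, exploiting the fact that $\setsize{\pconf_a},\setsize{\pconf_b} \leq \spacestd$ so that each vertex in the symmetric difference can be (un)pebbled with all its predecessors on the graph. Once this base case is certified reversible and composable, the remainder of the argument is routine bookkeeping of the recursion.
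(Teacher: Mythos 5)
The paper does not actually prove \refth{th:rev-simulation} --- it is cited from \cite{Bennett89TimeSpaceReversible,LS90NoteOnBennett} --- so the comparison is to the known Bennett/Levine--Sherman argument rather than to a proof in this paper. Your high-level plan (the $k$-ary checkpoint recursion with $r=\log_k(t/\spacestd)$ levels, $(2k-1)$ recursive calls per level, and $k=2^{1/\epsilon}$) is the right one, and the time recurrence and the final algebra are sound. However, there are two concrete problems.

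First, the base case does not work as you describe it. To transform $\pconf_a$ into $\pconf_b$ you would have to pebble and unpebble the vertices in the symmetric difference, but a vertex $v \in \pconf_b \setminus \pconf_a$ may have predecessors that lie neither in $\pconf_a$ nor in $\pconf_b$, and those predecessors may themselves have unpebbled predecessors, and so on all the way down to the sources. The bound $\setsize{\pconf_a},\setsize{\pconf_b}\leq\spacestd$ gives no control over this, so ``each vertex in the symmetric difference can be (un)pebbled with all its predecessors on the graph'' is simply false; in the worst case your ``direct bottom-up construction'' costs space and time proportional to the depth of the DAG, not to $\spacestd$. The correct base case is a \emph{forward replay}: starting from $\pconf_a$, execute the placement moves (but not the removal moves) of the original pebbling for the steps in $[a,b]$. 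Every placement is legal and reversible because we never remove anything, so this is a valid reversible sub-procedure; its working configuration is the ``smear'' $\bigcup_{i=a}^{b}\pconf_i$, of size at most $\setsize{\pconf_a} + (b-a) \leq 2\spacestd$ when $b-a \leq \spacestd$, and it is cleaned up by reversal.

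Second, your recursion is internally inconsistent. You assert that after the call $R_{j-1}$ producing $\pconf_{c_i}$ from $\pconf_{c_{i-1}}$, a ``reverse call $R_{j-1}^{-1}$ removes $\pconf_{c_{i-1}}$ while leaving $\pconf_{c_i}$ in place.'' But the reverse of $R_{j-1}(c_{i-1},c_i)$ maps $\set{\pconf_{c_{i-1}},\pconf_{c_i}}$ back to $\set{\pconf_{c_{i-1}}}$, i.e.\ it removes $\pconf_{c_i}$, not $\pconf_{c_{i-1}}$. Removing $\pconf_{c_{i-1}}$ while keeping $\pconf_{c_i}$ would require a reversible procedure that derives $\pconf_{c_{i-1}}$ \emph{from} $\pconf_{c_i}$, i.e.\ that runs the irreversible pebbling backwards in time, and this is not generally legal (a backward step may call for placing a pebble on a vertex with absent predecessors). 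Relatedly, your stated invariant (``only two adjacent checkpoints ever coexist'') would give space overhead $\spacestd$ per level and hence total space $\bigoh{\spacestd\log_k(t/\spacestd)}$, which contradicts your own recurrence $S(j)=S(j-1)+(k-1)\spacestd$. The standard Bennett recursion does not satisfy your invariant: its first pass lays down all of $\pconf_{c_1},\ldots,\pconf_{c_{k-1}},\pconf_{c_k}$ before the second (cleanup) pass removes the intermediate ones, and it is precisely this that accounts for the $(k-1)\spacestd$ overhead per level and hence the $\epsilon(2^{1/\epsilon}-1)\spacestd\log(t/\spacestd)$ in the statement. Once the recursion is described correctly (first pass of $k$ forward calls, second pass of $k-1$ reverse calls) and the base case is replaced by the forward-replay procedure, the rest of your bookkeeping goes through.
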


We also use the following general proposition, which allows upper bounding the reversible pebbling price of a graph by using its depth and maximum in-degree.

\begin{proposition}
  \label{pr:rev-upper-bound-indegree-depth-tight}
  Any DAG with maximum indegree $\indegreedag$ and depth $\depthdag$ has
  a persistent reversible pebbling strategy in space at most $\depthdag \indegreedag + 1$.
\end{proposition}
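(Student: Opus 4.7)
The plan is to prove, by induction on $\depthdag$, the slightly stronger statement that every vertex $v$ at depth at most $\depthdag$ in a DAG of maximum indegree at most $\indegreedag$ admits a persistent reversible pebbling (one ending with only $v$ pebbled) using at most $\depthdag \indegreedag + 1$ pebbles. Applying this to the sink gives the proposition. The base case $\depthdag = 0$ is trivial: $v$~is a source, so $\prednode[G]{v} = \emptyset$, and the single move ``place a pebble on $v$'' is a persistent pebbling using $1 = 0 \cdot \indegreedag + 1$ pebbles.

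For the inductive step, let $v$ have predecessors $u_1, \ldots, u_k$ with $k \leq \indegreedag$, each of depth at most $\depthdag - 1$. By the induction hypothesis each $u_i$ admits a persistent reversible pebbling $\pi_i$ using at most $(\depthdag-1)\indegreedag + 1$ pebbles. The strategy for $v$ is to execute $\pi_1, \pi_2, \ldots, \pi_k$ in sequence (after $\pi_i$ finishes the configuration contains pebbles on $u_1, \ldots, u_i$), then place a pebble on $v$, and finally run $\pi_k, \pi_{k-1}, \ldots, \pi_1$ in reverse, leaving only the pebble on $v$. Legality is immediate: each move inside $\pi_i$ (forward or reverse) only consults the predecessors of the vertex being acted on, so the ``dormant'' pebbles on $u_1, \ldots, u_{i-1}$ (and later on $v$) do not interfere with the reversibility conditions of $\pi_i$.

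For the space analysis there are three phases to bound. During the forward execution of $\pi_i$ we carry $i-1 \leq \indegreedag - 1$ dormant pebbles on top of the at most $(\depthdag-1)\indegreedag + 1$ pebbles of $\pi_i$, giving at most $\depthdag \indegreedag$. After placing the pebble on $v$ we use $k+1 \leq \indegreedag + 1$ pebbles. During the reverse of $\pi_i$ the dormant pebbles are $v$ together with $u_1, \ldots, u_{i-1}$, at most $\indegreedag$ in total, again on top of $\pi_i$'s at most $(\depthdag-1)\indegreedag + 1$ pebbles, giving the bottleneck $\indegreedag + (\depthdag-1)\indegreedag + 1 = \depthdag \indegreedag + 1$. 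There is no real obstacle here — the argument is a standard recursive construction — so the only thing to be careful about is accounting for the pebble on $u_i$ exactly once (it belongs to the budget of $\pi_i$, not to the dormant count), which is what makes the final bound $\depthdag \indegreedag + 1$ rather than $\depthdag \indegreedag + 2$.
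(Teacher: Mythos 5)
Your proof is correct and follows essentially the same inductive strategy as the paper's: the paper singles out the last predecessor and recasts its treatment as a ``visiting'' pebbling with the sink move inserted at the midpoint, but when unwound the move sequence is identical to yours, and the three-phase space accounting matches as well. One small point that both you and the paper gloss over: ``legality is immediate'' is not quite right as stated, since a dormant pebble on $u_j$ may block $\pi_i$ (for $i>j$) from placing a pebble on $u_j$ whenever $u_j$ happens to be an ancestor of $u_i$ in the DAG (placement on a non-empty vertex is disallowed); the standard fix is to run the modified pebbling that simply omits any placement or removal on a vertex carrying a dormant pebble, which remains legal and never increases space, so the bound is unaffected.
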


\begin{proof}
We will use the fact that if a graph has a persistent reversible strategy in space $s$ then it has a visiting reversible strategy in space $s$.

The proof is by induction on the depth. 
For $\depthdag=0$ we can clearly persistently reversibly pebble the graph with $1$ pebble.
For $\depthdag \geq 1$, we persistently reversibly pebble all but one of the (that is, at most $\indegreedag-1$) immediate predecessors of the sink one at a time. By the induction hypothesis, this can be done with at most $\indegreedag - 2 + (\depthdag-1)\indegreedag + 1 = \depthdag \indegreedag - 1$ pebbles. At this point there are at most $\indegreedag-1$ predecessors of the sink which are pebbled and no other pebbles on the graph. Let $v$ be the only non-pebbled predecessor of the sink. We do a visiting reversible pebbling of $v$ until a pebble is placed on $v$. We now pebble the sink and then reverse the visiting pebbling of $v$ until the subtree rooted at $v$ has no pebbles on it. By the induction hypothesis, this can be done with at most $\indegreedag + (\depthdag - 1)\indegreedag + 1 = \depthdag\indegreedag + 1$ pebbles. All that is left to do is to to remove the $\indegreedag - 1$ pebbles which are on predecessors of the sink. Again by the induction hypothesis, this can be done with $\indegreedag + (\depthdag - 1)\indegreedag + 1$ pebbles.
\end{proof}

\subsection{Carlson-Savage Graphs}\label{sec:carlson-savage}

The first family of graphs for which we present reversible pebbling trade-offs consists of the so-called Carlson-Savage graphs, which are illustrated in Figure~\ref{fig:graph-cs-induction} and are defined as follows. 

\begin{definition}[Carlson-Savage graph
  \cite{CS80GraphPebbling, CS82ExtremeTimeSpaceTradeoffs,
    Nordstrom12RelativeStrength}]
  \label{def:CS-dag}
  The two-parameter graph family
  $\csdag{\csnspines}{\csrec}$,
  for
  $\csnspines, \csrec \in \Nplus$,
  is defined
  by induction over $\csrec$.
The base case
  $\csdag{\csnspines}{1}$
  is a DAG consisting of 
  two sources
  $\sourcestd_1,\sourcestd_2$
  and
  $\csnspines$ sinks
  $\cssink[1], \ldots, \cssink[\csnspines]$
  with directed edges
  $(\sourcestd_i, \cssink[j])$,
  for $i=1, 2$ and $j= 1, \ldots, \csnspines$,
  \ie edges from both sources to all sinks.
The graph
  $\csdag{\csnspines}{\csrec + 1}$
  has
  $\csnspines$ sinks and is
  built from the following components:
  \begin{itemize}
  \item
    $\csnspines$ disjoint copies
    $\pyramidgraph[\csrec]^{\graphcopyindex{1}},
    \ldots, 
    \pyramidgraph[\csrec]^{\graphcopyindex{\csnspines}}$
    of a pyramid graph
of height $\csrec$.
    
  \item
    one copy of
    $\csdag{\csnspines}{\csrec}$.
    
  \item
    $\csnspines$ disjoint and identical line graphs called \introduceterm{spines}, where each spine is composed of $ \csrec$ \introduceterm{sections}, and
    every section contains $2 \csnspines$ vertices.
 
  \end{itemize}
  
  The above components are connected as follows: In every section of every spine,
  each of the first $c$ vertices has an incoming edge from the sink of one of the first $c$ pyramids,
  and each of the last $c$ vertices has an incoming edge from one of the sinks of $\csdag{\csnspines}{\csrec}$
 (where different vertices in the same section are connected to different sinks).
   
\end{definition}

\begin{figure}[tp]
  \centering
  \includegraphics{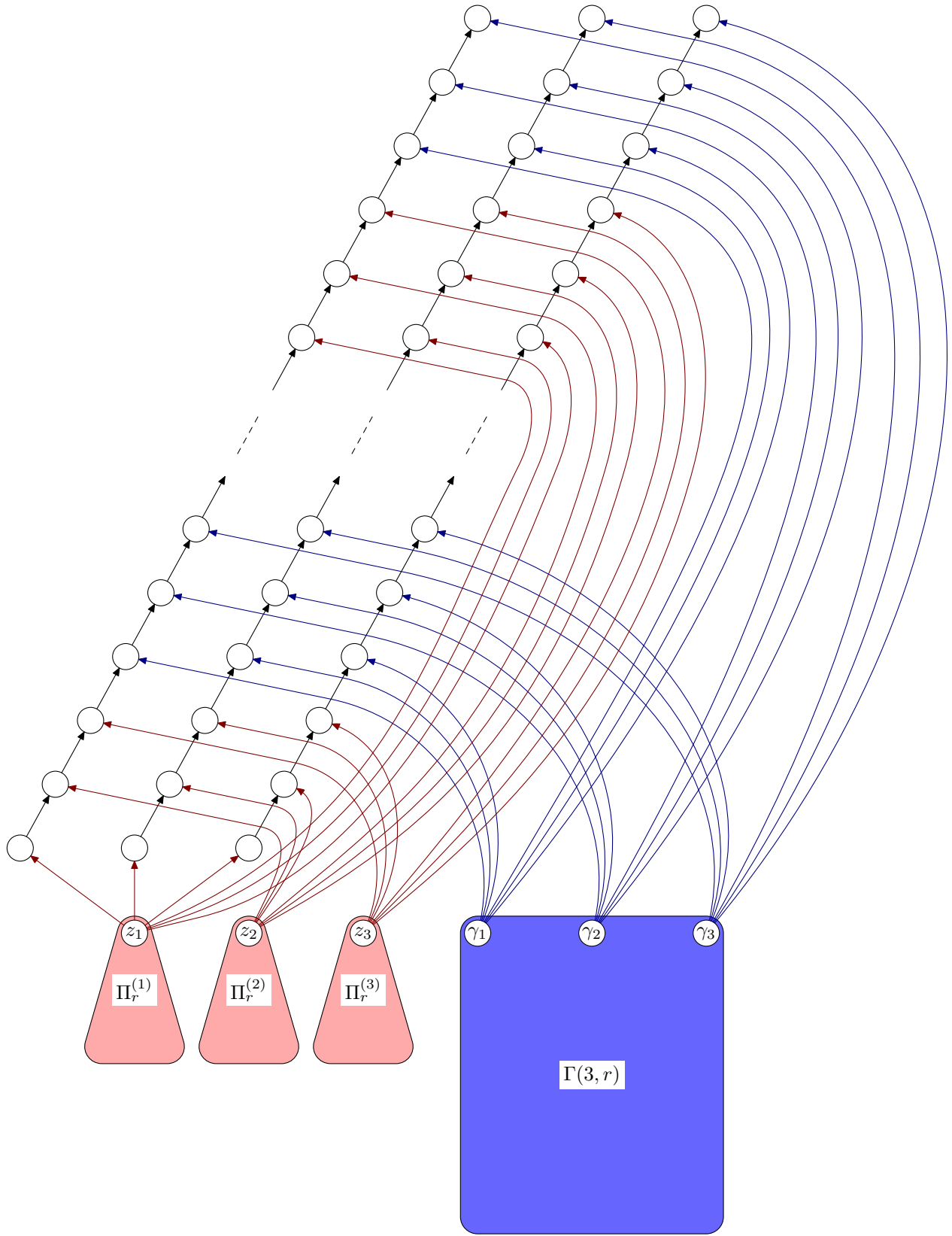}
  \caption{Inductive definition of Carlson-Savage graph $\csdag{3}{\csrec+1}$
    with $3$ spines and sinks.}
  \label{fig:graph-cs-induction}
\end{figure}

Note that $\csdag{\csnspines}{\csrec}$ has multiple sinks. We define a (reversible) pebbling of a multi-sink graph to be a (reversible) pebbling that places pebbles on each sink at some point (the pebbles do not need to be present in the last configuration). 
Let $\Gamma'(\csnspines,\csrec)$ be the single-sink subgraph of 
$\csdag{\csnspines}{\csrec}$ consisting of all vertices 
that reach the first sink of $\csdag{\csnspines}{\csrec}$.
Since all sinks are symmetric, pebbling $\Gamma'(\csnspines,\csrec)$
and pebbling $\csdag{\csnspines}{\csrec}$ are almost equivalent.

\begin{proposition}\label{pr:multi-sink-CS}
Any (reversible) pebbling $\pebbling$ of $\csdag{\csnspines}{\csrec}$
induces a (reversible) pebbling $\pebbling'$ of $\Gamma'(\csnspines,\csrec)$
in at most the same space and the same time.
From any (reversible) pebbling $\pebbling'$ of $\Gamma'(\csnspines,\csrec)$ 
we can obtain (reversible) pebbling $\pebbling$ of $\csdag{\csnspines}{\csrec}$ by (reversibly) pebbling each sink of $\csdag{\csnspines}{\csrec}$ one at a time, that is, simulating $\pebbling'$ $\csnspines$ times, once for each sink. Note that $\pebspace{\pebbling} = \pebspace{\pebbling'}$ and $\pebtime{\pebbling} = \csnspines \cdot \pebtime{\pebbling'}$.
\end{proposition}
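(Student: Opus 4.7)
The plan is to prove the two directions separately; the forward direction requires a projection argument, while the backward direction is an essentially trivial concatenation.

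For the forward direction, given a (reversible) pebbling $\pebbling = (\pconf_0, \ldots, \pconf_\stoptime)$ of $\csdag{\csnspines}{\csrec}$, I would define $\pebbling'$ by restricting each configuration to $\Vertices{\Gamma'(\csnspines, \csrec)}$ and then collapsing any pair of consecutive identical restricted configurations into a single configuration. The crucial structural fact is that $\Vertices{\Gamma'(\csnspines, \csrec)}$ is ancestor-closed in $\csdag{\csnspines}{\csrec}$: by definition every vertex of $\Gamma'$ reaches the first sink, hence every predecessor of such a vertex also reaches the first sink, so it too lies in $\Gamma'$. This means that whenever $\pebbling$ performs a placement or removal on a vertex $v \in \Vertices{\Gamma'}$, the precondition that $\prednode[]{v}$ be entirely pebbled is preserved after restriction, because those predecessors are themselves in $\Gamma'$ and so their pebble status is unchanged. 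Moves of $\pebbling$ on vertices outside $\Vertices{\Gamma'}$ simply yield repeated configurations, which we discard. Since $\pebbling$ places a pebble on the first sink at some step, $\pebbling'$ does too, so $\pebbling'$ is a valid (reversible) pebbling of $\Gamma'$. The projection only drops pebbles, hence $\pebspace{\pebbling'} \leq \pebspace{\pebbling}$; the collapse only drops steps, hence $\pebtime{\pebbling'} \leq \pebtime{\pebbling}$.

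For the backward direction, by the symmetry of the Carlson-Savage construction, for each $j \in [\csnspines]$ the subgraph of all vertices in $\csdag{\csnspines}{\csrec}$ that reach sink~$j$ is isomorphic to $\Gamma'(\csnspines, \csrec)$. I would build $\pebbling$ as the concatenation of $\csnspines$ phases, where in phase~$j$ we simulate $\pebbling'$ on the subgraph leading to sink~$j$. Since a reversible pebbling starts and ends with the empty configuration, the whole graph is empty between phases, so the phases can be concatenated without interference, and each sink is pebbled at some point during its phase. This gives $\pebspace{\pebbling} = \pebspace{\pebbling'}$ and $\pebtime{\pebbling} = \csnspines \cdot \pebtime{\pebbling'}$, as claimed.

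The only non-routine point is the projection argument in the forward direction, and even there the ancestor-closure of $\Vertices{\Gamma'}$ makes the verification immediate; I do not expect a real obstacle.
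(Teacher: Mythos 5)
Your proof is correct and matches the paper's intended argument; the paper treats this proposition as self-evident and provides no separate proof, with the backward-direction construction already described in the statement itself. Your ancestor-closure observation for the forward direction (so that $\prednode[]{v} \subseteq \Vertices{\Gamma'(\csnspines,\csrec)}$ for every $v \in \Vertices{\Gamma'(\csnspines,\csrec)}$, making the restriction-and-collapse a legal pebbling) and the symmetry-based concatenation for the backward direction supply exactly the details the paper leaves implicit.
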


Carlson and Savage proved the following properties of this graph.

\begin{lemma}[\cite{CS82ExtremeTimeSpaceTradeoffs}]
  \label{lem:CS-dag-easy-prop}
  The graphs
  $\csdag{\csnspines}{\csrec}$
  are of size
  $
\Bigtheta{\csnspines \csrec^3 + \csnspines^2 \csrec^2}$, have in-degree $2$,
  and have standard pebbling price
$\csrec + 2$.
\end{lemma}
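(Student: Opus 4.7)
Size and in-degree both follow by unrolling Definition~\ref{def:CS-dag}. Letting $N(\csrec)$ denote $|V(\csdag{\csnspines}{\csrec})|$, the base case gives $N(1) = \csnspines + 2$ and for $\csrec \geq 1$
\[
N(\csrec+1) \;=\; N(\csrec) \;+\; \Theta(\csnspines\csrec^{2}) \;+\; 2\csnspines^{2}\csrec,
\]
since the $\csnspines$ height-$\csrec$ pyramids contribute $\Theta(\csnspines\csrec^{2})$ vertices and the $\csnspines$ spines (each a path of $\csrec\cdot 2\csnspines$ vertices) contribute $2\csnspines^{2}\csrec$. Telescoping gives $N(\csrec)=\Theta(\csnspines\csrec^{3}+\csnspines^{2}\csrec^{2})$. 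A parallel induction handles in-degree: the base case sinks have in-degree exactly $2$ and sources in-degree $0$; in the inductive step every pyramid vertex has in-degree $\leq 2$ by construction, every spine vertex has at most one predecessor along the spine plus at most one external predecessor (either a pyramid sink or a sink of the nested $\csdag{\csnspines}{\csrec}$, by the prescription of the definition), and the nested copy satisfies the bound by hypothesis.

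For the pebbling price, I prove $\csrec+2$ by induction on $\csrec$. The base case is immediate: the two sources together with one sink suffice to pebble $\csdag{\csnspines}{1}$, and the $\csnspines$ sinks can be produced one after another without ever exceeding three pebbles. For the inductive step, the upper bound is achieved by a ``frontier'' strategy along a spine. One advances a single pebble along the chosen spine one vertex at a time; at each step the required external predecessor is produced by invoking either the classical pyramid strategy on a height-$\csrec$ pyramid (cost $\csrec+1$) or the inductive strategy on $\csdag{\csnspines}{\csrec}$ (cost $\csrec+2$), after which the frontier advances and the subroutine pebbles are removed. Together with the single frontier pebble this gives a peak of $1+(\csrec+2)=(\csrec+1)+2$, and the $\csnspines$ sinks of $\csdag{\csnspines}{\csrec+1}$ are produced sequentially while reusing the same pebble budget.

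The matching lower bound is the heart of the lemma, and I would follow the bottleneck argument of~\cite{CS82ExtremeTimeSpaceTradeoffs}. Given any legal pebbling $\pebbling$ of $\csdag{\csnspines}{\csrec+1}$, one fixes a sink $\sigma$ that must be visited and then walks along the spine terminating at $\sigma$. Because each section contains $2\csnspines$ vertices and the second half of each section is fed by sinks of the inner $\csdag{\csnspines}{\csrec}$, there is a spine vertex $v$ whose external predecessor is such an inner sink; one considers the first moment in $\pebbling$ at which $v$ receives a pebble. At that instant both predecessors of $v$ carry pebbles: the spine-predecessor pebble and the inner $\csdag{\csnspines}{\csrec}$-sink pebble. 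The inductive hypothesis applied to $\csdag{\csnspines}{\csrec}$ gives a bottleneck of $\csrec+2$ live pebbles inside that nested copy during the production of its sink, and combined with the unavoidable spine pebble this forces $\geq \csrec+3$ live pebbles somewhere in the pebbling. The main obstacle, and where genuine care is needed, is showing that the spine pebble cannot be released and re-derived ``for free'': one must argue that replacing the spine pebble would re-traverse enough of the spine to incur the same bottleneck again, exploiting the linear structure of each spine in an essential way. With this in place, the standard pyramid lower bound and routine bookkeeping finish the induction.
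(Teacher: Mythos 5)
The paper gives no proof of this lemma at all: it is stated purely as a citation to Carlson and Savage~\cite{CS82ExtremeTimeSpaceTradeoffs} (``Carlson and Savage proved the following properties\ldots''). So there is nothing in the paper to compare your argument against; I will only assess your proof on its own terms.

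Your treatment of the size and in-degree claims is correct and matches what a reader would reconstruct from Definition~\ref{def:CS-dag}: the recursion $N(\csrec+1)=N(\csrec)+\Theta(\csnspines\csrec^2)+2\csnspines^2\csrec$ telescopes to $\Theta(\csnspines\csrec^3+\csnspines^2\csrec^2)$, and the in-degree bound follows because each spine vertex has one spine predecessor and one external predecessor. For the pebbling-price upper bound, the ``frontier along a spine'' strategy is the right idea and gives $\csrec+3=(\csrec+1)+2$ in the inductive step, but note one slip: the standard black pebbling price of a height-$\csrec$ pyramid (with the paper's convention, cf.\ Figure~\ref{fig:pebbling-contradiction-for-Pi-2}, where ``height $2$'' means three rows) is $\csrec+2$, not $\csrec+1$. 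This happens not to affect your bound because you take the maximum with the recursive cost $\csrec+2$, but the parenthetical is wrong as written. The lower bound is where your argument is genuinely incomplete: you sketch the bottleneck strategy of~\cite{CS82ExtremeTimeSpaceTradeoffs} and explicitly flag the part you have not carried out (arguing that the spine pebble cannot be dropped and cheaply re-derived, which is exactly the crux of the Carlson--Savage argument). That is honest, but it means the proposal does not actually establish the $\geq\csrec+2$ direction, which is the nontrivial half of the lemma. Since the paper itself defers to~\cite{CS82ExtremeTimeSpaceTradeoffs}, this is an acceptable level of detail for a citation, but it should not be mistaken for a self-contained proof of the pebbling price.
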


\begin{theorem}[\cite{CS82ExtremeTimeSpaceTradeoffs}]
  \label{th:CS-dag-trade-off-black-only}
  Suppose that $\pebbling$
  is a standard pebbling of
  $\csdag{\csnspines}{\csrec}$ in space less than
$ (\csrec + 2) + \cspebsurplus
  $
  for $0 < \cspebsurplus \leq \csnspines - 3$. 
  Then 
\begin{equation*}
    \pebtime{\pebbling} 
    \geq
    \left(
    \frac{\csnspines -  \cspebsurplus}{\cspebsurplus + 1}
    \right)^{\csrec}
    \cdot
    \csrec!
    \eqperiod
  \end{equation*}
\end{theorem}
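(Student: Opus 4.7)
The plan is to prove this by induction on the recursion depth $\csrec$. In the base case $\csrec = 1$, the graph $\csdag{\csnspines}{1}$ is bipartite with two sources and $\csnspines$ sinks, each depending on both sources. With strictly fewer than $3 + \cspebsurplus$ pebbles, a direct counting of placements shows that the time to place all $\csnspines$ sinks is at least $(\csnspines - \cspebsurplus)/(\cspebsurplus + 1)$, matching the promised bound.

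For the inductive step from $\csrec$ to $\csrec+1$, fix a pebbling $\pebbling$ of $\csdag{\csnspines}{\csrec+1}$ using space strictly below $(\csrec + 3) + \cspebsurplus$. The key structural observation is that every spine of $\csdag{\csnspines}{\csrec+1}$ is a directed path traversing its $\csrec$ sections sequentially, so to place a pebble on any sink one must pebble the spine top to bottom in order. Within a single section, the final $\csnspines$ spine vertices each have a \emph{distinct} sink of the inner copy of $\csdag{\csnspines}{\csrec}$ as one of their two predecessors, so completing the section forces all $\csnspines$ inner sinks to be produced (at distinct times) while a spine pebble for that section is present.

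I would partition $\pebbling$ into ``inner intervals'', each of which ends with a fresh sink of the inner $\csdag{\csnspines}{\csrec}$ subgraph being produced. Throughout such an interval at least one pebble lies outside the inner subgraph (the current section's spine pebble plus backbone pebbles on earlier spine positions). If $k \geq 1$ pebbles sit outside, then the inner subgraph is being pebbled with strictly fewer than $(\csrec + 2) + (\cspebsurplus - k + 1)$ pebbles, so the inductive hypothesis applied with parameter $\cspebsurplus' = \cspebsurplus - k + 1$ gives a per-interval lower bound of $\bigl((\csnspines - \cspebsurplus')/(\cspebsurplus' + 1)\bigr)^{\csrec} \cdot \csrec!$ time. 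A pigeonhole-style savings argument then shows that in each of the $\csrec$ sections of the spine leading to any chosen sink, at least $\csnspines - \cspebsurplus$ of the $\csnspines$ inner sinks must be produced by \emph{fresh} inner intervals, because at most $\cspebsurplus$ inner sinks can be cached alongside the $\csrec + 2$ pebbles needed to maintain an inner configuration and the current spine pebble. Multiplying the fresh-interval count across the $\csrec$ sequential sections by the inductive per-interval bound and simplifying delivers the target factor $\bigl((\csnspines - \cspebsurplus)/(\cspebsurplus + 1)\bigr)^{\csrec+1} \cdot (\csrec+1)!$.

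The main obstacle is the bookkeeping of pebble allocation between the spine, the pyramid sinks and the inner $\csdag{\csnspines}{\csrec}$ copy across the whole pebbling: the number of pebbles lying outside the inner subgraph fluctuates during $\pebbling$, so the per-interval inductive bounds must be combined via an amortization or potential-function argument, rather than a naive worst-case multiplication, in order to recover a clean $(\csnspines - \cspebsurplus)/(\cspebsurplus + 1)$ factor at each level of recursion and the full $(\csrec+1)!$ ordering factor. A convenient device is a charging scheme that assigns each fresh inner interval to the unique (section, spine) pair it serves, together with an inductive statement slightly stronger than the one stated, which tracks not only the space budget but also the number of ``spent'' backbone pebbles already committed to earlier sections at the start of each inner interval.
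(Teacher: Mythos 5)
The paper does not actually prove this theorem: it is imported wholesale from Carlson and Savage~\cite{CS82ExtremeTimeSpaceTradeoffs} (and the pebbling literature around it), so there is no in-paper proof against which to compare your reconstruction. Judged on its own terms, your outline has the right skeleton --- an induction on the recursion depth $\csrec$, with the inner copy of $\csdag{\csnspines}{\csrec}$ being re-pebbled repeatedly while spine pebbles reduce the space available, which is indeed the mechanism driving the Carlson--Savage bound. But what you have written is an acknowledged sketch, not a proof: the amortization that converts ``at least $\csnspines-\cspebsurplus$ fresh inner re-pebblings per section, each in reduced space'' into a clean product of $\csrec$ sections is precisely the technical heart of the argument, and you explicitly defer it (``the per-interval inductive bounds must be combined via an amortization or potential-function argument''; ``a convenient device is a charging scheme$\ldots$ together with an inductive statement slightly stronger than the one stated''). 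Those sentences name the difficulty without resolving it, so the gap is genuine.

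Two concrete issues you would run into when trying to close that gap. First, the arithmetic at the inductive step does not line up as stated: under the paper's definition the spines of $\csdag{\csnspines}{\csrec+1}$ each have $\csrec$ sections, so ``multiplying the fresh-interval count across the $\csrec$ sequential sections'' can at best hand you a factor $\csrec\cdot\frac{\csnspines-\cspebsurplus}{\cspebsurplus+1}$, whereas passing from $\csrec!$ to $(\csrec+1)!$ requires a factor of $\csrec+1$; recovering the missing $\frac{\csrec+1}{\csrec}$ must come out of the improved inner bound at reduced surplus $\cspebsurplus'<\cspebsurplus$, and this is exactly where the uncompleted amortization lives. Second, your interval argument essentially ignores the $\csnspines$ pyramid copies of height $\csrec$, yet these are not a side show --- the first half of each section feeds off the pyramid sinks, and the pyramids' pebbling price $\csrec+2$ is what forces almost all pebbles off the spine whenever a pyramid sink has to be regenerated. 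A correct proof has to track both families of subgraph sinks and the fluctuating split of the $(\csrec+2)+\cspebsurplus-1$ pebble budget among spine, pyramids, and inner graph, and must handle the edge cases where your reduced parameter $\cspebsurplus'=\cspebsurplus-k+1$ falls out of the allowed range $0<\cspebsurplus'\leq\csnspines-3$. None of this is in place, so the proposal should be read as a plausible strategy rather than a proof.
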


This lower bound holds for space up to $\csnspines + \csrec - 1$. By allowing only a constant factor more pebbles it is possible to pebble the graph (in the standard way) in linear time.

\begin{lemma}[\cite{Nordstrom12RelativeStrength}]
  \label{lem:CS-dag-lin-time}
  The graphs
  $\csdag{\csnspines}{\csrec}$
  have standard pebbling strategies 
in simultaneous space 
  $\bigoh{\csnspines + \csrec}$
  and time linear in the size of the graphs.
\end{lemma}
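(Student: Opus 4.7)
The plan is to prove this by induction on $\csrec$, but with a strengthened inductive claim. Since $\csdag{\csnspines}{\csrec}$ has multiple sinks, and since the recursive structure of $\csdag{\csnspines}{\csrec+1}$ uses the sinks of $\csdag{\csnspines}{\csrec}$ as \emph{simultaneous} predecessors of spine vertices within each section, it is not enough to show that each sink can be visited in the required resources separately. Instead, I will prove: \emph{there is a standard pebbling of $\csdag{\csnspines}{\csrec}$ in space $\bigoh{\csnspines + \csrec}$ and time linear in the graph size whose configurations at some point contain pebbles on all $\csnspines$ sinks simultaneously.} This strengthened statement immediately implies the lemma.

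The base case $\csrec = 1$ is trivial: pebble the two sources (using 2~pebbles) and then place pebbles on all $\csnspines$ sinks in turn, for total space $\csnspines+2$ and total time $\bigoh{\csnspines}$. For the inductive step, I will pebble $\csdag{\csnspines}{\csrec+1}$ in three phases. Phase~1 pebbles all $\csnspines$ pyramid sinks simultaneously: for $i = 1, \ldots, \csnspines$, pebble the $i$\nobreakdash-th pyramid (of height~$\csrec$) using the well-known strategy in $\csrec+1$ pebbles, keep the pebble on its sink, and remove all other pebbles in that pyramid before moving on. The maximum space is at most $(\csnspines-1) + (\csrec+1) = \csnspines+\csrec$, and the total time is $\bigoh{\csnspines \csrec^2}$. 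Phase~2 invokes the induction hypothesis on the embedded copy of $\csdag{\csnspines}{\csrec}$ while keeping the $\csnspines$ pyramid sink pebbles in place; this brings the total space to $\csnspines + \bigoh{\csnspines + \csrec} = \bigoh{\csnspines + \csrec}$ and adds time linear in the size of $\csdag{\csnspines}{\csrec}$. After Phase~2, there is a moment at which all $\csnspines$ pyramid sinks and all $\csnspines$ sinks of $\csdag{\csnspines}{\csrec}$ are pebbled simultaneously.

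Phase~3 traverses the $\csnspines$ spines one at a time. Since every spine is a line of $\csrec$ sections of $2\csnspines$ vertices each, and the predecessors of each spine vertex (other than the previous spine vertex) are precisely among the $\csnspines$ pyramid sinks and $\csnspines$ $\csdag{\csnspines}{\csrec}$-sinks currently pebbled, we can walk up each spine by sliding $\bigoh{1}$ extra pebbles along the line. After reaching the sink of spine $j$, we keep its pebble in place and proceed with spine $j+1$. During this phase the space is at most $2\csnspines$ (for the pyramid and inductive sinks) $\, + \, j$ (for spine sinks already pebbled) $\, + \, \bigoh{1}$ (for the walk along the current spine), which is $\bigoh{\csnspines}$. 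At the moment spine $\csnspines$'s sink is pebbled, all $\csnspines$ sinks of $\csdag{\csnspines}{\csrec+1}$ are pebbled simultaneously, as required by the strengthened inductive claim. The time is $\bigoh{\csrec \csnspines}$ per spine, i.e.\ $\bigoh{\csnspines^2 \csrec}$ in total.

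The total space is $\bigoh{\csnspines + \csrec}$, and summing the phase times gives $\bigoh{\csnspines \csrec^3 + \csnspines^2 \csrec^2}$, which by \reflem{lem:CS-dag-easy-prop} is linear in the size of $\csdag{\csnspines}{\csrec+1}$. The main obstacle that this plan has to negotiate is the gap between ``visiting each sink'' and ``pebbling all sinks simultaneously''; choosing the stronger induction hypothesis resolves this cleanly, but we must double-check throughout that when Phase~3 walks a spine it can indeed keep every pyramid and recursive sink pebbled without exceeding the budget — this is where the bookkeeping needs to be tight but turns out to go through because the number of sinks to maintain is $\bigoh{\csnspines}$ and the walk itself costs only $\bigoh{1}$ additional pebbles.
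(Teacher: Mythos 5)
Your three-phase plan and the strengthened inductive hypothesis (reach a configuration with pebbles on all $\csnspines$ sinks simultaneously) are both the right ideas, but the space accounting in Phase~2 has a genuine gap that the big-$\mathrm{O}$ absorption hides. You write that holding the $\csnspines$ pyramid-sink pebbles while invoking the induction hypothesis gives space $\csnspines + \bigoh{\csnspines + \csrec} = \bigoh{\csnspines + \csrec}$. But the extra additive $\csnspines$ pebbles that Phase~2 forces you to carry compound over recursion levels: if $S(\csrec)$ denotes the peak space of your strategy on $\csdag{\csnspines}{\csrec}$, then at the moment during Phase~2 when the inner pebbling hits its own peak you are holding $\csnspines + S(\csrec)$ pebbles, so $S(\csrec + 1) \geq \csnspines + S(\csrec)$. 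With $S(1) = \csnspines + 2$ this unrolls to $S(\csrec) \geq \csnspines\csrec + 2 = \bigomega{\csnspines\csrec}$. Since $\csnspines\csrec$ is not $\bigoh{\csnspines + \csrec}$ when both parameters grow --- and in the applications in Section~4 they both grow, e.g.\ $\csnspines = n^{1/3}$ and $\csrec = n^{1/6}$ --- your strategy does not establish the claimed bound.

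The fix is to swap the order of Phases~1 and~2. Pebble the embedded copy of $\csdag{\csnspines}{\csrec}$ \emph{first}, when nothing else is being held; by your strengthened hypothesis together with free removals, you can end this phase with pebbles on exactly its $\csnspines$ sinks and nothing else. Only then pebble the $\csnspines$ pyramids one at a time, each in $\csrec + 1$ pebbles, retaining only the sink of each pyramid. The peak during this pyramid phase is $\csnspines$ (inner sinks) $+\, (\csnspines - 1)$ (finished pyramid sinks) $+\, (\csrec + 1)$ (current pyramid) $= 2\csnspines + \csrec$, and Phase~3 costs at most $3\csnspines + \bigoh{1}$ exactly as you analyse it. The recurrence now reads $S(\csrec + 1) = \max\{S(\csrec),\ 2\csnspines + \csrec,\ 3\csnspines + \bigoh{1}\}$, which unrolls to $S(\csrec) = \bigoh{\csnspines + \csrec}$. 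Your time analysis is unaffected by the reordering and remains linear in the size of $\csdag{\csnspines}{\csrec}$. (For reference: the paper does not reprove this lemma internally; it cites it from Nordström~2012.)
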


The standard pebbling price upper bound does not carry over
to reversible pebbling because 
the line graph requires more pebbles in reversible pebbling than in standard pebbling.
However, we can adapt the standard pebbling strategy to reversible pebbling using the following fact on the line graph.

\begin{proposition}[\cite{LV96Reversibility}]
  \label{pr:rev-peb-price-path}
   The visiting reversible pebbling price of the line graph on $n$ vertices is
   $\ceiling{\log (n+1) }$ and the persistent reversible pebbling price is
   $\floor{\log (n - 1)} + 2$.
\end{proposition}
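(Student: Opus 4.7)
Let $V(s)$ denote the length of the longest line graph that admits a visiting reversible pebbling using at most $s$ pebbles, and let $P(s)$ denote the corresponding quantity for persistent pebblings. The proposition is equivalent to the two identities $V(s) = 2^s - 1$ and $P(s) = 2^{s-1}$, since $\lceil \log(n+1) \rceil \leq s$ iff $n \leq 2^s - 1$ and $\lfloor \log(n-1) \rfloor + 2 \leq s$ iff $n \leq 2^{s-1}$. My plan is to establish these identities via the mutual recurrences $P(s) = V(s-1) + 1$ and $V(s) = 2V(s-1) + 1$, with the base case $V(1) = P(1) = 1$.

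For the upper bounds I will give explicit divide-and-conquer strategies. To persistently pebble $v_m$ with $m = V(s-1)+1$ using $s$ pebbles, first execute a visiting pebbling of $v_{m-1}$ on the sub-path $v_1, \ldots, v_{m-1}$ that uses only $s-1$ pebbles; at the moment this strategy first places a pebble on $v_{m-1}$, drop the $s$-th pebble on $v_m$ (which is legal since $v_{m-1}$, its unique predecessor, is then pebbled); then run the first $s-1$ pebbles' moves in reverse, which is valid by the reversibility of the sub-strategy and the fact that none of these moves touches $v_m$, leaving the single pebble on $v_m$. To visit $v_n$ with $n = 2V(s-1)+1$ using $s$ pebbles, first persistently pebble the midpoint $v_{V(s-1)+1}$ by the strategy just described; then, treating the midpoint pebble as a free source for the right-hand sub-path of length $V(s-1)+1$, recursively visit $v_n$ using the remaining $s-1$ free pebbles; finally reverse the persistent strategy to clear the midpoint pebble. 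Unrolling these recurrences gives the claimed upper bounds.

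For the lower bounds I will argue by induction on $s$. Given any visiting pebbling of $v_n$ using $s$ pebbles, let $T$ be the first time at which some vertex of index $\geq \lceil n/2 \rceil$ carries a pebble. By minimality, the move at time $T$ is a placement on $v_{\lceil n/2 \rceil}$, so $v_{\lceil n/2 \rceil - 1}$ must be pebbled at time $T-1$ while the entire right half is empty. Thus the sub-pebbling on $[0, T{-}1]$ is a reversible pebbling on the left sub-path that has a pebble on $v_{\lceil n/2 \rceil - 1}$ in its final configuration, while the sub-pebbling on $[T, T']$, where $T'$ is the first time $v_n$ is pebbled, must traverse the right sub-path of length $\lfloor n/2 \rfloor$ using at most $s-1$ pebbles (since one pebble is committed on or near the midpoint throughout the ``crossing''). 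Applying the inductive hypothesis to each half yields $s \geq \lceil \log(n+1) \rceil$; the persistent bound is proved analogously, charging one extra pebble for the fact that the sink must remain occupied at the end.

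The main obstacle will be making the projection argument fully rigorous in the lower bound, since the midpoint pebble need not stay in place between $T$ and $T'$: it can be removed (whenever its predecessor is present) and reinserted later, and pebbles may migrate across the midpoint repeatedly. One therefore needs to decompose the strategy into alternating left-half and right-half phases and argue by a potential-style accounting that the space cost is additive across the two halves, which is the essence of the original argument in~\cite{LV96Reversibility}; I would follow their treatment in detail.
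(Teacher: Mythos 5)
The translation to $V(s)=2^s-1$ and $P(s)=2^{s-1}$, the recurrences $P(s)=V(s-1)+1$ and $V(s)=2V(s-1)+1$, and the two divide-and-conquer upper-bound strategies are all correct; so is the reduction of the persistent lower bound to the visiting one (after the last removal of $v_{n-1}$ the sink pebble is frozen, so the remaining $s-1$ pebbles must reversibly clear $v_1,\dots,v_{n-1}$, and reversing this gives a visiting pebbling of an $(n-1)$-vertex line in space $s-1$).

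The genuine gap is in the visiting lower bound, and it is exactly where you flag it, but it is worth being precise about why the step fails rather than deferring. The parenthetical justification ``one pebble is committed on or near the midpoint throughout the crossing'' is simply false: after time $T$ the strategy may re-pebble the prefix in order to legally peel back $v_{\lceil n/2\rceil-1}$, then $v_{\lceil n/2\rceil}$ itself, and so on, until nothing at all remains to the left. Already on the $4$-vertex line with $3$ pebbles one has the crossing
$\{v_1,v_2\}\rightarrow\{v_1,v_2,v_3\}\rightarrow\{v_1,v_3\}\rightarrow\{v_3\}\rightarrow\{v_3,v_4\}$,
and at the configuration $\{v_3\}$ there is no pebble at any index $\le\lceil n/2\rceil=2$. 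Moreover, the plan to ``apply the inductive hypothesis to each half'' does not actually use the left half: the sub-pebbling on $[0,T-1]$ is only known to use at most $s$ pebbles, which yields nothing beyond the hypothesis. So the argument reduces entirely to the unproved assertion that the right sub-path sees at most $s-1$ pebbles, which is essentially equivalent to the statement being proved. A correct proof must decompose the interval into alternating phases in which pebbles cross the cut in either direction, and charge a pebble each time the cut is re-established; this is the nontrivial invariant, and the proposal neither states nor proves it. As written, the lower bound half of the proposition is open.
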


Using this result, we get the following upper bound (which is slightly stronger
then what we would get by applying Theorem~\ref{th:rev-simulation}).

\begin{lemma}
  \label{lem:rev-CS-dag-easy-prop}
  The graphs
  $\csdag{\csnspines}{\csrec}$
  have reversible pebbling price
  at most
  $\csrec(\log(\csnspines\csrec) + 3)$.
\end{lemma}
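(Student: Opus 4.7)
The plan is to prove the bound by induction on $\csrec$, leveraging Proposition~\ref{pr:multi-sink-CS} to reduce to pebbling a single-sink subgraph, and then adapting the standard pebbling strategy so that the long spines are handled via the efficient reversible line pebbling of Proposition~\ref{pr:rev-peb-price-path}. For the base case $\csrec=1$, the graph $\csdag{\csnspines}{1}$ can be reversibly pebbled one sink at a time (by Proposition~\ref{pr:multi-sink-CS}) using just $3$ pebbles -- place both sources, place the sink, then unwind -- and this is bounded by $\log \csnspines + 3$.

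For the inductive step, I would invoke Proposition~\ref{pr:multi-sink-CS} to focus on the single-sink subgraph of $\csdag{\csnspines}{\csrec+1}$, which consists of a single spine path of $2\csnspines\csrec$ vertices, together with the $\csnspines$ pyramids of height $\csrec$ and the embedded copy of $\csdag{\csnspines}{\csrec}$ that feed into it. The proposed strategy follows the visiting reversible pebbling of the spine given by Proposition~\ref{pr:rev-peb-price-path}, which uses at most $\lceil \log(2\csnspines\csrec+1) \rceil \leq \log(\csnspines\csrec) + 2$ pebbles on the spine. Whenever this line-pebbling strategy prescribes a pebble placement or removal at a spine vertex $v$, I pause the spine pebbling, persistently reversibly pebble the unique external predecessor $p_v$ so that only $p_v$ is pebbled outside the spine, execute the spine operation (which is now legal since both the spine predecessor and $p_v$ are pebbled), and finally reverse the predecessor pebbling.

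The cost of persistently pebbling $p_v$ is bounded by $2\csrec+1$ when $p_v$ is a pyramid sink (using Proposition~\ref{pr:rev-upper-bound-indegree-depth-tight} with indegree $2$ and depth $\csrec$) and by $\csrec(\log(\csnspines\csrec)+3)$ when $p_v$ is a sink of $\csdag{\csnspines}{\csrec}$ (using the inductive hypothesis together with Proposition~\ref{pr:multi-sink-CS}). Since the vertices used to pebble $p_v$ are disjoint from the spine, the total pebble count at any moment during an excursion is at most the number of spine pebbles present plus the persistent pebbling cost for $p_v$. Writing $f(\csnspines,\csrec)$ for the reversible pebbling price, this yields
\[
f(\csnspines,\csrec+1) \leq \log(\csnspines\csrec) + 2 + \max\bigl\{2\csrec+1,\ \csrec(\log(\csnspines\csrec)+3)\bigr\}.
\]
The inductive arithmetic is then routine: for $\csrec \geq 1$ the maximum is dominated by the recursive term, so the bound becomes $f(\csnspines,\csrec+1) \leq (\csrec+1)\log(\csnspines\csrec) + 3\csrec + 2$, which is at most $(\csrec+1)(\log(\csnspines(\csrec+1))+3)$ since $\log(\csnspines(\csrec+1)) \geq \log(\csnspines\csrec)$.

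The main point that will require care is verifying that the described modification is a genuine reversible pebbling: each excursion into pebbling a predecessor $p_v$ must integrate with the line-pebbling transitions so that reversibility holds both at the moment one begins the excursion and at the moment one finishes reversing it, with no spurious pebbles left behind that would block the next line-pebbling move. This is intuitively clear because the spine and the predecessor subgraph are vertex-disjoint and the spine configuration is unchanged during the predecessor pebbling, but it is the step that most deserves explicit attention in the full write-up.
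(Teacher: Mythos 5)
Your proposal is correct and follows essentially the same route as the paper's proof: induction on $\csrec$ with a $3$-pebble base case, the spine handled by the reversible line-pebbling bound of Proposition~\ref{pr:rev-peb-price-path}, excursions to external predecessors bounded by Proposition~\ref{pr:rev-upper-bound-indegree-depth-tight} for the pyramids and by the inductive hypothesis for the embedded copy of $\csdag{\csnspines}{\csrec}$, and the same recurrence and arithmetic. The only nuance (which the paper glosses over just as lightly) is that the inductive hypothesis gives a visiting rather than persistent pebbling of a sink of the embedded copy, but this suffices: pause that pebbling at the first moment the sink is pebbled, perform the spine move on the disjoint vertex set, and reverse, which yields exactly the space accounting in your recurrence.
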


\begin{proof}
The proof is by induction on $\csrec$.
Clearly, $\csdag{\csnspines}{1}$ can be reversibly pebbled
with $3$ pebbles. 
In order to pebble any sink of $\csdag{\csnspines}{\csrec}$,
we can reversibly pebble the corresponding spine with the space-efficient strategy for reversibly pebbling a line graph (as per Proposition~\ref{pr:rev-peb-price-path}) and every time we need to place or remove a pebble on a given vertex of the spine, we reversibly pebble the subgraph that reaches this vertex. 
By Proposition~\ref{pr:rev-upper-bound-indegree-depth-tight}, pyramids of depth $\csrec -1$ can be reversibly pebbled with $2(\csrec -1) + 1$ pebbles.
Therefore, by induction on $\csrec$ we get that the reversible pebbling price of $\csdag{\csnspines}{\csrec}$ is at most 
$\max\set{(\csrec-1)(\log(\csnspines\csrec) + 3), 2(\csrec-1) + 1 } +\log(2\csnspines\csrec) + 2 
\leq (\csrec-1)(\log(\csnspines\csrec) + 3) + \log(\csnspines\csrec) + 3$.
\end{proof}

In order to obtain nearly linear time reversible pebbling, 
we apply Theorem~\ref{th:rev-simulation} to Lemma~\ref{lem:CS-dag-lin-time}.

\begin{lemma}\label{lem:rev-CS-dag-lin-time}
  For any $\epsilon>0$, the graphs
  $\csdag{\csnspines}{\csrec}$
  have reversible pebbling strategies 
in simultaneous space 
  $\bigoh{\epsilon 2^{1/\epsilon}(\csnspines + \csrec)\log(\csnspines \csrec)}$
  and time $\bigoh{n^{1+\epsilon}}$ (where $n$ denotes the number of vertices).
\end{lemma}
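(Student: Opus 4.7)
The plan is a direct composition: feed the space-efficient linear-time standard pebbling of Carlson--Savage graphs from Lemma~\ref{lem:CS-dag-lin-time} into the Bennett-style reversible simulation from Theorem~\ref{th:rev-simulation}, and then simplify the resulting bounds using the known size of $\csdag{\csnspines}{\csrec}$.

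In more detail, by Lemma~\ref{lem:CS-dag-lin-time} there is a standard pebbling of $\csdag{\csnspines}{\csrec}$ in simultaneous space $\spacestd = \bigoh{\csnspines + \csrec}$ and time $\stoptime = \bigoh{n}$, where $n = \bigtheta{\csnspines \csrec^3 + \csnspines^2 \csrec^2}$ is the number of vertices (Lemma~\ref{lem:CS-dag-easy-prop}). Plugging these values into Theorem~\ref{th:rev-simulation} yields a reversible pebbling using time
\[
\stoptime^{1+\epsilon}/\spacestd^{\epsilon} \;=\; \Bigoh{n^{1+\epsilon}/(\csnspines+\csrec)^{\epsilon}} \;=\; \bigoh{n^{1+\epsilon}}
\]
and space
\[
\epsilon(2^{1/\epsilon}-1)\,\spacestd\,\log(\stoptime/\spacestd) \;=\; \Bigoh{\epsilon\, 2^{1/\epsilon}(\csnspines+\csrec)\log\bigl(n/(\csnspines+\csrec)\bigr)}.
\]
Since $n = \bigtheta{\csnspines\csrec^3 + \csnspines^2\csrec^2}$ implies $\log(n/(\csnspines+\csrec)) = \bigoh{\log(\csnspines\csrec)}$, this gives the claimed simultaneous bounds.

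The one point that needs a little care is that $\csdag{\csnspines}{\csrec}$ has multiple sinks, whereas both Theorem~\ref{th:rev-simulation} and the reversible pebble game are stated most naturally for single-sink DAGs. I would handle this by applying Proposition~\ref{pr:multi-sink-CS}: reduce to the single-sink subgraph $\Gamma'(\csnspines,\csrec)$, perform the Bennett simulation there (with the same parameters, up to constants), and then simulate the resulting reversible strategy $\csnspines$ times to visit each sink. This multiplies the time by a factor of $\csnspines$, which is absorbed into the $n^{1+\epsilon}$ bound, and leaves the space unchanged. Apart from this bookkeeping there is no real obstacle — the heart of the argument is simply Theorem~\ref{th:rev-simulation} applied to Lemma~\ref{lem:CS-dag-lin-time}, and the only thing the reader needs to verify is the arithmetic above.
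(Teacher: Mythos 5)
Your core argument is exactly the paper's: apply the Bennett-style simulation of Theorem~\ref{th:rev-simulation} to the space-efficient linear-time standard pebbling of Lemma~\ref{lem:CS-dag-lin-time}, then simplify $\log(t/s)$ using $n = \bigtheta{\csnspines\csrec^3 + \csnspines^2\csrec^2}$. The arithmetic in that part is correct.

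However, the final paragraph about the multi-sink issue is both unnecessary and, as stated, incorrect. Theorem~\ref{th:rev-simulation} is stated for an \emph{arbitrary} DAG, and the paper explicitly extends the (reversible) pebbling notion to multi-sink graphs just before Proposition~\ref{pr:multi-sink-CS} (``places pebbles on each sink at some point''). So the simulation can be applied directly to $\csdag{\csnspines}{\csrec}$ with no detour. The detour you propose does not actually work as you describe: $\Gamma'(\csnspines,\csrec)$ omits only the $\csnspines-1$ top-level spines, which account for roughly $2\csnspines^2\csrec$ of the $\bigtheta{\csnspines\csrec^3 + \csnspines^2\csrec^2}$ vertices, so for $\csrec \geq 3$ one has $\setsize{\vertices{\Gamma'}} = \bigtheta{n}$. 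Repeating a reversible pebbling of $\Gamma'$ once per sink then costs $\bigtheta{\csnspines\cdot n^{1+\epsilon}}$, which is \emph{not} $\bigoh{n^{1+\epsilon}}$ when $\csnspines$ is polynomial in $n$ (as it is in the theorems where this lemma is used). Replacing that paragraph with the observation that Theorem~\ref{th:rev-simulation} already handles multi-sink DAGs fixes the proof.
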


We can now choose different values for the parameters $\csnspines$ and $\csrec$ and obtain graphs with trade-offs in different space regimes.
The first family of graphs we consider are those that exhibit exponential time lower-bounds.

\begin{theorem}
  \label{th:exp-trade-off-CS}
  There are  explicitly constructible families of \singlesinkdagtext{}s
  $\set{G_n}_{n=1}^{\infty}$
  of size $\bigtheta{n}$ and maximum in-degree $2$ 
  \suchthat:
  \begin{enumerate}
  \item The graph
    $G_n$ 
    has reversible pebbling price $s_1 = \Bigoh{\sqrt[6]{n}\log n}$.

  \item
    For any constant $\epsilon > 0$, 
    there is a reversible pebbling of $G_n$ in time
    $\bigoh{n^{1+\epsilon}}$ and space \[s_2 = \Bigoh{s_1 \cdot \sqrt[6]{n}} = \Bigoh{\sqrt[3]{n}\log n}\eqperiod\]
\item 
   There is a constant $K > 0$ such that
    any standard pebbling of $G_n$ in
    space at most \[s = \frac{Ks_2}{ \log n} = \Bigoh{\sqrt[3]{n}}\]
    must take time at least
    $\bigl( \sqrt[6]{n} \bigr)! \,$.

  \end{enumerate}
\end{theorem}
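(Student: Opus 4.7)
The plan is to take $G_n$ to be the single-sink subgraph $\Gamma'(\csnspines,\csrec)$ of the Carlson-Savage DAG $\csdag{\csnspines}{\csrec}$ with parameters $\csnspines = \Theta(n^{1/3})$ and $\csrec = \Theta(n^{1/6})$. With this choice, \reflem{lem:CS-dag-easy-prop} immediately gives size $\Theta(\csnspines\csrec^3 + \csnspines^2\csrec^2) = \Theta(n^{5/6} + n) = \Theta(n)$ and maximum in-degree~$2$, matching the global hypotheses of the statement.

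Parts~(1) and~(2) then follow directly from the upper bounds already established earlier in the section. \Reflem{lem:rev-CS-dag-easy-prop} yields reversible pebbling price at most $\csrec(\log(\csnspines\csrec) + 3) = \Bigoh{n^{1/6}\log n} = s_1$, and by \refpr{pr:multi-sink-CS} this bound transfers from $\csdag{\csnspines}{\csrec}$ to $\Gamma'(\csnspines,\csrec)$ at no cost in space. Applying \reflem{lem:rev-CS-dag-lin-time} with the given $\epsilon$ gives a reversible pebbling in time $\bigoh{n^{1+\epsilon}}$ and simultaneous space $\Bigoh{\epsilon 2^{1/\epsilon}(\csnspines + \csrec)\log(\csnspines\csrec)} = \Bigoh{n^{1/3}\log n} = s_2$, and again \refpr{pr:multi-sink-CS} transfers this to $\Gamma'(\csnspines,\csrec)$.

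For part~(3), I invoke \refth{th:CS-dag-trade-off-black-only} with ``surplus'' $\cspebsurplus = s - \csrec - 2$, setting $s = K n^{1/3}$ for some sufficiently small constant $K > 0$ (for concreteness, take $K = 1/4$). Then $\cspebsurplus = \Theta(n^{1/3})$, both hypotheses $0 < \cspebsurplus \leq \csnspines - 3$ are satisfied for all large $n$, and the ratio $(\csnspines - \cspebsurplus)/(\cspebsurplus + 1)$ tends to $(1-K)/K = 3$ as $n \to \infty$. Hence any standard pebbling of $\csdag{\csnspines}{\csrec}$ in space at most $s$ takes time at least $\alpha^{\csrec}\cdot\csrec\,!$ for some constant $\alpha > 1$. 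Using \refpr{pr:multi-sink-CS} to descend to the single-sink graph $\Gamma'(\csnspines,\csrec)$ costs a factor of $\csnspines = \bigoh{n^{1/3}}$ in time, but this is absorbed by $\alpha^{\csrec} = \alpha^{\Theta(n^{1/6})}$, so the lower bound on time remains at least $\bigl(\sqrt[6]{n}\bigr)!$, as desired.

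The only ``real'' point of calibration is the choice of $K$ in part~(3): it must be small enough for $(\csnspines - \cspebsurplus)/(\cspebsurplus + 1)$ to stay bounded away from~$1$, so that the resulting exponential factor $\alpha^{\csrec}$ dominates the factor-$\csnspines$ loss incurred when passing from the multi-sink lower bound on $\csdag{\csnspines}{\csrec}$ to the single-sink graph $\Gamma'(\csnspines,\csrec)$ via \refpr{pr:multi-sink-CS}. Once this balance is verified, the rest of the theorem is a routine assembly of the lemmas already proved in this section, with no further combinatorial work required.
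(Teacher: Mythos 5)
Your proof is identical in approach to the paper's: the same single-sink Carlson--Savage subgraph with $\csnspines = n^{1/3}$, $\csrec = n^{1/6}$, and the same chain of Lemma~\ref{lem:rev-CS-dag-easy-prop}, Lemma~\ref{lem:rev-CS-dag-lin-time}, Theorem~\ref{th:CS-dag-trade-off-black-only}, and Proposition~\ref{pr:multi-sink-CS}; your explicit calibration in part~(3) — choosing $\cspebsurplus$ so the ratio $(\csnspines-\cspebsurplus)/(\cspebsurplus+1)$ stays bounded above $1$ and checking that $\alpha^{\csrec}$ absorbs the $\csnspines$-factor time loss from Proposition~\ref{pr:multi-sink-CS} — is exactly the missing detail the paper leaves to the reader. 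The only quibble is a harmless off-by-one: to cover pebblings in space \emph{at most} $s$ under the strict inequality of Theorem~\ref{th:CS-dag-trade-off-black-only} you should take $\cspebsurplus = s - \csrec - 1$ rather than $s - \csrec - 2$, which does not affect the asymptotics.
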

\begin{proof}
Let $G_n$ be the single-sink subgraph of 
$ \csdag{\csnspines(n)}{\csrec(n)}$
consisting of all vertices that reach the first sink,
for $\csnspines(n) = \sqrt[3]{n}$
and $\csrec(n) = \sqrt[6]{n}$.
  
By Lemma~\ref{lem:CS-dag-easy-prop}, $G_n$ has $\bigtheta{n}$ vertices and by Proposition~\ref{pr:multi-sink-CS}, items~\ref{item:rev-peb-price}--\ref{item:rev-peb-trade-off} follow from Lemma~\ref{lem:rev-CS-dag-easy-prop}, Lemma~\ref{lem:rev-CS-dag-lin-time} and Theorem~\ref{th:CS-dag-trade-off-black-only}, respectively.
\end{proof}

Given Theorem~\ref{thm:pebbling-nullstellensatz-correspondence} which proves the tight correspondence between reversible pebbling and Nullstellensatz refutations, Theorem~\ref{th:NS-exp-trade-off-CS} follow from Theorem~\ref{th:exp-trade-off-CS}.

It is also interesting to consider families of graphs that
can be reversibly pebbled in very small space, close to the logarithmic lower bound on the number of pebbles required to reversibly pebble a single-sink DAG. 
In this small-space regime, we cannot expect exponential time lower bounds, but we can still obtain superpolynomial ones.

\begin{theorem}\label{th:non-constant-trade-off-CS}
  Let 
  $\delta > 0$ be an arbitrarily small positive constant 
  and let $\csfunc(n)$ be any arbitrarily slowly growing monotone function
  $\littleomega{1} = \csfunc(n) \leq {n}^{1/4}$.
Then there is a family of explicitly constructible \singlesinkdagtext{}s 
  $\set{G_n}_{n=1}^{\infty}$
  of size $\bigtheta{n}$  and maximum in-degree $2$
  \suchthat:
  \begin{enumerate}
  \item \label{item:rev-peb-price} 
    The graph
    $G_n$ 
    has reversible pebbling price 
    $s_1 \leq \csfunc(n)\log(n)$.

\item For any constant $\epsilon>0\,$, 
    there is a reversible pebbling of $G_n$ in time
    $\bigoh{n^{1+\epsilon}}$
    and space \[ s_2 = \Bigoh{s_1\cdot n^{1/2}/g(n)^2} = \Bigoh{n^{1/2}\log n / g(n)}\eqperiod\]

  \item \label{item:rev-peb-trade-off}
    Any standard pebbling of $G_n$
    in space at most \[ s = \Bigoh{s_2/n^{\delta}\log n} = \Bigoh{n^{1/2-\delta}/g(n)}\]
    requires time superpolynomial in~$n$.
    
  \end{enumerate}
\end{theorem}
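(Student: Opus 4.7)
The plan is to mimic the proof of Theorem~\ref{th:exp-trade-off-CS} with a different choice of Carlson--Savage parameters. I would take $G_n$ to be the single-sink subgraph of $\csdag{\csnspines(n)}{\csrec(n)}$ obtained by restricting to the vertices reaching the first sink, with $\csrec(n) = \csfunc(n)$ and $\csnspines(n) = \ceiling{n^{1/2}/\csfunc(n)}$. By Lemma~\ref{lem:CS-dag-easy-prop} the graph $\csdag{\csnspines}{\csrec}$ has size $\Bigtheta{\csnspines\csrec^3 + \csnspines^2\csrec^2} = \Bigtheta{n^{1/2}\csfunc(n)^2 + n}$, and the hypothesis $\csfunc(n) \leq n^{1/4}$ forces the first summand to be $\bigoh{n}$, so the total size is $\bigtheta{n}$; this transfers to $G_n$ via Proposition~\ref{pr:multi-sink-CS}.

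Items~1 and~2 then follow by substituting these parameters into Lemma~\ref{lem:rev-CS-dag-easy-prop} and Lemma~\ref{lem:rev-CS-dag-lin-time} and applying Proposition~\ref{pr:multi-sink-CS}. Indeed, the reversible pebbling price of $\csdag{\csnspines}{\csrec}$ is at most $\csrec(\log(\csnspines\csrec) + 3) \leq \csfunc(n)\log n$ for all sufficiently large $n$, giving item~1; and for every constant $\epsilon > 0$ there is a reversible pebbling in time $\bigoh{n^{1+\epsilon}}$ and space $\Bigoh{(\csnspines+\csrec)\log(\csnspines\csrec)} = \Bigoh{n^{1/2}\log n / \csfunc(n)}$, which coincides with the expression $\Bigoh{s_1 \cdot n^{1/2}/\csfunc(n)^2}$ claimed in item~2.

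For item~3, suppose $G_n$ admits a standard pebbling in space $s \leq K n^{1/2-\delta}/\csfunc(n)$ and time $T$, for some constant $K$. By the second half of Proposition~\ref{pr:multi-sink-CS}, this lifts to a standard pebbling of the full graph $\csdag{\csnspines}{\csrec}$ in space $s$ and time $\csnspines T$. Apply Theorem~\ref{th:CS-dag-trade-off-black-only} with $\cspebsurplus = s - \csrec - 1$; since $\csrec = \csfunc(n) \ll s$ for large $n$, we have $\cspebsurplus = \Bigtheta{n^{1/2-\delta}/\csfunc(n)}$, which lies in the admissible range $0 < \cspebsurplus \leq \csnspines - 3$ and satisfies $(\csnspines - \cspebsurplus)/(\cspebsurplus + 1) = \Bigomega{n^{\delta}}$. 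The theorem then forces $\csnspines T \geq c^{\csfunc(n)} \cdot \csfunc(n)!$ for some $c = \Bigomega{n^{\delta}}$, and dividing by $\csnspines = \bigoh{n^{1/2}}$ shows that $T$ itself is superpolynomial in $n$, since $\csfunc(n) = \littleomega{1}$ and $\delta > 0$ is constant.

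The only real difficulty is the joint parameter calibration: $\csrec$ must be small enough to match the target reversible space bound $s_1$ but large enough that the exponent $\csrec$ in the lower bound $((\csnspines - \cspebsurplus)/(\cspebsurplus+1))^{\csrec}$ from Theorem~\ref{th:CS-dag-trade-off-black-only} drives the time bound into the superpolynomial regime, and $\csnspines$ must be chosen so that the size is $\bigtheta{n}$ and the Bennett-simulation space $(\csnspines + \csrec)\log(\csnspines\csrec)$ hits the target $s_2$. The assignment $\csrec = \csfunc(n)$, $\csnspines = \ceiling{n^{1/2}/\csfunc(n)}$ satisfies all four constraints simultaneously, and the assumption $\csfunc(n) \leq n^{1/4}$ is precisely what is needed to keep the summand $\csnspines \csrec^3$ in the size calculation below $n$.
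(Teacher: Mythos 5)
Your proposal is correct and takes exactly the same route as the paper: the paper's proof of this theorem consists of a single sentence, instantiating the Carlson--Savage construction from the proof of Theorem~\ref{th:exp-trade-off-CS} with $\csrec(n) = \csfunc(n)$ and $\csnspines(n) = n^{1/2}/\csfunc(n)$, which is precisely your choice. One small imprecision worth noting: the step ``since $\csrec = \csfunc(n) \ll s$ for large $n$'' fails when $\csfunc(n)$ is close to $n^{1/4}$ (there $s \leq K n^{1/4-\delta} < \csfunc(n)$), but in that regime $s$ drops below the standard pebbling price $\csrec+2$ of Lemma~\ref{lem:CS-dag-easy-prop}, so no pebbling exists and item~3 holds vacuously; you could simply say ``we may assume $s \geq \csrec+2$, since otherwise the claim is vacuous, and in that case set $\cspebsurplus = s - \csrec - 1$.''
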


\begin{proof}
The proof is analogous to that of Theorem~\ref{th:non-constant-trade-off-CS} with parameters $\csrec(n) = \csfunc(n)$  and $
  \csnspines(n) = 
  {{n}^{1/2} / \csfunc(n)}
  $.
\end{proof}

By applying Theorem~\ref{thm:pebbling-nullstellensatz-correspondence} to the above result we obtain Theorem~\ref{th:NS-non-constant-trade-off-CS}.

\begin{remark}

We note that in the second items of both the foregoing theorems, we could have reduced the time of the reversible pebbling to $\bigoh{n^{1 + \littleoh{1}}}$ by applying
Theorem~\ref{th:rev-simulation} with $\epsilon = \BIGOH{\frac{1}{\log \log n}}$. This would have come at a cost of an extra logarithmic factor in the corresponding space bounds.

\end{remark}

\subsection{Stacks of Superconcentrators}

Lengauer and Tarjan~\cite{LT82AsymptoticallyTightBounds} studied robust superpolynomial trade-offs for standard pebbling and showed that there are graphs that have standard pebbling price
$\bigoh{\log^2 n}$, but that any standard pebbling in space up to $K{n/\log n}$, for some constant $K$, requires superpolynomial time. 
For reversible pebbling we get almost the same result for the same family of graphs.

\begin{theorem}
  \label{th:robust-trade-off-LT}
  There are  explicitly constructible families of \singlesinkdagtext{}s
  $\set{G_n}_{n=1}^{\infty}$
  of size $\bigtheta{n}$ and maximum in-degree $2$
  \suchthat:
  \begin{enumerate}
  \item 
    The graph
    $G_n$ 
    has reversible pebbling price
    $s_1 = \bigoh{\log^2 n}$.
    \label{item:1-robust-LT}

  \item For any constant $\delta > 0$,  
    there is a reversible pebbling of $G_n$
    in time $\bigoh{n}$ and space
    \[s_2 = \bigoh{s_1 \cdot n / \log^{3-\delta} n} = \bigoh{n/\log^{1-\delta} n}\eqperiod\]
\label{item:2-robust-LT}
    \vspace{-2em} \item 
    There exists a constant $K>0$ \suchthat any standard pebbling $\pebbling_n$ of $G_n$ using at most pebbles $s = \frac{K s_2} { \log^\delta n} = O(n / \log n)$ requires time $n^{\bigomega{\log \log n}}$.
    \label{item:3-robust-LT}
  \end{enumerate}
\end{theorem}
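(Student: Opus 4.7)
Let $\set{G_n}$ be the Lengauer--Tarjan stack-of-superconcentrators family of~\cite{LT82AsymptoticallyTightBounds}, so that $G_n$ has $\bigtheta{n}$ vertices, maximum indegree~$2$, depth $\bigoh{\log^2 n}$, and standard pebbling price $\bigoh{\log^2 n}$. Item~\ref{item:3-robust-LT} is exactly the Lengauer--Tarjan time--space lower bound for standard pebbling and so is inherited verbatim from~\cite{LT82AsymptoticallyTightBounds} (as an aside, because every reversible pebbling is a standard pebbling, the same lower bound would immediately transfer to reversible pebblings on the same space budget, but only the standard version is claimed).

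Item~\ref{item:1-robust-LT} follows immediately from \refpr{pr:rev-upper-bound-indegree-depth-tight}: with $\depthdag = \bigoh{\log^2 n}$ and $\indegreedag = 2$, the reversible pebbling price of $G_n$ is at most $\depthdag \indegreedag + 1 = \bigoh{\log^2 n}$.

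The main work lies in item~\ref{item:2-robust-LT}. The plan is to apply the Bennett--Levine--Sherman reversible simulation (\refth{th:rev-simulation}) to a standard pebbling of $G_n$ that runs in linear time $t = \bigoh{n}$ using a space budget $s^\star = \bigtheta{n/\log^k n}$ for some fixed integer $k \geq 2$ extracted from the high-space end of the Lengauer--Tarjan trade-off curve. The crucial feature is that $t/s^\star$ is only \emph{polylogarithmic} in~$n$. Choosing $\epsilon = c/\log\log n$ for a small enough constant $c=c(\delta,k)$, the time-overhead factor $(t/s^\star)^{\epsilon} = 2^{\bigoh{c \log\log n/\log\log n}} = \bigoh{1}$ keeps the reversible time at $\bigoh{n}$, while the space-overhead factor $\epsilon(2^{1/\epsilon} - 1)\log(t/s^\star) = \bigoh{\log^{1/c} n}$ is just enough to raise the reversible space from $s^\star$ to the target $\bigoh{n/\log^{1-\delta} n}$, with $c$ chosen so that $k - 1/c = 1 - \delta$.

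The main obstacle is that the time bound in item~\ref{item:2-robust-LT} is truly \emph{linear}, not $n^{1+\epsilon}$ as in Theorems~\ref{th:exp-trade-off-CS} and~\ref{th:non-constant-trade-off-CS}. This forces $\epsilon$ to shrink with~$n$, which inflates the space-overhead factor $2^{1/\epsilon}$; the argument succeeds only because we can feed Bennett's simulation a standard pebbling whose space is already within a polylogarithmic factor of~$n$. Extracting such a high-space linear-time standard pebbling from the Lengauer--Tarjan construction is the substantive graph-theoretic piece; the remaining analysis is parameter bookkeeping, and the final application of \refth{thm:pebbling-nullstellensatz-correspondence} then yields Theorem~\ref{th:NS-robust-trade-off-LT} from this reversible trade-off.
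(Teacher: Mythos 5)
Your overall route is the same as the paper's: take the Lengauer--Tarjan stack of $\Theta(\log n)$ linear-size superconcentrators on $\Theta(n/\log n)$ vertices each, get \refitem{item:1-robust-LT} from the depth bound $\bigoh{\log^2 n}$ via \refpr{pr:rev-upper-bound-indegree-depth-tight}, inherit \refitem{item:3-robust-LT} from the Lengauer--Tarjan lower bound (\refitem{item:3-stack-of-sc} of \refth{th:ssc-special-case}), and get \refitem{item:2-robust-LT} by feeding a linear-time standard pebbling into \refth{th:rev-simulation} with $\epsilon = \Theta(1/\log\log n)$. However, the specific ingredient you demand for \refitem{item:2-robust-LT} does not exist: you ask for a standard pebbling of $G_n$ in time $\bigoh{n}$ and space $s^\star = \bigtheta{n/\log^k n}$ with an integer $k \geq 2$, "extracted from the high-space end of the trade-off curve". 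For the only parameter choice of the stack compatible with items \ref{item:1-robust-LT} and \ref{item:3-robust-LT} (namely $\ltncopies = \Theta(\log n)$ copies of an $\ltsupconcsize$-superconcentrator with $\ltsupconcsize = \Theta(n/\log n)$), the very lower bound you invoke rules this out: for $s = n/\log^2 n \leq \ltsupconcsize/20$, \refitem{item:3-stack-of-sc} of \refth{th:ssc-special-case} gives time at least $\ltsupconcsize \cdot \bigl(\log^2 n / 64\bigr)^{\log n} = n^{\bigomega{\log\log n}}$, so no linear-time pebbling exists at space $n/\log^k n$ for any $k \geq 2$. The "substantive graph-theoretic piece" you defer is therefore unfillable as stated.

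The fix is that the constraint pushing you to $k \geq 2$ is spurious: with $k = 1$, i.e., the Lengauer--Tarjan linear-time standard pebbling in space $\bigoh{\ltsupconcsize} = \bigoh{n/\log n}$ (\refitem{item:2-stack-of-sc} of \refth{th:ssc-special-case}), your own bookkeeping still closes. Taking $c = 1/\delta$, i.e., $\epsilon = 1/(\delta \log\log n)$, the time overhead is $(t/s^\star)^{\epsilon} = (\Theta(\log n))^{\epsilon} = \bigoh{2^{1/\delta}} = \bigoh{1}$ and the space overhead factor is $\bigoh{\log^{\delta} n}$, yielding a reversible pebbling in time $\bigoh{n}$ and space $\bigoh{n/\log^{1-\delta} n}$; this is exactly the paper's proof of \refitem{item:2-robust-LT}. (A minor point you omit: the stack must be capped with a binary tree to make it single-sink, which changes size, depth, and indegree only by constants.)
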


Note that, together with Theorem~\ref{thm:pebbling-nullstellensatz-correspondence}, this implies Theorem~\ref{th:NS-robust-trade-off-LT}.
Now in order to prove this theorem we must first introduce the family of graphs we will consider.

\begin{definition}[Superconcentrator~\cite{Valiant75NonLinear}]
  \label{def:superconcentrator}
  A directed acyclic graph $G$ is an 
  $\ltsupconcsize$-superconcentrator
  if it has $\ltsupconcsize$ sources 
  $\sourcesetstd = 
  \set{\sourcestd_1, \ldots, \sourcestd_{\ltsupconcsize}}$,
  $\ltsupconcsize$ sinks 
  $\sinksetstd = 
  \set{\sinkstd_1, \ldots, \sinkstd_{\ltsupconcsize}}$,
  and for any subsets 
  $\sourcesetstd'$ 
  and
  $\sinksetstd'$ 
  of sources and sinks of size
  $
  \Setsize{\sourcesetstd'} =
  \Setsize{\sinksetstd'} = \ell$
  it holds that there are $\ell$ vertex-disjoint paths between
  $\sourcesetstd'$ 
  and
  $\sinksetstd'$ 
    in~$G$.
\end{definition}

Pippenger~\cite{Pippenger77Superconcentrators} proved that there are superconcentrators of linear size and logarithmic depth, and Gabber and Galil~\cite{GG81Explicit} gave the first explicit construction.
For concreteness, we will consider the explicit construction by Alon and Capalbo~\cite{AC03SmallerExplicitSuperconcentrators} which has better parameters.

\begin{theorem}[\cite{AC03SmallerExplicitSuperconcentrators}]
For all integers $k\geq 6$, 
there are explicitly constructible 
$\ltsupconcsize$-superconcentrators for 
$\ltsupconcsize=\bigoh{2^k}$
with $\bigoh{\ltsupconcsize}$ edges, depth $\bigoh{\log \ltsupconcsize}$, and maximum indegree $\bigoh{1}$.
\end{theorem}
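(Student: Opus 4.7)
This result is quoted from~\cite{AC03SmallerExplicitSuperconcentrators}, so the plan is to reproduce the standard recursive template for explicit superconcentrators with the optimized parameters due to Alon and Capalbo. The template, going back to Valiant, builds an $\ltsupconcsize$-superconcentrator $G$ on $\ltsupconcsize$ sources $\sourcesetstd$ and $\ltsupconcsize$ sinks $\sinksetstd$ from two ingredients: a direct perfect matching between $\sourcesetstd$ and $\sinksetstd$, and a ``core'' consisting of a bounded-degree bipartite graph $H_1$ from $\sourcesetstd$ to a first intermediate layer $M_1$ of size $\floor{\alpha \ltsupconcsize}$, a symmetric bipartite graph $H_2$ from a second intermediate layer $M_2$ of the same size into $\sinksetstd$, and, between $M_1$ and $M_2$, a recursively constructed $\floor{\alpha \ltsupconcsize}$-superconcentrator $G'$, for some fixed constant $0 < \alpha < 1$.

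The correctness argument---that any balanced subsets $\sourcesetstd' \subseteq \sourcesetstd$ and $\sinksetstd' \subseteq \sinksetstd$ of size~$\ell$ can be joined by $\ell$ vertex-disjoint paths---combines Hall's theorem with a vertex-expansion property of $H_1$ and $H_2$: every subset of at most $\ltsupconcsize/2$ vertices on one side has a neighborhood at least as large on the other. For $\ell \leq \ltsupconcsize/2$ the expansion yields matchings routing $\sourcesetstd'$ and $\sinksetstd'$ into same-size subsets of $M_1$ and $M_2$, which are then linked by disjoint paths inside $G'$ via the inductive superconcentrator property. For $\ell > \ltsupconcsize/2$, one first peels off $\ltsupconcsize - \ell$ source/sink pairs, routes them through the direct matching, and handles the remaining $2\ell - \ltsupconcsize \leq \ltsupconcsize/2$ pairs via the core argument.

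Size and depth both satisfy recurrences of the form $T(\ltsupconcsize) = T(\alpha \ltsupconcsize) + c(\ltsupconcsize)$, with $c(\ltsupconcsize) = \bigoh{\ltsupconcsize}$ for size and $c(\ltsupconcsize) = \bigoh{1}$ for depth, so the construction yields $\bigoh{\ltsupconcsize}$ edges and depth $\bigoh{\log \ltsupconcsize}$. The maximum indegree $\bigoh{1}$ is inherited from the constant-degree bipartite expanders used in $H_1$ and $H_2$. To make the construction \emph{explicit} one plugs in an explicit expander family, such as the Margulis--Gabber--Galil or a Ramanujan-graph construction, with enough vertex-expansion for the Hall's-theorem step to go through.

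The main technical challenge, and the substance of~\cite{AC03SmallerExplicitSuperconcentrators} beyond the classical template above, is to choose the parameter $\alpha$ and the specific expander family so that the leading constant in the edge count is as small as possible: for the superconcentrator application one really cares about the concrete number of edges (not just the asymptotic $\bigoh{\ltsupconcsize}$), and the expansion constant of the bipartite graphs has to be balanced against the geometric-series constant coming from the recursion. Once a sufficiently strong explicit expander is in hand, the recursive construction and its Hall-style correctness proof are fairly standard; the delicate part is squeezing the constants low enough to match the best known counts.
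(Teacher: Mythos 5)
The paper does not prove this result; it is quoted from Alon and Capalbo~\cite{AC03SmallerExplicitSuperconcentrators}, and your sketch of the Valiant-style recursive template is essentially the right background, including the correct observation that the real content of the reference is the choice of expanders and the optimization of constants rather than the template itself. There is, however, a genuine slip in your case analysis for $\ell > \ltsupconcsize/2$. The direct matching can only route a pair $(s_i, z_i)$ when \emph{both} $s_i \in \sourcesetstd'$ and $z_i \in \sinksetstd'$; by inclusion--exclusion the number of such indices is at least $2\ell - \ltsupconcsize$, and after routing these directly the number of remaining source/sink pairs that must be handled by the core is at most $\ell - (2\ell - \ltsupconcsize) = \ltsupconcsize - \ell < \ltsupconcsize/2$. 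You have these two quantities interchanged: you send $\ltsupconcsize - \ell$ pairs through the matching and $2\ell - \ltsupconcsize$ through the core, and your claimed bound $2\ell - \ltsupconcsize \leq \ltsupconcsize/2$ fails once $\ell > 3\ltsupconcsize/4$ (moreover, for $\ell < 2\ltsupconcsize/3$ you are not even guaranteed $\ltsupconcsize - \ell$ directly matchable pairs). Swapping the roles of the two quantities repairs the argument; the recurrences you give for edge count, depth, and indegree are correct.
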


It is easy to see that we can modify these superconcentrators so that the maximum indegree is $2$ by substituting each vertex with indegree $\delta > 2$ by a binary tree with $\delta$ leafs. Note that this only increase the size and the depth by constant factors.

\begin{corollary}\label{cor:linear-size-sc}
There are $\ltsupconcsize$-superconcentrators with $\bigoh{\ltsupconcsize}$ vertices, maximum indegree $2$ and depth $\bigoh{\log \ltsupconcsize}$.
\end{corollary}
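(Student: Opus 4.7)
The plan is to start from the explicit $\ltsupconcsize$-super\-concen\-trator construction of Alon and Capalbo stated in the theorem just above the corollary, and apply a local binary-tree substitution that forces the maximum indegree down to $2$ while preserving (up to constant factors) the other parameters.

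More concretely, let $G$ be such a super\-concen\-trator with $\bigoh{\ltsupconcsize}$ edges, depth $\bigoh{\log \ltsupconcsize}$, and maximum indegree some constant $c = \bigoh{1}$. I would build $G'$ from $G$ by performing the following local replacement at every non-source vertex $v$ of indegree $\delta_v \geq 3$: delete the $\delta_v$ incoming edges of $v$, introduce a complete rooted binary tree $T_v$ with $\delta_v$ leaves and root identified with $v$, and then reconnect the $\delta_v$ former predecessors of $v$ so that each becomes the unique predecessor of a distinct leaf of $T_v$ (the directions of the tree edges going from leaves toward the root). Sources, sinks, and all other structural features of $G$ are left untouched. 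Vertices of indegree $\leq 2$ are also left untouched.

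Next I would verify the three quantitative claims. Each tree $T_v$ contains $\bigoh{\delta_v}$ vertices, so the total vertex count of $G'$ is $\Setsize{\vertices{G}} + \sum_v \bigoh{\delta_v} = \Setsize{\vertices{G}} + \bigoh{\Setsize{\edges{G}}} = \bigoh{\ltsupconcsize}$. The maximum indegree is $2$ by construction. For the depth, each replacement lengthens a path through $v$ by at most $\ceiling{\log \delta_v} \leq \ceiling{\log c} = \bigoh{1}$, so a path of length $d$ in $G$ becomes a path of length $\bigoh{d}$ in $G'$; hence the depth of $G'$ is $\bigoh{\log \ltsupconcsize}$.

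The only nontrivial point, which I view as the main (but still modest) obstacle, is verifying that $G'$ remains an $\ltsupconcsize$-super\-concen\-trator. Given subsets $\sourcesetstd', \sinksetstd'$ of sources and sinks of $G'$ of equal size~$\ell$ (these are the same as the sources and sinks of $G$), take the $\ell$ vertex-disjoint paths $P_1,\ldots,P_\ell$ guaranteed by the super\-concen\-trator property of $G$. Each $P_i$ that traverses a replaced vertex $v$ enters on one of $v$'s old incoming edges and leaves on one of its outgoing edges; in $G'$ this single transit lifts to the unique leaf-to-root path inside $T_v$ followed by the same outgoing edge. Because $P_1,\ldots,P_\ell$ were vertex-disjoint in $G$, any replaced vertex $v$ is traversed by at most one of them, so the corresponding trees $T_v$ are used by at most one lifted path apiece, and outside the trees the disjointness is inherited from $G$. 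Thus the lifted paths are vertex-disjoint in $G'$, establishing the super\-concen\-trator property and completing the proof.
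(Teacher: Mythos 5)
Your proposal is correct and takes essentially the same approach as the paper, which simply notes (in the sentence preceding the corollary) that one can substitute each vertex of indegree $\delta > 2$ by a binary tree with $\delta$ leaves, increasing size and depth only by constant factors. Your careful verification that the superconcentrator property is preserved by lifting vertex-disjoint paths through the trees is exactly the detail the paper leaves implicit.
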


Given an $\ltsupconcsize$-superconcentrator  $G_{\ltsupconcsize}$, we define a stack of $r$ superconcentrators $G_{\ltsupconcsize}$ to be $r$ disjoint copies of 
$G_{\ltsupconcsize}$ where each sink of the $i$th copy is connected to a different source of the $i+1$st copy for $i\in[r-1]$. Since we want single-sink DAGs, we add a binary tree with $\ltsupconcsize$ leafs and depth $\ceiling{\log\ltsupconcsize}$, and connect each sink of the $r$th copy of $G_{\ltsupconcsize}$ to a different leaf of the tree.
Lengauer and Tarjan~\cite{LT82AsymptoticallyTightBounds} proved the following theorem for stacks of superconcentrators.

\begin{theorem}[\cite{LT82AsymptoticallyTightBounds}]
  \label{th:ssc-special-case}
  Let
  $\ltsscdag{\ltsupconcsize}{\ltncopies}$
 denote a stack of $\ltncopies$
  (explicitly constructible)
  linear-size  \nsupconctext{\ltsupconcsize}
  with bounded indegree
  and depth 
  $\log \ltsupconcsize$.
  Then the following holds:
  \begin{enumerate}
  \item
    The standard pebbling price of $\ltsscdag{\ltsupconcsize}{\ltncopies}$ is $\bigoh{\ltncopies \log \ltsupconcsize}$.

  \item \label{item:2-stack-of-sc}
    There is a linear-time standard pebbling strategy $\pebbling$ for
    $\ltsscdag{\ltsupconcsize}{\ltncopies}$
    with
    $\pebspace{\pebbling} = \bigoh{\ltsupconcsize}$.

  \item \label{item:3-stack-of-sc}
    If $\pebbling$ is a standard pebbling strategy for
    $\ltsscdag{\ltsupconcsize}{\ltncopies}$
    in space
    $\spacestd \leq \ltsupconcsize / 20$,
    then
    $
    \pebtime{\pebbling} \geq
    \ltsupconcsize \cdot
    \bigl(
    \frac{\ltncopies \ltsupconcsize}{64 \spacestd}
    \bigr)^{\ltncopies}
    $.
  \end{enumerate}
\end{theorem}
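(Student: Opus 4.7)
Of the three parts of Theorem~\ref{th:ssc-special-case}, (1) and (2) are routine while (3) is the deep content due to Lengauer and Tarjan.

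For part (1), the standard pebbling price bound $O(r\log m)$ follows from the classical fact that any DAG of depth $d$ and bounded in-degree admits a standard pebbling using $O(d)$ pebbles (the standard-pebbling analogue of Proposition~\ref{pr:rev-upper-bound-indegree-depth-tight}, proved by a depth-first recursion). The stack $\ltsscdag{m}{r}$ has depth $O(r\log m)$ since each of the $r$ superconcentrators has depth $O(\log m)$ and they are concatenated in series, with the final binary tree contributing only an additional $O(\log m)$.

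For part (2), the plan is to pebble the stack layer by layer: assuming inductively that I already hold $m$ pebbles on the sinks of the $(i-1)$-st superconcentrator (which are the sources of the $i$-th), I would pebble the $i$-th superconcentrator in topological order, placing a pebble on each vertex once and removing it as soon as all its successors inside the layer have been pebbled. Since each layer has $\Theta(m)$ vertices of bounded in-degree, this uses $O(m)$ moves and $O(m)$ pebbles per layer; iterating over all $r$ layers yields $\Theta(rm) = \Theta(n)$ moves in $O(m)$ space overall, and the final binary tree on top is handled analogously.

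For part (3), which is the heart of the result, I would follow the Lengauer--Tarjan recursive bottleneck argument. The defining property of a superconcentrator---any $k$ sources and any $k$ sinks are linked by $k$ vertex-disjoint paths---forces any pebbling that produces ``many'' sinks of a given layer to have moved a proportional amount of ``flow'' through the predecessor layers. When the pebble budget $s$ satisfies $s\leq m/20$, this means the top layer must effectively be re-traversed at least $\Omega(rm/s)$ times, and each traversal recursively forces the lower levels to be re-pebbled. Unrolling this recursion over the $r$ levels yields the multiplicative factor $(rm/64s)^r$ and the claimed $m\cdot(rm/64s)^r$ bound.

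The main obstacle is making the ``bottleneck/flow'' intuition in part~(3) precise. One clean way is to define a potential function tracking how ``far up'' the pebbling has progressed and to show that every drop in potential requires $\Omega(rm/s)$ pebble placements; an alternative is a direct adversary argument identifying critical time steps. Either way the constants $1/64$ and the threshold $m/20$ come from slack built into the superconcentrator-based disjoint-paths counting to absorb boundary effects.
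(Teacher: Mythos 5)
The paper does not actually prove \refth{th:ssc-special-case}: it is stated as an imported result, attributed to Lengauer and Tarjan~\cite{LT82AsymptoticallyTightBounds}, and used as a black box in the proof of \refth{th:robust-trade-off-LT}. So there is no ``paper's own proof'' to compare your sketch against, and a fully worked-out argument would need to reproduce a nontrivial chunk of the Lengauer--Tarjan analysis rather than anything already present in this paper.

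With that caveat, your high-level outline is sound where it is concrete. For \refitem{item:2-stack-of-sc}, the layer-by-layer topological pebbling you describe is essentially the standard argument (and in fact you can simplify further: since each layer has only $\bigoh{m}$ vertices, you can afford to pebble an entire layer without any intermediate removals, then clean up everything except the sinks before advancing). For part~(1), invoking the depth-based standard-pebbling upper bound $(\ell-1)d+1$ on a graph of depth $\bigoh{\ltncopies \log \ltsupconcsize}$ and bounded in-degree $\ell$ is correct and is precisely the kind of elementary argument that suffices; your appeal to Proposition~\ref{pr:rev-upper-bound-indegree-depth-tight} is only an analogy since that proposition is about reversible pebbling, but the corresponding classical fact for standard pebbling is exactly what you need.

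For \refitem{item:3-stack-of-sc}, however, your description has the right shape but glosses over the actual engine of the Lengauer--Tarjan proof, and the phrase ``the top layer must be re-traversed $\bigomega{\ltncopies\ltsupconcsize/\spacestd}$ times'' is not the form their argument takes. The precise lemma one needs is a statement about a single superconcentrator: if a pebbling starts and ends in configurations of size at most~$\spacestd$ and, during that interval, places pebbles anew on at least $\ltsupconcsize - \spacestd$ of the outputs, then it must also place pebbles anew on at least $\ltsupconcsize - 2\spacestd$ of the inputs. This follows from the disjoint-paths property: a configuration of size $\spacestd$ can intersect at most $\spacestd$ of the $\ltsupconcsize$ vertex-disjoint source-sink paths. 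The trade-off for the stack is then obtained by carefully partitioning the pebbling into time intervals, applying this lemma at each of the $\ltncopies$ levels, and keeping track of how the ``input placement'' obligations multiply as one descends the stack; the factor $\ltncopies$ inside the base $\ltncopies\ltsupconcsize/(64\spacestd)$ comes out of the interval bookkeeping, not from a per-level $\ltncopies$-fold re-traversal as your wording suggests. You do honestly flag this as the main obstacle, and your instinct that the constants $1/64$ and $1/20$ absorb slack is correct, but as written the sketch would need the single-layer bottleneck lemma stated precisely and the interval recursion set up carefully before it could be called a proof. Since the paper itself defers entirely to~\cite{LT82AsymptoticallyTightBounds} here, the honest conclusion is that your proposal is a reasonable roadmap to the cited argument but does not constitute an independent derivation.
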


With this result in hand we can now proceed to prove Theorem~\ref{th:robust-trade-off-LT}.

\begin{proof}[Proof of Theorem~\ref{th:robust-trade-off-LT}]
Let $G_n$ be a stack of $\log n$ linear-size $(n/\log n)$-superconcentrators as per Corollary~\ref{cor:linear-size-sc}. Note that $G_n$ has $\bigtheta{n}$ vertices, indegree $2$ and depth $\bigoh{\log^2 n}$. By Proposition~\ref{pr:rev-upper-bound-indegree-depth-tight} we have that $G_n$ can be reversibly pebbled with $\bigoh{\log^2 n}$ pebbles (proving item~\ref{item:1-robust-LT}). 

By item~\ref{item:2-stack-of-sc} in Theorem~\ref{th:ssc-special-case} and by choosing $\epsilon = 1/(\delta \log \log n)$ in Theorem~\ref{th:rev-simulation} we conclude that $G_n$ can be reversibly pebbled in simultaneous time $\Bigoh{n2^{1/\delta}}$ and space $\Bigoh{n/(\delta\log^{1-\delta} n)}$, from which item~\ref{item:2-robust-LT} follows.
Finally, item~\ref{item:3-robust-LT} in the theorem follows from item~\ref{item:3-stack-of-sc} in Theorem~\ref{th:ssc-special-case}.
\end{proof}

\subsection{Permutation Graphs}

Another family of graphs that has been studied in the context of standard pebbling trade-offs is that of permutation graphs. 

\begin{definition}
\label{def:permgraph}
Given a permutation $\sigma\in\mathfrak{S}([n])$, the \introduceterm{permutation graph} $G(\sigma)$ consists of two lines
$(x_1, \ldots, x_n)$ (the \emph{bottom line}) and  $(y_1, \ldots, y_n)$ (\emph{top line}) which are connected as follows:
for every $1 \le i \le n$, there is an edge from $x_i$ to $y_{\sigma(i)}$.
\end{definition}

Lengauer and Tarjan~\cite{LT82AsymptoticallyTightBounds} proved that permutation graphs present the following smooth trade-off when instantiated with the permutation that reverses the binary representation of the index $i$ (see Figure~\ref{fig:bit-reversal-1} for an illustration).

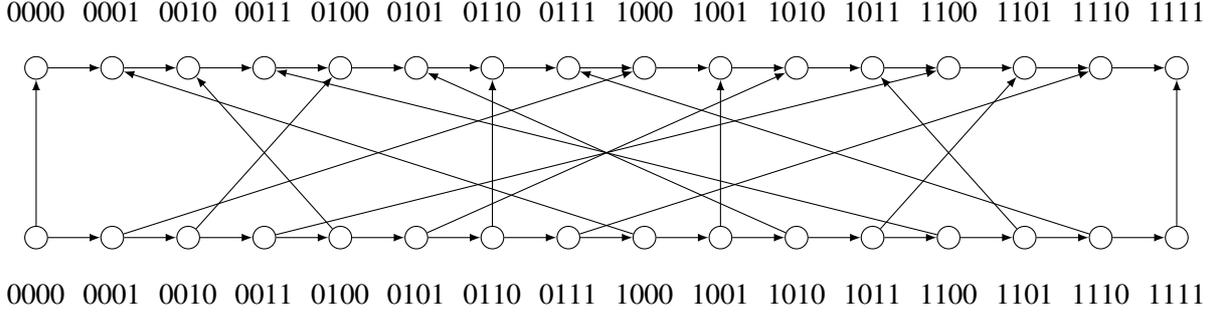
\begin{figure}[tp]
  \centering
  \begin{tikzpicture}[yscale=1.5]
    \foreach \i in {0,1,...,15}{
      \pgfmathsetbasenumberlength{4}
      \pgfmathdectobase\ilabel{\i}{2}
      \node (\i0) [draw,shape=circle,inner sep=3pt] at (\i,0) {};
      \node at (\i,-0.5) {\ilabel};
      \node (\i1) [draw,shape=circle,inner sep=3pt] at (\i,1.5) {};
      \node at (\i,2) {\ilabel};
    }
    \foreach \i in {0,1,...,15}{
      \pgfmathsetbasenumberlength{4}
      \pgfmathdectobase\ilabel{\i}{2}
      \pgfmathsetbasenumberlength{1}
      \reverseto\ilabel\irev
      \pgfmathbasetodec\j{\irev}{2}
      \draw [-latex] (\i0) to (\j1);
}
    \xdef\j{0}
    \foreach \i in {1,2,...,15}{
      \draw [-latex] (\j0.east) to (\i0.west);
      \draw [-latex] (\j1.east) to (\i1.west);
      \pgfmathsetmacro{\j}{\i}
      \xdef\j{\j}
    }
  \end{tikzpicture}
  \caption{A bit-reversal permutation graph.}
  \label{fig:bit-reversal-1}
\end{figure}

\begin{theorem}[\cite{LT82AsymptoticallyTightBounds}]\label{th:LT82bit-reversal}
Let $G$ be a bit-reversal permutation graph on $2n$ vertices.
For any $3\leq \spacestd \leq n$, there is a standard pebbling in space $\spacestd$ and time $\Bigoh{n^2 / \spacestd }$.
Moreover, any standard pebbling $\pebbling$ in space $\spacestd$ satisfies $\pebtime{\pebbling} = \Bigomega{n^2 / \spacestd }$.
\end{theorem}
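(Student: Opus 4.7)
My plan is to prove the two directions separately. For the upper bound I would place $s - c$ ``checkpoint'' pebbles at evenly spaced positions $x_{\lfloor kn/(s-c) \rfloor}$ for $k = 1, \ldots, s-c$ on the bottom line, where $c$ is a small constant, and use the remaining $\bigoh{1}$ pebbles as mobile pebbles to walk on the bottom line and to advance along the top line. The graph is pebbled by processing $y_1, y_2, \ldots, y_n$ in order: to pebble $y_j$, locate the nearest checkpoint to $x_{\sigma^{-1}(j)}$, walk a mobile pebble from that checkpoint to $x_{\sigma^{-1}(j)}$ (at most $n/s$ placements), then use it together with the existing pebble on $y_{j-1}$ to place $y_j$, and finally retreat and remove the walked pebbles. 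This gives total time $\bigoh{n \cdot n/s} = \bigoh{n^2/s}$ in space $s$.

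For the lower bound, the plan is a counting argument that exploits a ``spreading'' property of the bit-reversal permutation. Partition the bottom line into $2s$ contiguous blocks of $n/(2s)$ vertices each. At any time at most $s$ blocks can contain a pebble, so at least $s$ blocks are empty. The crucial combinatorial property of the bit-reversal $\sigma$ is that for any interval $I \subseteq [n]$ of length $k = \Omega(s)$, the preimages $\set{\sigma^{-1}(j) : j \in I}$ are distributed across nearly all $2s$ bottom blocks; this is a direct consequence of the fact that bit-reversal maps any dyadic block of size $2^a$ in the top to a uniformly-spaced arithmetic progression of stride $n/2^a$ in the bottom. Hence most values $\sigma^{-1}(j)$ for $j \in I$ must land in blocks that were empty at the start of the interval, forcing a fresh placement on the bottom.

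Using this, the argument charges each placement on the bottom line to the block containing it, and shows that processing any block of $s$ consecutive top-line values $y_{js+1}, \ldots, y_{(j+1)s}$ requires $\Omega(n)$ bottom-line placements, since $\Omega(s)$ of the required $x_{\sigma^{-1}(j')}$ lie in previously empty blocks and each such placement requires walking through $\Omega(n/s)$ intermediate bottom vertices from some already-pebbled vertex. Summing over all $n/s$ top-line windows yields the total bound $\Omega((n/s) \cdot n) = \Omega(n^2/s)$. A slicker formulation uses a potential equal to the total walking distance of the mobile bottom pebble: between the consecutive times $t_j, t_{j+1}$ when $y_j, y_{j+1}$ are freshly pebbled, some pebble must traverse the distance from a previously available bottom position to $x_{\sigma^{-1}(j+1)}$, and the spreading property forces this average distance to be $\Omega(n/s)$.

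The main obstacle is to formalize the spreading property of bit-reversal sharply enough to make the block-counting argument robust, and to handle the bookkeeping of ``old'' pebbles that may already sit near $x_{\sigma^{-1}(j+1)}$ so that not every top-line step costs the full block length $n/s$. The standard way to resolve this, following Lengauer and Tarjan, is to show that one may assume without loss of generality (up to constant factors in space) that the pebbling has a restricted ``canonical'' form on the bottom line, after which the lower bound reduces to a straightforward crossing count using the self-similar structure of bit-reversal on dyadic intervals.
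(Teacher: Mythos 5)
The paper cites this result from Lengauer--Tarjan without reproducing a proof, so there is no internal argument to compare against; I will assess your plan on its own merits. Your upper-bound plan (evenly spaced checkpoint pebbles on the bottom line, a constant number of mobile pebbles walked from the nearest checkpoint to each required $x_{\sigma^{-1}(j)}$) is correct, and note that it holds for an arbitrary permutation---bit-reversal plays no role in the upper bound. For the lower bound you correctly identify the dyadic spreading property of bit-reversal as the crux, but the per-step charge does not hold, and you flag the gap yourself: a required $x_{\sigma^{-1}(j')}$ may lie in a block that is currently empty while a pebble sits just across the block boundary, so the walk to reach it can cost $O(1)$, not $\Omega(n/s)$. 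With up to $s$ pebbles to spare, an adversarial strategy can cache pebbles near many upcoming targets simultaneously, and the ``total walking distance of the mobile bottom pebble'' potential has the same defect.

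The gap can be closed by an aggregate count rather than a per-step charge, and a canonical-form reduction is not needed. Fix any time window during which $w = 3s$ consecutive top-line vertices are pebbled for the first time, and let $E$ be the set of bottom-line positions that carry a pebble at some moment in that window. Because the bottom line is a directed path, a position can be newly added to $E$ only when its immediate predecessor is currently pebbled, so $E$ can only grow by extending intervals rightward from their anchors; hence $E$ is a union of at most $s+1$ intervals, each anchored at $x_1$ or at one of the at most $s$ bottom positions pebbled at the start of the window. The $w$ required preimages $\{\sigma^{-1}(j)\}$ all lie in $E$ and, by the dyadic spreading property, are pairwise at distance at least $n/(cw)$ for a fixed constant $c$. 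Covering $w$ such points by at most $s+1$ intervals forces $|E| \geq (w - s - 1)\cdot n/(cw) = \Omega(n)$, and since each bottom placement adds at most one new position to $E$ while $|E| \leq s$ initially, the window contains $\Omega(n)$ bottom placements. Summing over the $n/(3s)$ disjoint windows gives $\Omega(n^2/s)$ (for $s$ below a small constant times $n$; larger $s$ is handled by the trivial $\Omega(n)$ bound). This direct counting is, I believe, much closer to the actual Lengauer--Tarjan argument than the normal-form reduction you invoke in your last paragraph.
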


We show that these graphs also present a smooth reversible pebbling trade-off and, in particular, for $\spacestd = n^{\bigtheta{1}}$ and $\spacestd \leq n$, any reversible pebbling
$\pebbling$ in space $\spacestd$ satisfies $\pebtime{\pebbling_n} = \Bigomega{n^2 / \spacestd}$ and there are matching upper bounds.
To this end, we use the following proposition.

\begin{proposition}\label{pr:pebbling-the-line-in-linear-time}
For every natural number $k$, the line graph over $n$ vertices can be reversibly pebbled using ${2k \cdot n^{1/k}}$ pebbles in time ${2^k \cdot n}$.
\end{proposition}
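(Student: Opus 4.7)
The plan is to prove, by induction on $k$, a slightly strengthened claim: the line on $n$ vertices admits a \emph{persistent} reversible pebbling $\mathcal{P}(n,k)$ that starts empty, ends with a single pebble on the last vertex, runs in time at most $2^{k-1}n$, and never uses more than $2k\cdot n^{1/k}$ pebbles. The visiting reversible pebbling required by the proposition then follows by concatenating $\mathcal{P}(n,k)$ with its reversal, which doubles the time to $2^{k}n$ while preserving the space bound. The base case $k=1$ is handled by placing pebbles on vertices $1,2,\ldots,n$ in order, using $n\le 2n$ pebbles in time $n=2^{0}n$.

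For the inductive step, partition the line into $b=n^{1/k}$ consecutive blocks $B_{1},\ldots,B_{b}$ of size $m=n^{(k-1)/k}$ each (ignoring rounding, which only affects constants). Process the blocks in order: for each $i=1,\ldots,b$, first run the recursive strategy $\mathcal{P}(m,k-1)$ on $B_{i}$ to place a pebble on the last vertex of $B_{i}$; then, for $i\ge 2$, run the reversal of the $\mathcal{P}(m,k-1)$ previously applied to $B_{i-1}$ to remove the pebble on the last vertex of $B_{i-1}$. After iteration $b$ only vertex $n$ carries a pebble, which is what $\mathcal{P}(n,k)$ should produce.

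For the space bound, the key observation is that during iteration $i\ge 2$ at most one ``boundary'' pebble (on the last vertex of $B_{i-1}$ or of $B_{i}$) coexists with the workspace of the recursive sub-call, so the peak is at most $1+S(m,k-1)\le 1+2(k-1)n^{1/k}\le 2k\cdot n^{1/k}$. For the time bound, iteration $1$ costs $T(m,k-1)$ while each of the remaining $b-1$ iterations costs $2T(m,k-1)$; summing and using $T(m,k-1)\le 2^{k-2}m$ from the inductive hypothesis yields $(2b-1)\cdot 2^{k-2}m\le 2^{k-1}\cdot bm=2^{k-1}n$, as desired.

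The main subtlety to check is that the recursive sub-call on block $B_{i}$ for $i\ge 2$ is legal as a reversible pebbling, because the first vertex of $B_{i}$ is not a true source of the line: it has the last vertex of $B_{i-1}$ as its predecessor. By construction, however, that predecessor is pebbled throughout the sub-call (it was just placed during the previous iteration and is removed only after the sub-call on $B_{i}$ has completed), so every placement and removal performed inside $B_{i}$ satisfies the reversible pebbling rules. This is precisely the point at which the persistent-place strengthening of the inductive hypothesis is indispensable: the visiting hypothesis given in the proposition statement does not by itself guarantee that half of its execution leaves exactly the end-of-block pebble behind, which is what the recursion relies on.
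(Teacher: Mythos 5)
Your construction is in the spirit of Bennett's recursive checkpointing, which is exactly the ingredient the paper invokes as a black box (Theorem~\ref{th:rev-simulation}, applied with $\epsilon = k/\log n$). But there is a real gap in the legality of the cleanup steps. You verify carefully that the forward sub-call on $B_i$ is legal because the end pebble on $B_{i-1}$ is present; however, you do not perform the symmetric check for the reverse sub-call on $B_{i-1}$ that runs in the same iteration. For $i\geq 3$, reversing $\mathcal{P}(m,k-1)$ on $B_{i-1}$ must place and remove a pebble on the \emph{first} vertex of $B_{i-1}$, whose predecessor is the last vertex of $B_{i-2}$. That pebble was already cleared during iteration $i-1$, so every move touching the first vertex of $B_{i-1}$ violates the reversible pebbling rule. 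You only get away with it at $i=2$ because the first vertex of $B_1$ is a genuine source.

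The fix, which is exactly Bennett's scheme, is to postpone all cleanups: run the forward sub-calls on $B_1,\ldots,B_b$ first, and only then reverse on $B_{b-1},\ldots,B_1$ in that order, so the predecessor needed for each reversal is still in place. This increases the number of simultaneous boundary pebbles from $1$ to $b-1$, giving peak space $(b-1)+S(m,k-1) = (n^{1/k}-1)+2(k-1)n^{1/k}\leq (2k-1)n^{1/k}$, which still fits within $2kn^{1/k}$; the time recurrence $T(n,k)\leq(2b-1)T(m,k-1)$ is unchanged. With that repair your argument becomes a self-contained, unrolled proof of the Bennett--Levine--Sherman simulation specialized to the line, which is a nice alternative to the paper's two-line black-box invocation. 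One further nit: the persistent $k=1$ base case takes $2n-1$ moves rather than $n$, which costs you a factor of two in the final time --- but the paper's own invocation of Theorem~\ref{th:rev-simulation} incurs the same slack, so the proposition's stated constant is a little optimistic under either proof.
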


\begin{proof}
Observe that the line graph over $n$ can be pebbled (in the standard way) using $2$ pebbles in time ${2n}$.
The proposition follows now by applying Theorem~\ref{th:rev-simulation} with $\epsilon = k / \log(n)$. 
\end{proof}

Using Proposition~\ref{pr:pebbling-the-line-in-linear-time}, we obtain the following result.

\begin{theorem}\label{th:constant-space-trade-offs}
  Let $G_n$ be a bit-reversal permutation graph on $2n$ vertices.
  Then $G_n$ satisfies the following properties:
\begin{enumerate}
\item The reversible pebbling price of $G_n$ is at most
    $2\log n + 2$. \label{item:1-permutation}
  \item If $s$    
    satisfies $4\log n \leq s \leq 2n$ and $k$ is such
    that $s= 4k n^{1/k}$, then there is a reversible strategy
    in simultaneous space $s$ and 
    time $\Bigoh{k2^{2k}\cdot n^2/s}$.    \label{item:2-permutation} 
  \item 
    Any  standard pebbling $\pebbling_n$ of $G_n$ 
    satisfies $\pebtime{\pebbling_n} = \Bigomega{n^2/ \pebspace{\pebbling_n}}$.    \label{item:3-permutation}
  \end{enumerate}
  
\end{theorem}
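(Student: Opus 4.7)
The plan is to prove the three items in turn, with items 1 and 2 relying on the reversible line-pebbling strategies already developed (Proposition~\ref{pr:rev-peb-price-path} and Proposition~\ref{pr:pebbling-the-line-in-linear-time}) and item 3 reducing directly to Theorem~\ref{th:LT82bit-reversal}. The common skeleton for both constructive items is an \emph{interposition} strategy: reversibly pebble the top line $y_1,\dots,y_n$, and each time the strategy wants to place or remove a pebble on $y_j$, bracket this move by first visiting $x_{\sigma^{-1}(j)}$ on the bottom line and then undoing that visit.

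For item~1, I would run a reversible visiting pebbling of the top line using Proposition~\ref{pr:rev-peb-price-path}, which costs at most $\ceiling{\log(n+1)} \le \log n + 1$ pebbles. Since placing or removing a pebble on $y_j$ requires not only the line-predecessor $y_{j-1}$ but also the bit-reversal predecessor $x_{\sigma^{-1}(j)}$, I would interpose, just before each such top-line move, a visiting reversible pebbling of the bottom-line prefix up to $x_{\sigma^{-1}(j)}$, perform the top-line move, and then reverse the bottom-line pebbling. The bottom-line subroutine also costs at most $\log n + 1$ pebbles by Proposition~\ref{pr:rev-peb-price-path}, so the peak simultaneous space is at most $2(\log n + 1) = 2\log n + 2$.

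Item~2 follows the same interposition skeleton but replaces Proposition~\ref{pr:rev-peb-price-path} by Proposition~\ref{pr:pebbling-the-line-in-linear-time} with parameter $k$ on both lines. The top-line pebbling then uses $2 k n^{1/k}$ pebbles and $\Bigoh{2^k n}$ placement/removal steps, and each such step is bracketed by a bottom-line visit that likewise uses $2 k n^{1/k}$ pebbles, giving peak space $4 k n^{1/k} = s$. The delicate point is the time: a naive accounting charges each of the $\Bigoh{2^k n}$ top-line steps the full $\Bigoh{2^k n}$ cost of a bottom-line visit, giving $\Bigoh{4^k n^2}$, which exceeds the target by a factor of $n^{1/k}$. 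To recover this factor I would partition the bottom line into $n^{1/k}$ blocks of length $n^{1-1/k}$ separated by persistent anchor pebbles (absorbing the $n^{1/k}$ anchors into the $2 k n^{1/k}$ bottom-line budget, using the persistent version of the line strategy) and apply Proposition~\ref{pr:pebbling-the-line-in-linear-time} only inside the single block containing $x_{\sigma^{-1}(j)}$, at cost $\Bigoh{2^k n^{1-1/k}}$ per visit. Multiplying by the $\Bigoh{2^k n}$ top-line steps yields the claimed $\Bigoh{2^{2k} n^{2 - 1/k}} = \Bigoh{k 2^{2k} n^2 / s}$.

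Item~3 is immediate: Theorem~\ref{th:LT82bit-reversal} already asserts that for the bit-reversal permutation graph every standard pebbling $\pebbling$ satisfies $\pebtime{\pebbling} = \Bigomega{n^2 / \pebspace{\pebbling}}$, which is exactly the statement of item~3 (and applies to reversible pebblings a fortiori). The main obstacle is therefore the amortized time analysis in item~2: the pure interposition idea overcounts by a factor of $n^{1/k}$, and closing this gap requires combining the recursive structure of Proposition~\ref{pr:pebbling-the-line-in-linear-time} with the block/anchor trick from the classical Lengauer--Tarjan upper bound, together with verifying that the anchor pebbles on the bottom line can be set up, maintained across top-line steps, and ultimately removed within the stated space budget in a reversible manner.
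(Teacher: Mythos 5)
Your proposal is correct and follows essentially the same approach as the paper: the same interposition strategy for item~1 (reversibly pebble the top line via Proposition~\ref{pr:rev-peb-price-path}, interposing a reversible visit-and-undo of the bottom-line prefix for each top-line move), the same anchor/block decomposition for item~2 (fixed pebbles spaced $n^{1-1/k}$ apart, pebbling within a single block via Proposition~\ref{pr:pebbling-the-line-in-linear-time} to bring the per-visit cost down to $\bigoh{2^k n^{1-1/k}}$), and the same direct appeal to Theorem~\ref{th:LT82bit-reversal} for item~3. You also correctly identified the $n^{1/k}$ overcount in the naive interposition accounting, which is precisely what the paper's two-phase (anchor-then-simulate) structure is designed to fix.
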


\begin{proof}
The upper bounds (item~\ref{item:1-permutation} and item~\ref{item:2-permutation}) hold for any permutation graph.

For item~\ref{item:1-permutation}, we can simulate a reversible pebbling of the top line that uses space at most $\log n + 1$ (as per Proposition~\ref{pr:rev-peb-price-path}), and every time we need a pebble on a vertex $v$ of the bottom line in order to place or remove a pebble on the top line, we reversibly pebble the bottom line until $v$ is pebbled (can be done with $\log n + 1$ pebbles), make the move on the top line, and then unpebble the bottom line.

To obtain item~\ref{item:2-permutation}, we consider a two stage strategy. In the first phase, we place $n^{1/k}$ pebbles spaced equally apart on the bottom line. We refer to these pebbles as \emph{fixed} pebbles, since they will remain on the graph until the sink is pebbled. In the second phase, we simulate a reversible pebbling on the top line with $2kn^{1/k}$ pebbles and every time we need a pebble on a vertex $v$ on the bottom line to make a move on the top line, we reversibly pebble $v$ (with $2(k-1)n^{1/k}$ pebbles) from the nearest fixed pebble, make the move on the top line, and then unpebble the segment on the bottom line.

All that is left to show is that this can be done within the space budget of $4kn^{1/k}$ in time $\bigoh{2^{2k}\cdot n^2/s}$. For the first phase, we reversibly pebble $n^{1/k}$ segments of length $m = n^{1-1/k}$. By Proposition~\ref{pr:pebbling-the-line-in-linear-time}, each of the segments can be reversibly pebbled using $2(k-1)n^{1/k} = 2(k-1)m^{k-1}$ pebbles in time $2^{k-1} n^{1-1/k}$. Since every segment must be pebbled and then unpebbled, the total time for the first phase is $2 \cdot 2^{k-1} n^{1-1/k} \cdot n^{1/k} = 2^k n$, and the total number of pebbles used is less than $2kn^{1/k}$: $n^{1/k}$ for the fixed pebbles and $2(k-1)n^{1/k}$ for pebbling each segment.

We turn to analyze the second phase. By Proposition~\ref{pr:pebbling-the-line-in-linear-time}, the top line can be reversibly pebbled in simultaneous space $2kn^{1/k}$ and time $2^{k} n$. For each move in the top line, we need to pebble and unpebble a segment of length at most $n^{1-1/k}$. As argued before, this can be done in simultaneous space $2(k-1)n^{1/k}$ and time $2 \cdot 2^{k-1} n^{1-1/k}$. Therefore, at any point in the pebbling strategy there are at most $2kn^{1/k}$ pebbles on the bottom line and at most $2kn^{1/k}$ pebbles on the top line, and the total time of the pebbling is at most $2^k n+ 2^{2k} n^{2-1/k} \leq {4k2^{2k} n^2/s}$.

Finally, item~\ref{item:3-permutation} follows from the standard pebbling lower bound in Theorem~\ref{th:LT82bit-reversal}. 
\end{proof}

From Theorem~\ref{th:constant-space-trade-offs} we obtain the following corollary that, together with Theorem~\ref{thm:pebbling-nullstellensatz-correspondence}, implies Theorem~\ref{th:NS-permutation}.

\begin{corollary}
Any reversible 
    pebbling strategy $\pebbling_n$ for $G_n$ 
    that optimizes
    time given space constraint 
    $n^{\bigtheta{1}}$ (and less than $n$) 
    exhibits a trade-off
    $
    \pebtime{\pebbling_n} = 
    \Bigtheta{n^2 / \pebspace{\pebbling_n}}
    $. \end{corollary}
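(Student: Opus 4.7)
The plan is to derive the corollary directly from Theorem~\ref{th:constant-space-trade-offs} by combining its matching upper and lower bounds, after verifying that both bounds can indeed be evaluated in the regime $\pebspace{\pebbling_n} = n^{\bigtheta{1}}$.

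First I would handle the lower bound, which is essentially free. Any reversible pebbling is, by definition, also a (standard) pebbling in the same time and space, so item~\ref{item:3-permutation} of Theorem~\ref{th:constant-space-trade-offs} applies verbatim and yields
\begin{equation*}
\pebtime{\pebbling_n} = \Bigomega{n^2 / \pebspace{\pebbling_n}}
\end{equation*}
for every reversible pebbling $\pebbling_n$ of $G_n$, regardless of what the space happens to be. This takes care of the $\Omega$ direction of the claimed $\Theta$-bound.

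For the upper bound I would invoke item~\ref{item:2-permutation} of Theorem~\ref{th:constant-space-trade-offs}. The hypothesis is that the space constraint is $s = n^{\bigtheta{1}}$, i.e., $s = n^{c}$ for some constant $0 < c < 1$ (and $s < n$). I set $k := \lceil 1/c \rceil$, which is a positive integer constant independent of $n$, so that $n^{1/k} \le n^c = s$. After adjusting constants, one can choose such a $k$ so that the regime $s = 4k\cdot n^{1/k}$ of item~\ref{item:2-permutation} is reached (possibly by slightly shrinking the effective space budget by a constant factor, which does not affect the final $\Theta$-asymptotics). Since $k$ is a constant, the prefactor $k \cdot 2^{2k}$ is also a constant, and the theorem gives a reversible pebbling of simultaneous space at most $s$ and time
\begin{equation*}
\Bigoh{k \cdot 2^{2k} \cdot n^2 / s} = \Bigoh{n^2 / s} \eqperiod
\end{equation*}

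The only point that requires a small amount of care is the translation between the free parameter $k$ in item~\ref{item:2-permutation} and the given space constraint $s = n^{\bigtheta{1}}$, i.e., checking that a constant $k$ can always be chosen to realize the desired polynomial space. Once this is verified, the matching $\bigoh{n^2/s}$ upper bound and the $\bigomega{n^2/\pebspace{\pebbling_n}}$ lower bound combine to yield precisely the stated trade-off $\pebtime{\pebbling_n} = \Bigtheta{n^2/\pebspace{\pebbling_n}}$ for any time-optimal reversible strategy operating under the space constraint $n^{\bigtheta{1}}$. I do not expect any real obstacle here; everything has been set up by Theorem~\ref{th:constant-space-trade-offs}, and this corollary is simply the assembly step.
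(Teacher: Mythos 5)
Your high-level assembly---item~\ref{item:3-permutation} of Theorem~\ref{th:constant-space-trade-offs} for the lower bound and item~\ref{item:2-permutation} for the upper bound---is the right and intended route, and the lower-bound half is correct: a reversible pebbling is in particular a legal standard pebbling (reversible removals are a restriction of arbitrary removals), so the $\Bigomega{n^2/\pebspace{\pebbling_n}}$ bound transfers verbatim.

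The gap is in the upper bound. Fixing the integer $k := \lceil 1/c \rceil$ makes the strategy from item~\ref{item:2-permutation} run in space $4kn^{1/k}$, and when $1/c$ is not an integer this is \emph{not} a constant factor below $n^c$; it is smaller by the polynomial factor $n^{c-1/k}$. The time that item~\ref{item:2-permutation} actually guarantees is $\Bigoh{k 2^{2k} \cdot n^2/(4kn^{1/k})} = \Bigoh{2^{2k} n^{2-1/k}}$, which for, say, $c=2/5$ and $k=3$ is $\Bigoh{n^{5/3}}$, polynomially larger than the target $\Bigoh{n^2/s}=\Bigoh{n^{8/5}}$. Writing ``$\Bigoh{k\cdot 2^{2k}\cdot n^2/s}$'' silently substitutes the nominal budget $s=n^c$ for the strategy's actual space $4kn^{1/k}$, and that is exactly where the argument breaks; the phrase about ``slightly shrinking the effective space budget by a constant factor'' is what hides it. (Even when $1/c$ is an integer, $k=1/c$ gives space $(4/c)n^c$, which exceeds $n^c$, so that case also needs care.) The repair is to note that the restriction to \emph{natural} $k$ in Proposition~\ref{pr:pebbling-the-line-in-linear-time}, and hence in item~\ref{item:2-permutation}, is cosmetic: the proof invokes Theorem~\ref{th:rev-simulation} with $\epsilon = k/\log n$, which is valid for every real $\epsilon>0$. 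Taking $k$ real and tuned so that $4kn^{1/k}=\Theta(n^c)$ forces $k \to 1/c$ as $n\to\infty$; since $c$ is a fixed constant, $k$ and the prefactor $k2^{2k}$ are $\Bigoh{1}$, and one gets a strategy with space $\Theta(n^c)$ within the budget and time $\Bigoh{n^2/s}$, which matches the lower bound and closes the gap.
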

  
}

\section{Concluding Remarks}
\label{sec:conclusion}

In this paper we prove that size and degree of Nullstellensatz refutations 
in any field of pebbling
formulas are exactly captured by 
time and space of the reversible pebble
game on the underlying graph.
This allows us to prove a number of strong size-degree trade-offs for
Nullstellensatz. To the best of our understanding no such results have
been known previously.

The most obvious, and also most interesting, open question is whether
there are also size-degree trade-offs for the stronger polynomial
calculus proof system. 
Such trade-offs cannot be exhibited by the pebbling formulas
considered in this work, since such formulas have small-size
low-degree polynomial calculus refutations, but the formulas
exhibiting size-width trade-offs for
resolution~\cite{Thapen16Trade-off} appear to be natural candidates.\footnote{Such a result was very recently reported 
  in~\cite{LNSS20Tradeoffs}.}

Another interesting question is whether the tight relation between
Nullstellensatz and reversible pebbling could make it possible to
prove even sharper trade-offs for size versus degree in
Nullstellensatz, where just a small constant drop in the degree would
lead to an exponential blow-up in size.
Such results for pebbling time versus space are known for the standard
pebble game, e.g., in~\cite{GLT80PebblingProblemComplete}.
It is conceivable that a  similar idea could be applied to the
reversible pebbling reductions in~\cite{CLNV15Hardness}, but it is not
obvious whether just adding a small amount of space makes it possible
to carry out the reversible pebbling time-efficiently enough.

Finally, it can be noted that our results crucially depend on that we
are in a setting with variables only for positive literals.  For
polynomial calculus it is quite common to consider the stronger
setting with ``twin variables'' for negated literals (as in the
generalization of polynomial calculus in~\cite{CEI96Groebner} to
\introduceterm{polynomial calculus resolution}
in~\cite{ABRW02SpaceComplexity}). 
It would be nice to generalize our size-degree trade-offs for
Nullstellensatz to this setting,
but it seems that some additional ideas are needed to make this work.
 
\section*{Acknowledgements}

We are grateful for many interesting discussions about matters 
pebbling-related (and not-so-pebbling-related)
with
Arkadev Chattopadhyay,
Toniann Pitassi,
and
Marc Vinyals.

This work was mostly carried out while the authors were visiting the
Simons Institute for the Theory of Computing in association with the
DIMACS/Simons Collaboration on Lower Bounds in Computational
Complexity, which is conducted with support from the National Science
Foundation.

Or Meir was supported by the Israel Science Foundation (grant No. 1445/16).
Robert Robere was  supported by NSERC, and also conducted part of this work at
DIMACS with support from the National Science Foundation under grant number CCF-1445755.
Susanna F.~de Rezende and Jakob Nordström were supported by the 
\emph{Knut and Alice Wallenberg} grant
KAW 2016.0066
\emph{Approximation and Proof Complexity}.
Jakob Nordström was also supported by the 
Swedish \mbox{Research} Council grants
\mbox{621-2012-5645}
and
\mbox{2016-00782}.

 \newcommand{\etalchar}[1]{$^{#1}$}

\end{document}